\newcounter{rowcount}
\algnewcommand\algorithmicforeach{\textbf{for each}}
\newcommand{\pseudo}[1]{\texttt{#1}}
\newcommand{\true}{\textsf{true}}
\newcommand{\T}{{\mathcal{T}}}
\renewcommand{\true}{\mathit{true}\,}
\newcommand{\stam}[1]{}
\newcommand{\agl}{\textsc{AG}_{\lstar}}
\pgfplotsset{compat=1.14} 
\pgfplotsset{width=7cm,height=5.2cm}
\definecolor{apricot}{rgb}{0.98, 0.81, 0.69}
\definecolor{bondiblue}{rgb}{0.0, 0.58, 0.71}
\definecolor{ghostwhite}{rgb}{0.97, 0.97, 1.0}
\definecolor{lightcarminepink}{rgb}{0.9, 0.4, 0.38}
\newcommand{\algo}[1]{{\ensuremath{{\textbf{#1}}}}}
\newcommand{\lstar}{{\ensuremath{\algo{L}^*}}}
\newcommand{\aut}[1]{\ensuremath{\mathcal{#1}}}
\newcommand{\dom}[1]{\ensuremath{\mathbb{#1}}}
\algrenewcommand\algorithmicrequire{\textbf{Input:}}
\algrenewcommand\algorithmicensure{\textbf{Output:}}
\newcommand{\commentout}[1]{}
\newcommand{\query}[1]{\textsc{#1}}
\newcommand{\mq}{\query{mq}}
\newcommand{\eq}{\query{eq}}
\newcommand{\mqor}{\query{mq-oracle}}
\newcommand{\eqor}{\query{eq-oracle}}
\newcommand{\repair}{\query{repair}}
\newcommand{\ansrep}[1]{\pseudo{repair}}
\newcommand{\ansyes}[1]{\pseudo{yes}}
\newcommand{\ansno}[1]{\pseudo{no}}
\begin{document}

% The correct dates will be entered by the editor

%%%

\title{Assume, Guarantee or Repair -- A Regular Framework for Non Regular Properties\thanks{This research was partially supported by the ISRAEL SCIENCE FOUNDATION (ISF) grant No. 346/17.
} 
}

\author{Hadar Frenkel$^1$, Orna Grumberg$^2$, Corina S. P\u{a}s\u{a}reanu$^{3}$ and Sarai Sheinvald$^4$}

\authorrunning{H. Frenkel et al.}

\institute{
 CISPA Helmholtz Center for Information Security, Saarbr{\"u}cken, Germany \and
Department of Computer Science, The Technion, Israel
\and Carnegie Mellon University and NASA Ames Research Center, CA, USA
\and Department of Software Engineering, ORT Braude College, Israel}
	
\date{}
\maketitle

%%%

\begin{abstract}
%Insert your abstract here. Include keywords, PACS and mathematical subject classification numbers as needed.
We present Assume-Guarantee-Repair (AGR) -- a novel framework which  verifies that a program satisfies a set of properties and also {\em repairs} the program in case the verification fails. 
We consider {\em communicating programs} -- these are simple C-like programs, extended with synchronous actions over communication  channels. 
Our method, which consists of a learning-based approach to assume-guarantee reasoning, performs verification and repair simultaneously: in every iteration, AGR  either makes another step towards proving that the (current) system satisfies the required properties, or alters the system in a way that brings it closer to satisfying the properties.  To handle infinite-state systems we build finite abstractions, 
for which we check the satisfaction of complex properties that contain first-order constraints, using both syntactic and semantic-aware methods. 
We implemented AGR and evaluated it on various communication protocols. Our experiments present compact proofs of correctness and quick repairs.

\keywords{Compositional verification
 \and Repair \and Automata learning \and Assume-guarantee reasoning \and Concurrent systems}
\end{abstract}

\section{Introduction}\label{Intro}
Verification of large-scale systems is a main challenge in the field of formal verification. Often, the verification process of such systems does not scale well.  
{\em Compositional verification}
%\cite{} 
aims to address this challenge by breaking up the verification of a large system into the verification of its smaller components which can be checked separately. The results of the verification can be composed back to conclude 
%of a system separately, and from the correctness of the individual components, to conclude 
the correctness of the entire system. 
This, however, is not always possible, since the correctness of a component often depends on the behavior of its environment. 

The Assume-Guarantee (AG) style compositional verification~\cite{DBLP:journals/tse/MisraC81,Pnueli85} suggests a solution to this problem.
The simplest AG rule checks if a system composed of $M_1$ and $M_2$ satisfies a property $P$ by checking that $M_1$ under assumption $A$ satisfies $P$ and that any system containing $M_2$ as a component satisfies $A$. 
%Several frameworks have been proposed to support this style of reasoning. Finding a suitable assumption $A$  is then a common challenge in such frameworks.
  
In this work, we present \emph{Assume-Guarantee-Repair} (AGR) -- a fully automated framework which applies the AG rule, and while seeking a suitable assumption $A$, incrementally repairs the given program in case the verification fails.
Our framework is inspired by~\cite{DBLP:journals/fmsd/PasareanuGBCB08}, which presented a learning-based method to finding a suitable assumption $A$, using the \lstar~\cite{DBLP:journals/iandc/Angluin87} algorithm for learning regular languages. However, unlike previous work, AGR not only performs verification but also repair.

Our AGR framework handles \emph{communicating programs}, which are commonly used for modeling concurrent systems. These are infinite-state C-like programs, extended with the ability to synchronously read and write messages over communication channels. We model such programs as finite-state automata over an \emph{action alphabet}, which reflects the program statements. 
%They also enable us to verify regular properties, which are much more expressive than safety. Every regular language can specify acceptable finite program behaviours.  
The automata representation is similar in nature to that of control-flow graphs. %~\cite{CFG}. 
Its advantage, however, is in the ability to exploit an automata-learning algorithm 
such as $\lstar$. The accepting states in the automaton representation model points of interest in the program, to which the specification can relate.

We have implemented a tool for AGR and evaluated it on examples modeling communication protocols of various sizes and with various types of errors. Our experiments show that for most examples, AGR converges and finds a repair after a few (2-5) iterations of verify-repair. Moreover, our tool generates assumptions that are significantly smaller then the (possibly repaired) $M_2$, thus constructing a compact and efficient proof of correctness.  
%To demonstrate the effectiveness of AGR, we have applied it on several examples using SMT solvers, learning, and model checking tools.

\subsubsection*{Contributions}
To summarize, the main contributions of this paper are:
\begin{enumerate}
\item A learning-based Assume-Guarantee algorithm for infinite-state communicating programs, which manages overcoming the difficulties such programs present. In particular, our algorithm overcomes the inherent irregularity of the first-order constraints in these programs, and offers syntactic solutions to the semantic problems they impose. 
\item An Assume-Guarantee-Repair algorithm, in which the Assume-Guarantee and the Repair procedures intertwine to produce a repaired program which, due to our construction, maintains many of the ``good'' behaviors of the original program.  Moreover, in case the original program satisfies the property, our algorithm is guaranteed to terminate and return this conclusion. 
\item An incremental learning algorithm that uses query results from previous iterations in learning a new language with a richer alphabet.
\item A novel use of abduction to repair communicating programs over first order constraints.
\item An implementation of our algorithm, demonstrating the effectiveness of our framework. 
\end{enumerate}
	
\paragraph{Contribution over conference version.}
Preliminary results of this work were published in~\cite{DBLP:conf/tacas/FrenkelGPS20}. This paper extends the results of~\cite{DBLP:conf/tacas/FrenkelGPS20} by the following new contributions. 
\begin{itemize}
    \item A formal definition of the weakest assumption for communicating programs, and a study of special cases for which the weakest assumption is regular (Section~\ref{AGR:AR_rule_reg}). 
    \item A full description of the implementation of the teacher in the context of communicating programs (Section~\ref{sec:queries} and Alg.~\ref{alg:queries}). 
    \item Convergence of the syntactic repair -- characterizing cases in which this repair does not terminate (Section~\ref{sec:syntactic_convg}). 
   \item A formal discussion regarding the soundness of the AG-rule for communicating programs, completeness and termination (Section~\ref{Sec:termination}).
   \item Full proofs and multiple additional examples throughout the paper. 
   \item A detailed discussion regarding future work and possible extensions (Section~\ref{sec:future}).
\end{itemize}

\paragraph{Paper organization.} The rest of the paper is organized as follows. In the next section, we give a high-level overview of AGR, highlighting the salient features of our approach. Section~\ref{sec:related} describes related work. The following three sections set up the background necessary for understanding our approach.  Section~\ref{sec:prelim} gives preliminary definitions, section~\ref{sec:communicating} describes communicating programs and their properties (expressed as regular languages), and section~\ref{sec:traces} proves properties of traces that we use later when proving soundness and completeness of AGR. Section~\ref{sec:rule} describes and analyzes the assume-guarantee rule used by AGR, while section~\ref{sec:AGR} describes in detail the AGR approach. Finally, section~\ref{sec:experiments} describes our experimental evaluation and section~\ref{sec:conclusion} concludes the paper.
\section{Overview}
%\subsubsection*{Example}
%\vspace{-0.6cm}
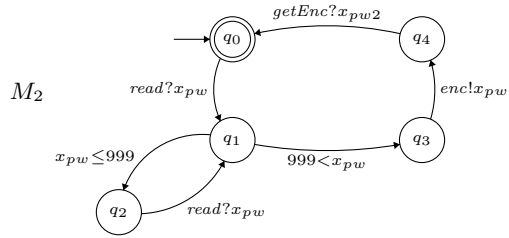
\begin{figure}
		\centering
		\begin{tabular}{c}%{m{7cm}m{7cm}}
			%	\begin{minipage}{.2\textwidth}
			{{\begin{tabular}{ll}
					{$1$:} & \pseudo{while(true)  }\\
                    {$2$:} & $\qquad$ \pseudo{password:=readInput; }\\
					{$3$}  & $\qquad$ \pseudo{while(password$\leq 999$)  }\\
					{$4$:} & $\qquad\qquad$ \pseudo{password:=readInput;}\\
					{$5$:} & $\qquad$ \pseudo{password2:=encrypt(password);}\\
					%{$6$:} & $\qquad$ \pseudo{return password2;}\\
                    %{$6$}  & $\qquad$ \pseudo{else}\\
				\end{tabular}}}
			
			%\hspace{0.6cm}
            \\
            \\
            \\
			%		\end{minipage}
			%		\begin{minipage}{.2\textwidth}
	
\begin{tikzpicture}[scale=0.1]
\tikzstyle{every node}+=[inner sep=0pt]
\draw [black] (23.2,-18.8) circle (3);
\draw (23.2,-18.8) node {${\scriptstyle q_0}$};
\draw [black] (23.2,-18.8) circle (2.4);
\draw [black] (23.2,-32.2) circle (3);
\draw (23.2,-32.2) node {${\scriptstyle q_1}$};
\draw [black] (8.1,-41.8) circle (3);
\draw (8.1,-41.8) node {${\scriptstyle q_2}$};
\draw (-4.1,-25.8) node {$M_2$};
\draw [black] (48.4,-32.2) circle (3);
\draw (48.4,-32.2) node {${\scriptstyle q_3}$};
\draw [black] (48.4,-18.8) circle (3);
\draw (48.4,-18.8) node {${\scriptstyle q_4}$};
\draw [black] (15.4,-18.8) -- (20.2,-18.8);
\fill [black] (20.2,-18.8) -- (19.4,-18.3) -- (19.4,-19.3);
\draw [black] (21.594,-29.679) arc (-155.90802:-204.09198:10.236);
\fill [black] (21.59,-29.68) -- (21.72,-28.74) -- (20.81,-29.15);
\draw (20.2,-25.5) node [left] {${\scriptstyle \mathit{read}?x_{pw}}$};
\draw [black] (8.708,-38.871) arc (160.55127:84.342:11.121);
\fill [black] (8.71,-38.87) -- (9.45,-38.28) -- (8.5,-37.95);
\draw (4.9,-35.69) node [above] {${\scriptstyle x_{pw}\leq 999}$};
\draw [black] (22.118,-34.991) arc (-27.72715:-87.37958:13.141);
\fill [black] (22.12,-34.99) -- (21.3,-35.47) -- (22.19,-35.93);
\draw (22.42,-40.47) node [below] {${\scriptstyle \mathit{read}?x_{pw}}$};
\draw [black] (45.458,-32.784) arc (-80.27359:-99.72641:57.166);
\fill [black] (45.46,-32.78) -- (44.58,-32.43) -- (44.75,-33.41);
\draw (35.8,-34.11) node [below] {${\scriptstyle 999<x_{pw}}$};
\draw [black] (49.539,-21.569) arc (16.24661:-16.24661:14.05);
\fill [black] (49.54,-21.57) -- (49.28,-22.48) -- (50.24,-22.2);
\draw (50.6,-25.5) node [right] {${\scriptstyle \mathit{enc}!x_{pw}}$};
\draw [black] (26.105,-18.055) arc (102.46919:77.53081:44.9);
\fill [black] (26.11,-18.05) -- (26.99,-18.37) -- (26.78,-17.39);
\draw (35.8,-16.5) node [above] {${\scriptstyle \mathit{getEnc}?x_{pw2}}$};
\end{tikzpicture}
		\end{tabular}
		\caption{Modeling a communicating program ($M_2$) as an automaton}\label{model}%\vspace{-0.6cm}
\end{figure}
We give a high level overview of AGR via an example.
%\subsubsection{Overview}
Figure~\ref{model} presents the code of a communicating program (upper part) and its corresponding automaton $M_2$ (lower part). The automaton alphabet consists of constraints (e.g. $x_{pw} \leq 999$), assignment actions (e.g. $y_{pw} := 2\cdot y_{pw}$ in $M_1$ of Figure~\ref{fig:two_systems_spec}), and communication actions (e.g. $enc!x_{pw}$ sends the value of variable $x_{pw}$ over channel $enc$, and $getEnc?x_{pw2}$ reads a value to $x_{pw2}$ on channel~$getEnc$).

The specification $P$\footnote{Throughout the paper we use {\em property} and {\em specification} interchangeably.} is modeled as an automaton that does not contain assignment actions. It may contain communication actions in order to specify behavioral requirements, as well as constraints over the variables of both system components, that express requirements on their values in various points in the runs.

Consider, for example, the program $M_1$ and the specification $P$ seen in Figure~\ref{fig:two_systems_spec}, and the program $M_2$ of Figure~\ref{model}. 
$M_2$ reads a password on channel $read$ to the variable $x_{pw}$, and once it is long enough (has at least four digits), it sends the value of $x_{pw}$ to $M_1$ through channel $enc$. $M_1$ reads this value to variable $y_{pw}$ and then applies a simple function that changes its value, and sends the changed variable back to $M_2$. The property $P$ reasons about the parallel run of the two programs. The pair $(\mathit{getEnc}!y_{pw},\mathit{getEnc}?x_{pw2})$ denotes a synchronization of $M_1$ and $M_2$ on channel $\mathit{getEnc}$. $P$
makes sure that the parallel run of $M_1$ and $M_2$ always reads a value and then encrypts it -- a temporal requirement. In addition, it makes sure that the value after encryption is different from the original value, and that there is no overflow -- both are semantic requirements over the program variables. That is, $P$ expresses temporal requirements that contain first order constraints. 
In case one of the requirements does not hold, $P$ reaches the state $r_4$ which is an error state. Note that $P$ here is not complete, for simplicity of presentation (see Definition~\ref{def:detAndComplete} for a formal definition of a complete program).

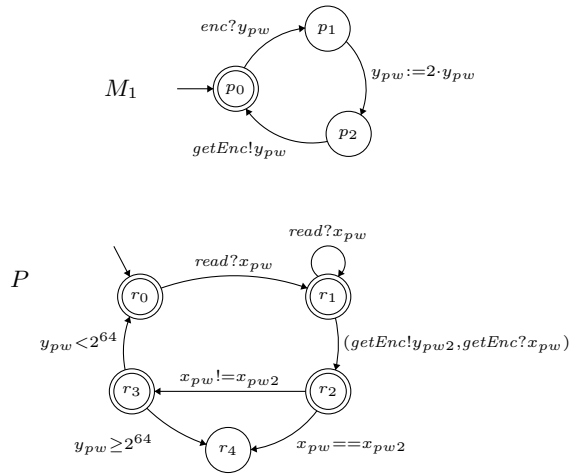
\begin{figure}[]
	\centering
	\begin{tabular}{c}
    
\begin{tikzpicture}[scale=0.1]

\tikzstyle{every node}+=[inner sep=0pt]
\draw [black] (23.2,-18.8) circle (3);
\draw (23.2,-18.8) node {${\scriptstyle p_0}$};
\draw (8,-18.8) node {$M_1$};

\draw [black] (23.2,-18.8) circle (2.4);
\draw [black] (35.4,-10.8) circle (3);
\draw (35.4,-10.8) node {${\scriptstyle p_1}$};
\draw [black] (38.2,-24.8) circle (3);
\draw (38.2,-24.8) node {${\scriptstyle p_2}$};
\draw [black] (15.4,-18.8) -- (20.2,-18.8);
\fill [black] (20.2,-18.8) -- (19.4,-18.3) -- (19.4,-19.3);
\draw [black] (24.314,-16.025) arc (150.10242:96.40631:10.719);
\fill [black] (32.41,-10.72) -- (31.56,-10.31) -- (31.67,-11.3);
\draw (23.18,-11.9) node [above] {${\scriptstyle \mathit{enc}?y_{pw}}$};
\draw [black] (37.857,-12.497) arc (45.659:-23.03914:8.849);
\fill [black] (39.82,-22.29) -- (40.59,-21.75) -- (39.67,-21.36);
\draw (41.09,-16.78) node [right] {${\scriptstyle y_{pw}:=2\cdot y_{pw}}$};
\draw [black] (35.383,-25.802) arc (-78.51493:-145.08789:10.63);
\fill [black] (24.55,-21.47) -- (24.6,-22.41) -- (25.42,-21.84);
\draw (23.75,-25.86) node [below] {${\scriptstyle \mathit{getEnc}!y_{pw}}$};

\end{tikzpicture}
%	\hspace{0.6cm} 
		
	 \\
	 \\
	 \\
		%		\end{minipage}
		%		\begin{minipage}{.2\textwidth}

\begin{tikzpicture}[scale=0.1]
\tikzstyle{every node}+=[inner sep=0pt]
\draw [black] (31,-17.8) circle (3);
\draw (31,-17.8) node {${\scriptstyle r_1}$};
\draw [black] (31,-17.8) circle (2.4);
\draw [black] (31,-30.4) circle (3);
\draw (31,-30.4) node {${\scriptstyle r_2}$};
\draw [black] (31,-30.4) circle (2.4);
\draw [black] (6,-17.8) circle (3);
\draw (6,-17.8) node {${\scriptstyle r_0}$};
\draw [black] (6,-17.8) circle (2.4);
\draw [black] (17.7,-38.2) circle (3);
\draw (17.7,-38.2) node {${\scriptstyle r_4}$};
\draw [black] (4.9,-30.4) circle (3);
\draw (4.9,-30.4) node {${\scriptstyle r_3}$};
\draw (-10,-15.4) node {$P$};

\draw [black] (4.9,-30.4) circle (2.4);
\draw [black] (29.677,-15.12) arc (234:-54:2.25);
\draw (31,-10.55) node [above] {${\scriptstyle \mathit{read}?x_{pw}}$};
\fill [black] (32.32,-15.12) -- (33.2,-14.77) -- (32.39,-14.18);
\draw [black] (31.957,-20.638) arc (13.03629:-13.03629:15.347);
\fill [black] (31.96,-27.56) -- (32.62,-26.9) -- (31.65,-26.67);
\draw (32.85,-24.1) node [right] {${\scriptstyle (\mathit{getEnc}!y_{pw2},\mathit{getEnc}?x_{pw})}$};
\draw [black] (8.776,-16.667) arc (109.28921:70.71079:29.436);
\fill [black] (28.22,-16.67) -- (27.63,-15.93) -- (27.3,-16.87);
\draw (18.5,-14.51) node [above] {${\scriptstyle \mathit{read}?x_{pw}}$};
\draw [black] (2.5,-11.1) -- (4.61,-15.14);
\fill [black] (4.61,-15.14) -- (4.68,-14.2) -- (3.8,-14.66);
\draw [black] (28,-30.4) -- (7.9,-30.4);
\fill [black] (7.9,-30.4) -- (8.7,-30.9) -- (8.7,-29.9);
\draw (17.95,-29.9) node [above] {${\scriptstyle x_{pw}!=x_{pw2}}$};
\draw [black] (4.003,-27.544) arc (-169.14845:-200.83028:13.031);
\fill [black] (4.62,-20.46) -- (3.87,-21.03) -- (4.8,-21.38);
\draw (3.19,-23.89) node [left] {${\scriptstyle y_{pw}<2^{64}}$};
\draw [black] (29.493,-32.986) arc (-36.77303:-82.44661:13.142);
\fill [black] (20.69,-38.15) -- (21.55,-38.54) -- (21.42,-37.55);
\draw (34.09,-36.96) node [below] {${\scriptstyle x_{pw}==x_{pw2}}$};
\draw [black] (14.795,-37.461) arc (-108.43356:-134.28061:20.699);
\fill [black] (14.8,-37.46) -- (14.19,-36.73) -- (13.88,-37.68);
\draw (2.46,-36.01) node [below] {${\scriptstyle y_{pw}\geq 2^{64}}$};

\end{tikzpicture}

	\end{tabular}
	\caption{The program $M_1$ and the specification $P$ }\label{fig:two_systems_spec}
%\vspace{-0.8cm}
\end{figure}

The \lstar~algorithm aims at learning a regular language $U$. Its entities consist of a {\em teacher} -- an oracle who answers {\em membership queries} (``is the word $w$ in $U$?'') and {\em equivalence queries} (``is $\cal A$ an automaton whose language is $U$?''), and a {\em learner}, who iteratively constructs a finite deterministic automaton $\cal A$ for $U$ by submitting a sequence of membership and equivalence queries to the teacher. 

In using the \lstar~algorithm for learning an assumption $A$ for the AG-rule, membership queries are answered according to the satisfaction of the specification $P$: If $M_1||t$ satisfies $P$, then the trace $t$ in hand should be in $A$.  Otherwise, $t$ should not be in $A$. Once the learner constructs a stable system $A$, it submits an equivalence query. The teacher then checks whether $A$ is a suitable assumption, that is, whether $M_1||A$ satisfies $P$, and whether the language of $M_2$ is contained in the language of $A$. According to the results, the process either continues or halts with an answer to the verification problem. The learning procedure aims at learning the weakest assumption $A_{w}$, which contains all the traces that in parallel with $M_1$  satisfy $P$ . 
The key observation that guarantees termination in \cite{DBLP:journals/fmsd/PasareanuGBCB08} is that the components in this procedure -- $M_1$,$M_2$, $P$ and $A_{w}$ --  are all regular. 
\subsubsection*{Our setting}

Our setting is more complicated, since the traces in the components -- both the programs and the specification -- contain constraints, which are to be checked semantically and not syntactically. 
These constraints may cause some traces to become infeasible. For example, if a trace contains an assignment $x:=3$ followed by a constraint $x \geq 4$ (modeling an ``if'' statement), then this trace does not contribute any concrete runs, and therefore does not  affect the system behavior. Thus, we must add feasibility checks to the process.% (by applying an SMT solver, for example). 

Constraints in the specification also pose a difficulty, as satisfiability of a specification~is determined by the semantics of the constraints and not only by the language syntax, and so there is more here to check than standard language containment. 
Moreover, in our setting $A_{w}$ above may no longer be regular, see Lemma~\ref{lemma:weak_non_reg}. 
However, our method manages overcoming this problem in a way that still guarantees termination in case the verification succeeds, and progress, otherwise. In addition, we characterize special cases in which the weakest assumption is in-fact regular. 

As we have described above, not only do we construct a learning-based method for the AG-rule for communicating programs, but we also repair the programs in case the verification fails. 
An AG-rule 
%for finite-state systems (e.g. LTSs~\cite{DBLP:conf/tacas/CobleighGP03}) 
can either conclude that $M_1 || M_2$ satisfies $P$, or 
return a real, non-spurious counterexample of a computation $t$ of $M_1 || M_2$ that violates $P$. 
In our case, instead of returning $t$, we repair $M_2$ in a way that eliminates the counterexample $t$. Our repair is both syntactic and semantic, where for semantic repair we use {\em abduction}~\cite{peirce1974collected} to infer a new constraint which makes the counterexample $t$ infeasible. 
%and add it to the set of actions of $M_2$ through the learning process in a way which makes the counterexample infeasible. 

Consider again $M_1$ and $P$ of Figure~\ref{fig:two_systems_spec} and $M_2$ of Figure~\ref{model}.
The composition $M_1||M_2$ does not satisfy $P$. For example, if the initial value of $x_{pw}$ is $2^{63}$, then after encryption the value of $y_{pw}$ is $2^{64}$, violating $P$.  
Our algorithm finds a bad trace during the AG stage which captures this bad behavior, and the abduction in the repair stage finds a constraint that eliminates it: $x_{pw} < 2^{63}$, and inserts this constraint to $M_2$. 

Following this step we now have an updated $M_2$, and we continue with applying the AG-rule again, using information we have gathered in the previous steps. In addition to removing the error trace, we update the alphabet of $M_2$ with the new constraint.
Continuing our example, in the following iteration AGR will verify that the repaired $M_2$ together with $M_1$ satisfy $P$, and terminate.

Thus, AGR operates in a verify-repair loop, where each iteration runs a learning-based process to determine whether the (current) system satisfies $P$, and if not, eliminates bad behaviors from $M_2$ while enriching the set of constraints derived from these bad behaviors, which often leads to a quicker convergence. In case the current system does satisfy $P$, we return the repaired $M_2$ together with an assumption $A$ that abstracts $M_2$ and acts as a smaller proof for the correctness of the system. 
The original motivation for using the AG rule for verification is to find small proofs for the correctness of $M_1 || M_2 \vDash P$, without the need to compute the whole composition of $M_1 || M_2$, but using the smaller assumption $A$. In our case, we use the same reasoning, but we do not only prove correctness or provide a counterexample, we also repair $M_2$. Thus, we keep learning abstractions to the repaired $M_2$, and, as we later show, rely on information from previous iterations in order to learn an abstraction for the next iteration. Our algorithm produces both a repaired system $M_2'$ such that $M_1 || M_2' \vDash P$; and an assumption $A$ that serves as an abstraction for $M_2'$ and allows to prove the correctness of the repaired system with respect to $P$, without the need to compute the whole composition of $M_1 || M_2'$. 

The assumption $A$ can later be used for the verification of different components other than $M_2$ (or the repaired $M_2'$). If some $M_2''$ is such that $M_2''\subseteq A$ (that is, $A$ is an abstraction of $M_2''$), then from $M_1 || A \vDash P$ we conclude that $M_1 || M_2'' \vDash P$. That is, given $M_2''$ and the small $A$, all we need to check is that $A$ is indeed an abstraction of $M_2''$, instead of computing the composition of the whole system and specification.

\section{Related Work}

Assume-guarantee style compositional verification ~\cite{DBLP:journals/tse/MisraC81,Pnueli85} has been extensively studied.
The assumptions necessary for compositional verification were first produced manually, limiting the practicality of the method.
More recent works~\cite{DBLP:conf/tacas/CobleighGP03,DBLP:journals/ase/GiannakopoulouPB05,DBLP:conf/tacas/GheorghiuGP07,DBLP:conf/tacas/ChakiS07} proposed techniques for automatic assumption generation using learning and abstraction refinement techniques, making assume-guarantee verification more appealing. 
In~\cite{DBLP:journals/fmsd/PasareanuGBCB08,DBLP:conf/tacas/ChakiS07} alphabet refinement has been suggested as an optimization, to reduce the alphabet of the generated assumptions, and consequently their sizes. This optimization can easily  be incorporated into our framework as well.
Other learning-based approaches for automating assumption generation have been described in~\cite{DBLP:conf/cav/ChenCFTTW10,DBLP:journals/fmsd/GuptaMF08,DBLP:conf/tacas/ChenFCTW09}. All these works address non-circular rules and are limited to finite state systems.
%in the context of the verification of finite-state models.
Automatic assumption generation for circular rules is presented in~\cite{Karam1,Karam2}, using compositional rules similar to the ones studied in~\cite{DBLP:conf/charme/McMillan99circular,DBLP:conf/cav/NamjoshiT00}.
Our approach is based on a non-circular rule but it targets complex, infinite-state concurrent systems, and addresses not only verification but also repair.
The compositional framework presented in~\cite{DBLP:conf/fm/LinH14} addresses \lstar-based compositional verification and synthesis but it only targets finite state systems.

Several works use abduction. The work in~\cite{DBLP:conf/tacas/LiDDMS13} addresses automatic synthesis of circular compositional proofs based on logical abduction; however the focus of that work is sequential programs, while here we target concurrent programs.
%A key difference is that they refer to a decomposition of a sequential program, while we consider a parallel composition. 
A sequential setting is also considered in~\cite{POPL16}, where abduction is used for automatically generating a program environment. Our computation of abduction is similar to that of~\cite{POPL16}. However, we require our constraints to be over a predefined set of variables, while they look for a minimal~set. Moreover, our goal is to repair a faulty program.

The approach presented in \cite{CAVMayMust} aims to compute the {\em interface} of an infinite-state component. Similar to our work, the approach works with both over- and under- approximations but it only analyzes one component at a time. Furthermore, the component is restricted to be deterministic (necessary for the permissiveness check). In contrast we use both components of a system to compute the necessary assumptions, and as a result they can be much smaller than in \cite{CAVMayMust}. Furthermore, we do not restrict the components to be deterministic and, more importantly, we also address the system repair in case of dissatisfaction.

Many works study the learnability of symbolic automata, e.g.~\cite{DBLP:conf/cav/ArgyrosD18,DBLP:conf/csl/FismanFZ22,DBLP:conf/tacas/MalerM14,DBLP:conf/fm/Sheinvald19}. The work of~\cite{DBLP:conf/vmcai/HowarSM11} studies alphabet refinement for learning symbolic automata. However, all of these are restricted to the abstraction of single transitions, where valuations of variables (states) and their update via program statements are not considered. We, on the other hand, model program statements and program states.
%and they are not able to model general programs as we do here, but only regular languages over large alphabets. 

There is extensive research on automated program repair. Examples for testing-based approaches are~\cite{DBLP:journals/tse/GouesNFW12,DBLP:conf/icsm/QiML13,DBLP:conf/icse/MechtaevYR16,DBLP:conf/icse/MechtaevYR15,DBLP:conf/icse/NguyenQRC13}, whereas in this work we repair with respect to a formal specification rather than a test-suit. 
In the context of formal repair with respect to a specification, 
\cite{DBLP:journals/ai/Reiter87} lays foundations of diagnosis of errors in multi-component systems, with respect to logical specifications. Their setting relies on the logical representation of the system and the specification, and is concerned with finding all reasons to the violation of the property. In our work we are indeed looking for error traces that violate the specification, but our specifications allow us to reason also about the temporal behavior of the program due to its representation as an automaton. %In addition, the assume-guarantee approach results in an abstraction $A$ of $M_2$, with a more concise representation. 
\cite{DBLP:conf/cav/JobstmannGB05} presents an algorithm to repair finite-state programs with respect to a temporal specification, where we are concerned with infinite-state programs that admit some finite representation. \cite{DBLP:conf/fm/RothenbergG16,DBLP:conf/cav/RothenbergG20} present automated repair algorithms using SAT based techniques for specifications given as assertions. 
Note that while we are aiming at repairing the system, our work is not restricted only to repair, but also for finding a small proof of correctness, in the form of the assumption $A$, that abstracts (the repaired) $M_2$ and allows a smaller and more efficient composition of the multiple components. 
\label{sec:related}
\section{Preliminaries}\label{sec:prelim}
%\subsection{(Deterministic) Finite Automata}
\subsection{Regular Languages}
In this section we define the notions of finite automata and regular languages. We then give a high-level description of the automata learning algorithm \lstar.
\begin{definition}
A \emph{Finite Automaton} is $M = \langle Q, \alpha , \delta, q_0, F\rangle$, where all sets are finite and nonempty, and: 
	\begin{enumerate}
		\item $Q$ is the set of states and $q_0 \in Q$ is the initial state.
		\item $\alpha$  is the set of alphabet of $M$.
        \item $\delta \subseteq Q\times \alpha \times Q$ is the transition relation. We sometimes also refer to it as a function $\delta: Q\times \alpha \rightarrow 2^Q$. 
        \item $F\subseteq Q$ is the set of accepting states. 
	\end{enumerate}
\end{definition}
A Finite automaton is \emph{deterministic} (DFA) if for every $q \in Q$ and every $a \in \alpha$, $|\delta (q, a)| \leq 1$. It is also  \emph{complete} if for every $q \in Q$ and every $a \in \alpha$, $|\delta (q, a)| = 1$.

Let $w = a_1, \ldots, a_k$ be a word over alphabet $\alpha$. We say that $M$ accepts $w$ if there exists $r_0, \ldots, r_k \in Q$ such that $r_0 = q_0$, $r_k \in F$ and for every $i$, $0 \leq i \leq k-1$, $r_{i+1} \in \delta(r_i, a_{i+1})$. The language of $M$, $L(M)$, is the set of all words accepted by $M$. A set of words $W$ over $\alpha$ is called \emph{regular} if there exists a finite automaton $M$ such that $W= L(M)$.

\subsubsection{Learning Regular Languages}\label{sec:l*}

%\lstar~algorithm, constructing the table. 
In
\emph{active automata learning}  
we are interested in learning a DFA $\cal A$ for some regular language $U$, by issuing queries to some knowledgeable teacher.
%is an algorithm whose goal is to characterize a regular language $U$ by learning a DFA $\cal A$, such that $L(\cal A) = U$. 
\lstar~\cite{DBLP:journals/iandc/Angluin87} is an algorithm for active automata learning, using membership and equivalence queries. 
%It consists of a \emph{teacher}, which is an oracle that can answer two types of queries: \emph{membership queries} and \emph{equivalence queries}.
%
%The \lstar~algorithm aims at learning a regular language $U$. 
Its entities consist of a {\em teacher} -- an oracle who answers {\em membership queries} (\mq s) -- ``is the word $w$ in $U$?'', and {\em equivalence queries} (\eq s) -- ``is $\cal A$ an automaton whose language is $U$?''; and a {\em learner}, who iteratively constructs a finite deterministic automaton $\cal A$ for $U$ by submitting a sequence of membership and equivalence queries to the teacher. 

The learner presents its queries to the teacher following the instructions of the \lstar\ algorithm. In particular, membership queries present words in increasing length and in alphabetical order.
Based on its queries, the learner maintains an \emph{observation table} $T$, which accumulates all words learned so far, together with an indication of whether each word belongs to $U$ or not.
If $w$ is in $T$ and is indicated to be in $U$ we call $w$ a \emph{positive example}, denoted by +$w$, and if it is indicated to not be in $U$ we call it a \emph{negative example}, denoted by -$w$.
 
When the observation table is ``stable'', the learner generalizes the table and constructs a conjectured automaton $\cal A'$, which is sent via equivalence query to the teacher. The teacher either determines that $L(\cal A') = U$, in which case \lstar\  terminates; or returns a counterexample in the form of a word which is in the symmetric difference between $L(\cal A')$ and $U$.
The learner updates its observation table accordingly and continues with its construction.

If the teacher answers according to a regular language $U$, then
the \lstar~algorithm is guaranteed to terminate and learn a DFA for $U$, in polynomial time. In the rest of the paper we rely on the correctness of \lstar\ and the fact that the learner queries words in increasing order. 

For more details regarding the \lstar\ algorithm, we refer the reader to~\cite{DBLP:journals/iandc/Angluin87}.

\subsection{Assume-Guarantee Reasoning}
For large systems, composed of two components (or more), Assume-Guarantee (AG) proof rule~\cite{Pnueli85} is highly effective.
Given a system composed of components $M_1$ and $M_2$ and a property $P$, the AG-rule concludes that $M_1|| M_2 \models P$ provided that an assumption $A$ is found, such that $M_1 || A \models P$ and  the behaviors of $M_2$ are contained in the behaviors of $A$.
For $M_1$, $M_2$, $P$ and $A$ that are all Label Transition Systems (LTSs), an automated AG rule has been suggested in~\cite{DBLP:conf/tacas/CobleighGP03}. There, the assumption $A$ is learned using the \lstar\ algorithm. Often, the learned $A$ is much smaller (in terms of states and transitions) than $M_2$. Thus, an efficient composition proof rule is obtained, for proving that $M_1|| M_2 \models P$.

In using the \lstar~algorithm for learning an assumption $A$ for the AG-rule, membership queries are answered according to the satisfaction of the specification $P$: If $M_1||t$ satisfies $P$, then the trace $t$ in hand should be in $A$.  Otherwise, $t$ should not be in $A$. Once the learner constructs a stable system $A$, it submits an equivalence query. The teacher then checks whether $A$ is a suitable assumption, that is, whether $M_1||A$ satisfies $P$, and whether the language of $M_2$ is contained in the language of $A$. According to the results, the process either continues or halts with an answer to the verification problem. The learning procedure aims at learning the weakest assumption $A_{w}$, which contains all the traces that in parallel with $M_1$  satisfy $P$ . 
The key observation that guarantees termination in \cite{DBLP:journals/fmsd/PasareanuGBCB08} is that the components in this procedure -- $M_1$,$M_2$, $P$ and $A_{w}$ --  are all regular. 

It worth noticing that often $ A$, learned by \lstar, is much smaller than the weakest  assumption and than $M_2$.

\section{Communicating Programs and Regular Properties} \label{sec:Defin}
\label{sec:communicating}
	In this section we 
	present the notion of \emph{communicating programs}. These are C-like programs, extended with the ability to synchronously read and write messages over communication channels. We model such programs as automata over an \emph{action alphabet} that reflects the program statements. The alphabet includes \emph{constraints}, which are quantifier-free first-order formulas, representing the conditions in \emph{if} and \emph{while} statements. It also includes \emph{assignment statements} and $read$ and $write$ \emph{communication actions}.
The automata representation is similar in nature to that of control-flow graphs. %~\cite{CFG}.  
Its advantage, however, is in the ability to exploit an automata-learning algorithm such as \lstar\ for its verification~\cite{DBLP:journals/iandc/Angluin87}.

We first formally define the alphabet over which communicating programs are defined. 
Let $G$ be a finite set of communication channels. Let $X$ be a finite set of variables (whose ordered vector is $\bar{x}$) and $\dom{D}$ be a (possibly infinite) data domain. For simplicity, we assume that all variables are defined over $\dom{D}$. The elements of $\dom{D}$ are also used as constants in arithmetic expressions and constraints.

\begin{definition} \label{def:action_alpha}
{An {\em action alphabet} over $G$ and $X$ is }
$\alpha = \mathcal{G} \cup \mathcal{E} \cup \mathcal{C}$ where:
\begin{enumerate}
	\item $\mathcal{G} \subseteq \{\ g?x_1, g!x_1 , (g?x_1, g!x_2), (g!x_1, g?x_2)\  :~ g\in G, x_1, x_2\in X \} $ is a finite set of \emph{communication actions}.  %where $G$ is a finite set of communication channels.  
\begin{itemize}
    \item $g?x$ is a \emph{read} action of a value to the variable $x$ through channel $g$.
    \item $g!x$ is a \emph{write} action of the value of $x$ on channel $g$. We use $g*x$ to indicate some action, either \emph{read} or \emph{write}, through $g$.
    \item The pairs $(g?x_1, g!x_2)$ and $(g!x_1, g?x_2)$ represent a synchronization of two programs on read-write actions over channel $g$ (defined later).
\end{itemize}
	\item $\mathcal{E} \subseteq \{\ x:=e ~:~ e\in E, x\in X   \}$  is a finite set of {\em assignment statements}, where $E$ is a set of %arithmetic 
	expressions over $X\cup \dom{D}$. For an expression $e$, we denote by $e[\bar{x}\leftarrow \bar{d}]$ the expression over $\dom{D}$ in which the variables $\bar{x} \subseteq X$ are assigned with the values of $\bar{d}\subseteq\dom{D}$.  \label{item:assign_stat}
	\item $\mathcal{C}$ is a finite set of {constraints} over $X\cup \dom{D}$. 
\end{enumerate}
\end{definition}

\begin{definition} \label{def:comm}
A \emph{communicating program} (or, a program) is $M = \langle Q, X, \alpha M , \delta, q_0, F\rangle$, where: 
	\begin{enumerate}
		\item $Q$ is a finite set of states and $q_0 \in Q$ is the initial state.
        \item $X$ is a finite set of variables that range over $\dom{D}$. 
		\item $\alpha M = \mathcal{G} \cup \mathcal{E} \cup \mathcal{C}$ is the action alphabet of $M$.
        \item $\delta \subseteq Q\times \alpha \times Q$ is the transition relation. We sometimes also refer to it as a function $\delta: Q\times \alpha \rightarrow 2^Q$. 
        \item $F\subseteq Q$ is the set of accepting states. 
	\end{enumerate}
\end{definition}
%From now on we will refer to communicating programs simply as \emph{programs}.

The words that are read along a communicating program are a {\em symbolic representation} of the program behaviors. We refer to such a word as a {\em trace}. Each such trace induces {\em concrete executions} of the program, which are formed by concrete assignments to the program variables in a way that conforms with the actions along the word. 

Although communicating programs are an extension of finite automata, we investigate them from a different perspective. 
Usually, an
%While in Chapter~\ref{chap:NFM} the 
automaton takes as input a word $w$ and checks whether $w$ is in the language of the automaton. 
%the computation satisfies the specification by reading the computation against the specification automaton, 
In this work we like to think of the automaton as the generator of the behavior, as it describes the program. Therefore, we begin with a run of the program, and induce traces from the run, and not the other way around. 
We now formally define these notions.

\begin{definition}
%As defined in Chapter~\ref{sec:prelim_DFA}, 
A run in a program automaton ${M}$ is a finite sequence of states and actions $r = \langle q_0, a_1, q_1 \rangle \ldots \langle q_{n-1}, a_n, q_n \rangle$, starting with the initial state $q_0$,  such that $\forall 0\leq i < n$ we have $\langle q_i, a_{i+1}, q_{i+1}\rangle  \in \delta$.
The \emph{induced trace} of $r$ is the sequence $t = (a_1,\ldots, a_n)$ of the  actions in $r$.
If $q_n$ is accepting, then $t$ is an {\em accepted trace} of $M$.
\end{definition}

From now on we assume that every trace we discuss is induced by some run.
We turn to define the concrete executions of the program. 
% A trace $t$ induces {\em concrete runs}, in which the variables are assigned values from $D$ in accordance with the actions along $t$.

\begin{definition}\label{def:exec}
Let $t = (a_1, \ldots, a_{n})$ be a trace and let $(\beta_0,\ldots,$ $ \beta_{n})$ be a sequence of valuations (i.e., assignments to the program variables).\footnote{Such valuations are usually referred to as states. We do not use this terminology here in order not to confuse them with the states of the automaton.} Then a sequence $\pi = (\beta_0, a_1, \beta_1, a_2,\ldots, a_n, \beta_{n})$ is an \emph{execution} of $t$ if the following holds.
\begin{enumerate} 
		\item $\beta_0$ is an arbitrary valuation.
		\item If $a_i = g?x$, then $\beta_i(y) = \beta_{i-1}(y)$ for every $y \neq x$. 
        %and $\beta_i(x)$ is assigned arbitrarily. 
        Intuitively, $x$ is arbitrarily assigned by the read action, and the rest of the variables are unchanged.
		\item If $a_i$ is an assignment $x:=e$, then $\beta_i(x) = e[\bar{x}\leftarrow \beta_{i-1}(\bar{x})]$ and $\beta_i(y) = \beta_{i-1}(y)$ for every $y \neq x$.
        \item If $a_i = (g?x, g!y)$ then $\beta_i(x) = \beta_{i-1}(y)$ and  $\beta_i(z) = \beta_{i-1}(z)$ for every $z \neq x$. That is, the effect of a synchronous communication on a channel is that of an assignment. \label{item:sync}
		\item If $a_i$ does not involve a read or an assignment, then $\beta_i = \beta_{i-1}$.
		\item Finally, if $a_i$ is a constraint in $\mathcal{C}$, then  $\beta_i(\bar{x})\vDash a_i$ (and since $a_i$ does not change the variable assignments, then  $\beta_{i-1}(\bar{x})\vDash a_i$ holds as well).
\end{enumerate} 
We say that $t$ is {\em feasible} if there exists an execution of $t$. 
\end{definition}

 The \emph{symbolic language} of $M$, denoted $\mathcal{T}(M)$, is the set of all \emph{accepted} traces induced by runs of $M$. The \emph{concrete language} of $M$, denoted $\mathcal{L}(M)$, 
is the set of all executions of accepted traces in $\mathcal{T}(M)$. 
We will mostly be interested in feasible traces, which represent (concrete) executions of the program.

\begin{example}
\begin{itemize}
    \item[--] The trace $( x := 2 \cdot y,\, g?x, \, y := y+1,\,  g!y )$ is feasible, as it has an execution $(x=1, y=3), (x=6, y=3), (x=20, y=3), (x=20, y=4), (x=20, y=4)$.
    \item[--] The trace $(g?x,\, x:=x^2\, , x<0)$ is not feasible since no $\beta$ can satisfy the constraint $x<0$ if $x:=x^2$ is executed beforehand. 
\end{itemize}

\end{example}

\subsection{Parallel Composition} 
\label{sec:parallel}
    
We now describe and define the parallel composition of two communicating programs, and the way in which they communicate.
    
Let $M_1$ and $M_2$ be two programs, where $M_i = \langle Q_i, X_i,$ $ \alpha_i, \delta_i, {q_0}^i, F_i \rangle$ for $i=1,2$. Let $G_1, G_2$ be the sets of communication channels occurring in actions of $M_1, M_2$, respectively. We assume that $X_1 \cap X_2 = \emptyset$.

The \emph{interface alphabet} $\alpha I$ of $M_1$ and $M_2$ consists of all communication actions on channels that are common to both components. 
That is, $\alpha I = \{ \, g?x, \, g!x ~:~ g\in G_1 \cap G_2,\,  x\in X_1 \cup X_2  \}$.

In {\em parallel composition}, the two components synchronize on their communication interface only when one component writes data through a channel, and the other reads it through the same channel. The two components cannot synchronize if both are trying to read or both are trying to write.  
We distinguish between communication of the two components with each other (on their common channels), and their communication with their environment. In the former case, the components must ``wait'' for each other in order to progress together. In the latter case, the communication actions of the two components interleave asynchronously.

\begin{definition} \label{def:parallel}
The \emph{parallel composition} of $M_1$ and $M_2$, denoted $M_1 || M_2$,  is the program $M= \langle Q, x, \alpha, \delta, {q_0}, F\rangle $, defined as follows.

\begin{enumerate}
	\item $Q = (Q_1 \times Q_2) \cup (Q_1'\times Q_2')$, where $Q_1'$ and $Q_2'$ are new copies of $Q_1$ and $Q_2$, respectively. The initial state is $q_0= (q_0^1, q_0^2)$.
    \item $X = X_1 \cup X_2$.
    \item $\alpha = \{\, (g? x_1 , g! x_2), (g!x_1 , g? x_2) : g*x_1 \in (\alpha_1 \cap \alpha I) \ \mbox{and}\ g* x_2 \in (\alpha_2 \cap \alpha I) \} \  \cup \ ((\alpha_1\cup \alpha_2) \setminus \alpha I) $.\label{item:alpha}
    %\item $\alpha = \{\, (g* x_1, \, g* x_2) \, | \, g*x_1, g*x_2 \in \alpha G \, \}  \cup ((\alpha_1\cup \alpha_2) \setminus \alpha G) $. 
    
    That is, the alphabet includes pairs of read-write communication actions on channels that are common to $M_1$ and $M_2$. It also includes individual actions of $M_1$ and $M_2$, which are not communications on common channels.
	\item $\delta$ is defined as follows. 
    \begin{enumerate}
    \item[(a)] For $(g* x_1 , g* x_2) \in \alpha$\footnote{Note that according to item~\ref{item:alpha}, one of the actions must be a read action and the other one is a write action.}:
    \begin{enumerate} 
    \item $\delta ((q_1, q_2), (g* x_1 , g* x_2)) = \{(q_1', q_2')_g\} $.
	 \item $\delta((q_1',q_2' )_g, x_1 = x_2) \ =  
	 \{ (p_1, p_2) ~|~ p_1\in \delta_1 (q_1, g*x_1), p_2\in \delta_2(q_2, g*x_2)\} $.
    \end{enumerate}
    That is, when a communication is performed synchronously in both components, the data is transformed through the channel from the writing component to the reading component. As a result, the values of $x_1$ and $x_2$ equalize. This is enforced in $M$ by adding a transition labeled by the constraint $x_1 = x_2$ that immediately follows the synchronous communication.\footnote{Note that, equality is implied by Definition~\ref{def:exec} item~\ref{item:sync}. Here it is included syntactically to emphasize this fact. In-fact, when implementing feasibility checks this equality has to be included explicitly.}
    \item[(b)] For $a \in \alpha_1 \setminus \alpha I $ we set $\delta ((q_1, q_2),a) =  \{ (p_1, q_2) ~|~ p_1\in \delta_1 (q_1, a) \})$.
    \item[(c)] For $a \in \alpha_2 \setminus \alpha I $ we set $\delta ((q_1, q_2),a) =  \{ (q_1, p_2) ~|~ p_2\in \delta_2 (q_2, a) \}$.
    %\item For $c_1 \in \mathcal{C}_1, c_2 \in \mathcal{C}_2$, we define $\delta ((q_1, q_2),c_1 \wedge c_2) =  (\delta_1 (q_1, c_1), \delta_2 (q_2, c_2))$.
    \end{enumerate}
        That is, on actions that are not in the interface alphabet, the two components interleave.
    \item $F = F_1 \times F_2$
\end{enumerate}
\end{definition}

\begin{definition} \label{def:mt}
For a trace $t$, we define $M_t$ to be the communicating program which only follows the trace $t$, and has no other transitions. Then, we define $M||t$ to be the composition $M || M_t$ (and similarly for $t || M$). 
\end{definition}

Figure~\ref{Fig-composition} demonstrates the parallel composition of components $M_1$ and $M_2$. The program $M = M_1 || M_2$ reads a password from the environment through channel $\mathit{pass}$. The two components synchronize on channel $\mathit{verify}$. Assignments to $x$ are interleaved with reading the value of $y$ from the environment.

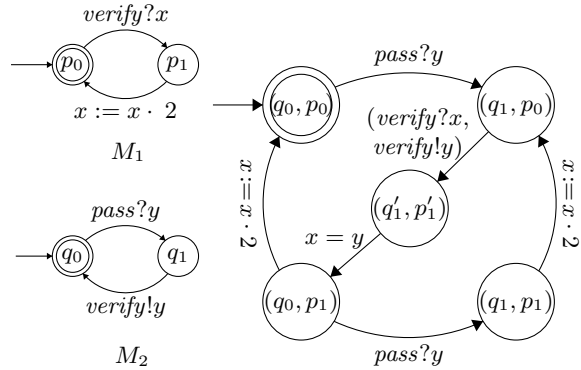
\begin{figure}[h]
	\centering
	\begin{tabular}{cc} %{>{\centering\arraybackslash}m{1in}>{\centering\arraybackslash}m{4in}}
 \begin{tikzpicture}[scale=0.09]
\tikzstyle{every node}+=[inner sep=0pt]
\draw [black] (24.8,-13.3) circle (3);
\draw (24.8,-13.3) node {$p_0$};
\draw [black] (24.8,-13.3) circle (2.4);
\draw [black] (40.4,-13.3) circle (3);
\draw (40.4,-13.3) node {$p_1$};
%\draw [black] (33,-26.2) circle (3);
\draw (33,-26.2) node {$M_1$};
\draw [black] (24.8,-41.6) circle (3);
\draw (24.8,-41.6) node {$q_0$};
\draw [black] (24.8,-41.6) circle (2.4);
\draw [black] (40.4,-41.6) circle (3);
\draw (40.4,-41.6) node {$q_1$};
%\draw [black] (33.6,-56.4) circle (3);
\draw (33.6,-56.4) node {$M_2$};
\draw [black] (15.7,-13.3) -- (21.8,-13.3);
\fill [black] (21.8,-13.3) -- (21,-12.8) -- (21,-13.8);
\draw [black] (26.444,-10.809) arc (136.4462:43.5538:8.494);
\fill [black] (38.76,-10.81) -- (38.57,-9.89) -- (37.84,-10.57);
\draw (32.6,-7.67) node [above] {$\mathit{verify}?x$};
\draw [black] (38.707,-15.758) arc (-44.58648:-135.41352:8.575);
\fill [black] (26.49,-15.76) -- (26.7,-16.68) -- (27.41,-15.98);
\draw (32.6,-18.81) node [below] {$x:=x\cdot\mbox{ }2$};
\draw [black] (26.699,-39.295) arc (130.97087:49.02913:9);
\fill [black] (38.5,-39.3) -- (38.23,-38.39) -- (37.57,-39.15);
\draw (32.6,-36.59) node [above] {$\mathit{pass}?y$};
\draw [black] (16.4,-41.6) -- (21.8,-41.6);
\fill [black] (21.8,-41.6) -- (21,-41.1) -- (21,-42.1);
\draw [black] (38.876,-44.165) arc (-41.05029:-138.94971:8.322);
\fill [black] (26.32,-44.17) -- (26.47,-45.1) -- (27.23,-44.44);
\draw (32.6,-47.52) node [below] {$\mathit{verify}!y$};
\end{tikzpicture}

	%\hspace{0.1cm} 
	
	&
	
{
\begin{tikzpicture}[scale=0.17]
\tikzstyle{every node}+=[inner sep=0pt]
\draw [black] (24.2,-24.7) circle (3);
\draw [black] (24.2,-24.7) circle (2.4);
\draw (24.2,-24.7) node {$(\!q_0 , p_0\!)$};
\draw [black] (24.2,-40.1) circle (3);
\draw (24.2,-40.1) node {$( q_0 , p_1  )$};
\draw [black] (41,-24.7) circle (3);
\draw (41,-24.7) node {$( q_1 , p_0  )$};
\draw [black] (41,-40.1) circle (3);
\draw (41,-40.1) node {$( q_1 , p_1  )$};
\draw [black] (32.7,-32.8) circle (3);
\draw (32.7,-32.8) node {$( q_1' , p_1'  )$};
\draw [black] (17.3,-24.7) -- (21.2,-24.7);
\fill [black] (21.2,-24.7) -- (20.4,-24.2) -- (20.4,-25.2);
\draw [black] (26.851,-23.307) arc (112.09981:67.90019:15.28);
\fill [black] (38.35,-23.31) -- (37.8,-22.54) -- (37.42,-23.47);
\draw (32.6,-21.68) node [above] {\small$\mathit{pass}?y$};
\draw [black] (38.428,-41.633) arc (-65.35757:-114.64243:13.977);
\fill [black] (38.43,-41.63) -- (37.49,-41.51) -- (37.91,-42.42);
\draw (32.6,-43.41) node [below] {\small$\mathit{pass}?y$};
\draw [black] (22.494,-37.643) arc (-152.73421:-207.26579:11.444);
\fill [black] (22.49,-27.16) -- (21.68,-27.64) -- (22.57,-28.1);
\draw (20.72,-32.4) node [left] {\small\begin{turn}{270} 
\small$x\!:=\!x\cdot 2$
\end{turn}
};
\draw [black] (42.781,-27.103) arc (28.74781:-28.74781:11.014);
\fill [black] (42.78,-27.1) -- (42.73,-28.04) -- (43.6,-27.56);
\draw (44.64,-32.4) node [right] {
\begin{turn}{270} 
\small$x\!:=\!x\cdot 2$
\end{turn}
};
\draw [black] (38.85,-26.8) -- (34.85,-30.7);
\fill [black] (34.85,-30.7) -- (35.77,-30.5) -- (35.07,-29.79);
\draw (33.33,-29.27) node [above] {
\begin{tabular}{c}
		$(\mathit{verify}?x,$\\ $\mathit{verify}!y)$
			\end{tabular}
%\tiny$(\mathit{verify}?x, \mathit{verify}!y)$
};
\draw [black] (30.42,-34.75) -- (26.48,-38.15);
\fill [black] (26.48,-38.15) -- (27.41,-38) -- (26.76,-37.24);
\draw (26.81,-35.96) node [above] {$x=y$};
\end{tikzpicture}}

	\end{tabular}
	\caption{Components $M_1$ and $M_2$  and their parallel composition $M_1 || M_2$.}\label{Fig-composition}
\end{figure}

\subsection{Regular Properties and their Satisfaction} \label{sec:safety} 
The specifications we consider are also given as some variation of communicating programs. We now define the syntax and semantics of the properties that we consider as specifications.
These are properties that can be represented as finite automata, hence the name \emph{regular}.
However, the alphabet of such automata  includes communication actions and  first-order constraints over program variables. Thus, such automata are suitable for specifying the desired and undesired behaviors of communicating programs over time.

In order to define our properties, we first need the notion of a {\em deterministic and complete} program. The definition is somewhat different from the standard definition for finite automata,
%given in Chapter~\ref{sec:prelim_DFA} 
since it takes the semantic meaning of constraints into account.
Intuitively, in a deterministic and complete program, every concrete execution has exactly one trace that induces it.

\begin{definition}\label{def:detAndComplete}
A communicating program over alphabet $\alpha$ is \emph{deterministic and complete} if for every state $q$ and for every action $a \in \alpha$ the following hold: 
\begin{enumerate}
\item \emph{Syntactic determinism and completeness.} There is exactly one state $q'$ such that $\langle q, a, q'\rangle$ is in $\delta$.\footnote{In our examples we sometimes omit the actions that lead to a rejecting sink for the sake of clarity.}
\item \emph{Semantic determinism.} If $\langle q, c_1, q'\rangle$ and $\langle q, c_2, q''\rangle$ are in $\delta$ for constraints $c_1, c_2 \in \mathcal{C}$ such that $c_1\neq c_2$ and $q' \neq q''$, then $c_1 \wedge c_2 \equiv false$. 
\item \emph{Semantic completeness.} Let $C_q$ be the set of all constraints on transitions leaving $q$. Then $(\bigvee_{c\in C_q} c) \equiv \true$.
%\item If $(q, a_1, q'), (q, a_2, q'')$ are in $\delta$ then either $a_1, a_2$ are both in $\mathcal{G}$, or $a_1, a_2$ are both in $\mathcal{C}$. 
\end{enumerate}
\end{definition}

 A {\em property} is a deterministic and complete program with no assignment actions, whose language defines the set of desired and undesired behaviors over the alphabet $\alpha P$. 

\stam{
\begin{definition}
A \emph{property} $P =  \langle Q, X, \alpha, \delta, {q_0}, F \rangle$ is a deterministic and complete  program that does not contain assignment actions. 
The set of rejecting (bad) states of $P$ is $B = Q \setminus F$. 
\end{definition}
}

A trace is accepted by a property $P$ if it reaches a state in $F$, the set of accepting states of $P$. Otherwise, it reaches a state in $Q\setminus F$, and is rejected by $P$.
%whose language defines the set of desired and undesired behaviors over the alphabet $\alpha P$.     
%
%Formally, $P =  \langle Q, X, \alpha, \delta, {q_0}, B \rangle$ where $Q, X, \alpha, \delta, q_0$ are as in the definition of programs. 
%The set $B\subseteq Q$ is the set of \emph{rejecting (bad) states} of $P$. 
%We use $\mathcal{T}_F(P)$ to denote the set of accepted traces of $P$ and $\mathcal{T}_B(P)$ to denote the erroneous, undesired traces of $P$. 

Next, we define the satisfaction relation $\vDash$ between a program and a property.
Intuitively, a program $M$ satisfies a property $P$ (denoted $M \vDash P$) if all executions induced by accepted traces of $M$ reach an accepting state in $P$. 
Thus, the accepted behaviors of $M$ are also accepted by $P$. 

A property $P$ specifies the behavior of a program $M$ by referring to communication actions of $M$ and imposing constraints over the variables of $M$.
Thus, the set of variables of $P$ is identical to that of $M$. Let $\mathcal{G}$ be the set of communication actions of $M$. Then, $\alpha P$ includes a subset of $\mathcal{G}$ as well as constraints over the variables of $M$. 
The {\em interface} of $M$ and $P$, which consists of the communication actions that occur in $P$, is defined as $\alpha I = \mathcal{G} \cap \alpha P$. 

In order to capture the satisfaction relation between $M$ and $P$, we define a {\em conjunctive composition} between $M$ and $P$, denoted $M \times P$.  
%restrictive -> conjunctive
In conjunctive composition, the two components synchronize on their common communication actions when both read or both write through the same communication channel. They interleave on constraints and on actions of $\alpha M$ that are not in $\alpha P$.

 \begin{definition}\label{def:conjunctive}
 Let $M = \langle Q_M, X_M, \alpha M, \delta_M, {q_0^M}, F_M \rangle$ be a program and $P = \langle Q_P,$ $ X_P,  \alpha P, \delta_P, {q_0^P}, F_P \rangle$  be a property, where $X_M \supseteq X_P$. The \emph{conjunctive composition}  of $M$ and $P$ is $M \times P = \langle Q, X, \alpha, \delta, {q_0}, F \rangle$, where:
 \begin{enumerate}
	\item $Q = Q_M \times Q_P$. The initial state is $q_0= (q_0^M, q_0^P)$.
    \item $X = X_M$.
    \item $\alpha = \{\, g*x , ~g*y,  , ~(g?x,g!y), ~(g!x, g?y) ~:~ g*x, g*y,~ (g*x, g*y) \in  \alpha I \}  \cup ((\alpha M\cup \alpha P) \setminus \alpha I))$.
    
    %\item $\alpha = \{\, (g* x_1, \, g* x_2) \, | \, g*x_1, g*x_2 \in \alpha G \, \}  \cup ((\alpha_1\cup \alpha_2) \setminus \alpha G) $. 
    That is, the alphabet includes %pairs of identical read-read and write-write 
    communication actions on channels common to $M$ and $P$. 
    %It also includes assignment actions that are common to $M_1$ and $M_2$. Finally,
    It also includes individual actions of $M$ and $P$.
    
    {Note that communication actions of the form $(g*x, g*y)$ can only appear if $M$ is itself a parallel composition of two programs.}
	\item $\delta$ is defined as follows.
    \begin{itemize}
    \item[(a)] For $a = (g* x , g* y)$ in $\alpha I$, or $a = g*x $ in $\alpha I$, we define 
    $\delta ((q_1, q_2), a) = $ \\
    $\{ (q_M, q_P) ~|~ q_M\in\delta_M (q_1, a), q_P = \delta_P(q_2,a)\} $.\footnote{Recall that $P$ is deterministic thus its transition relation only corresponds to one state, for each letter.}
    %\item for $a \in (\mathcal{E}_1 \cap \mathcal{E}_2)$ we define $\delta ((q_1, q_2),a) =  (\delta_1 (q_1, a), \delta_2(q_2,a ))$.
   % That is, assignments that are common to both components (they are in their interface) are performed synchronously. 
    \item[(b)] For $a \in \alpha M \setminus \alpha I $ we define  $\delta ((q_1, q_2),a) =  \{ (q_M, q_2) ~|~ q_M \in \delta_M (q_1,a) \}$.
    \item[(c)] For $a \in \alpha P \setminus \alpha I $ we define  $\delta ((q_1, q_2),a) =  \{ (q_1, \delta_P (q_2, a)) \}$.
    \end{itemize}
    That is, on actions that are not common communication actions to $M$ and $P$, the two components interleave.
    \item $F = F_M \times B_P$, where $B_P = Q_P\setminus F_P$.
\end{enumerate}
\end{definition}
Note that accepted traces in $M \times P$ are those that are accepted in $M$ and rejected in $P$.
Such traces are called \emph{error traces} and their corresponding executions are called \emph{error executions}. 
Intuitively, an error execution is an execution along $M$ which violates the properties modeled by $P$. Such an execution either fails to synchronize on the communication actions, or reaches a point in the computation in which its assignments violate some constraint described by $P$. These executions are manifested in the traces that are accepted in $M$ but are composed with matching traces that are rejected in $P$.
%An error run is actually an evidence to a violating behavior of $M$.
%
We can now formally define when a program satisfies a property.
\begin{definition}
For a program $M$ and a property $P$, we define $M \vDash P$ iff $M \times P$ contains no feasible accepted traces.
\end{definition}
Thus, a feasible error trace in $M \times P$ is an evidence to $M \not\vDash P$, since it indicates the existence of an execution that violates $P$.

\begin{example}

Consider the program $M$, the property $P$ and the partial construction of $M\times P$ presented in Figure~\ref{Fig-restric}. The property $P$ requires every verified password $y$ to be of length at least 4. It is easy to see that $M\nvDash P$, since the trace $t=(\mathit{password}?y,$  $y>0, \mathit{verify}!y, y<1000)$ is a feasible error trace in $M\times P$.

\end{example}

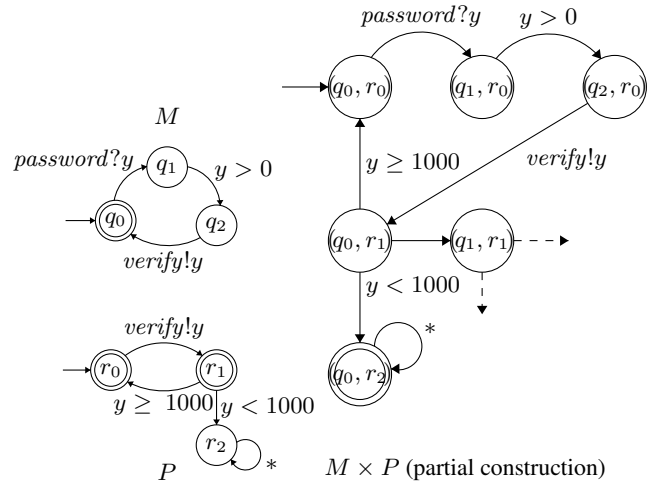
\begin{figure}[h]
	\centering
	\begin{tabular}
	%{m{2.5in}m{4in}}
	{cc} 
	%{>{\centering\arraybackslash}m{1in}>{\centering\arraybackslash}m{4in}}

\begin{tikzpicture}[scale=0.09]
\tikzstyle{every node}+=[inner sep=0pt]
%\draw [black] (5.8,-13.5) circle (3);
%\draw (5.8,-13.5) node {$M$};
%\draw [black] (5.8,-35.6) circle (3);
\draw (28.3,-50.6) node {$P$};
\draw [black] (20.7,-13.5) circle (3);
\draw (20.7,-13.5) node {$q_0$};
\draw [black] (20.7,-13.5) circle (2.4);
\draw [black] (28.3,-5.6) circle (3);
\draw (28.3,2) node {$M$};
\draw (28.3,-5.6) node {$q_1$};
\draw [black] (35.6,-14.2) circle (3);
\draw (35.6,-14.2) node {$q_2$};
\draw [black] (20,-35.6) circle (3);
\draw (20,-35.6) node {$r_0$};
\draw [black] (20,-35.6) circle (2.4);
\draw [black] (35.6,-35.6) circle (3);
\draw (35.6,-35.6) node {$r_1$};
\draw [black] (35.6,-35.6) circle (2.4);
\draw [black] (35.6,-46.7) circle (3);
\draw (35.6,-46.7) node {$r_2$};
\draw [black] (13.3,-13.5) -- (17.7,-13.5);
\fill [black] (17.7,-13.5) -- (16.9,-13) -- (16.9,-14);
\draw [black] (20.48,-10.537) arc (-189.6015:-258.18088:6.209);
\fill [black] (25.33,-5.5) -- (24.45,-5.17) -- (24.65,-6.15);
\draw (23.5,-4.5) node [left] {$\mathit{password}?y$};
\draw [black] (31.27,-5.531) arc (77.54058:3.1111:6.234);
\fill [black] (36.15,-11.28) -- (36.61,-10.45) -- (35.61,-10.51);
\draw (35.23,-6.14) node [right] {$y>0$};
\draw [black] (33.213,-15.998) arc (-61.61275:-123.76679:9.992);
\fill [black] (22.91,-15.51) -- (23.3,-16.37) -- (23.85,-15.54);
\draw (27.65,-18.03) node [below] {$\mathit{verify}!y$};
\draw [black] (12.9,-35.6) -- (17,-35.6);
\fill [black] (17,-35.6) -- (16.2,-35.1) -- (16.2,-36.1);
\draw [black] (22.091,-33.466) arc (126.59775:53.40225:9.575);
\fill [black] (33.51,-33.47) -- (33.16,-32.59) -- (32.57,-33.39);
\draw (27.8,-31.08) node [above] {$\mathit{verify}!y$};
\draw [black] (33.165,-37.338) arc (-62.00467:-117.99533:11.43);
\fill [black] (22.43,-37.34) -- (22.91,-38.15) -- (23.38,-37.27);
\draw (27.8,-39.18) node [below] {$y\geq\mbox{ }1000$};
\draw [black] (35.6,-38.6) -- (35.6,-43.7);
\fill [black] (35.6,-43.7) -- (36.1,-42.9) -- (35.1,-42.9);
\draw (36.1,-41.15) node [right] {$y<1000$};
\draw [black] (38.56,-46.293) arc (125.56505:-162.43495:2.25);
\draw (42.77,-49.54) node [right] {$*$};
\fill [black] (37.72,-48.8) -- (37.78,-49.74) -- (38.6,-49.16);
\end{tikzpicture}

	\hspace{-0.7cm} 
	&
	
{
\begin{tikzpicture}[scale=0.14]
\tikzstyle{every node}+=[inner sep=0pt]
\draw [black] (15,-12.6) circle (3);
\draw (15,-12.6) node {$(\!q_0,r_0\!)$};
\draw [black] (26.6,-12.6) circle (3);
\draw (26.6,-12.6) node {$(\!q_1,r_0\!)$};
\draw [black] (39.2,-12.6) circle (3);
\draw (39.2,-12.6) node {$(\!q_2,r_0\!)$};
\draw [black] (15,-27.2) circle (3);
\draw (15,-27.2) node {$(\!q_0,r_1\!)$};
\draw [black] (15,-39.9) circle (3);
\draw (15,-39.9) node {$(\!q_0,r_2\!)$};
\draw [black] (15,-39.9) circle (2.4);
\draw [black] (26.6,-27.2) circle (3);
\draw (26.6,-27.2) node {$(\!q_1,r_1\!)$};
%\draw [black] (37.8,-27.2) circle (3);
%\draw [black] (26.6,-37.2) circle (3);
%\draw [black] (29.3,-48.9) circle (3);
\draw (25,-48.9) node {$M\times P\text{ (partial construction)}$};
\draw [black] (7.6,-12.6) -- (12,-12.6);
\fill [black] (12,-12.6) -- (11.2,-12.1) -- (11.2,-13.1);
\draw [black] (16.051,-9.826) arc (144.50913:35.49087:5.832);
\fill [black] (25.55,-9.83) -- (25.49,-8.88) -- (24.68,-9.46);
\draw (20.8,-6.88) node [above] {$\mathit{password}?y$};
\draw [black] (27.904,-9.928) arc (140.67168:39.32832:6.458);
\fill [black] (37.9,-9.93) -- (37.78,-8.99) -- (37,-9.63);
\draw (32.9,-7.06) node [above] {$y>0$};
\draw [black] (15,-30.2) -- (15,-36.9);
\fill [black] (15,-36.9) -- (15.5,-36.1) -- (14.5,-36.1);
\draw (15.5,-31.55) node [right] {$y<1000$};
\draw [black] (16.38,-37.249) arc (180.23337:-107.76663:2.25);
\draw (21.03,-35.52) node [right] {$*$};
\fill [black] (17.94,-39.38) -- (18.75,-39.88) -- (18.74,-38.88);
\draw [black] (36.63,-14.15) -- (17.57,-25.65);
\fill [black] (17.57,-25.65) -- (18.51,-25.67) -- (18,-24.81);
\draw (34.6,-18.41) node [below] {$\mathit{verify}!y$};
\draw [black] (15,-24.2) -- (15,-15.6);
\fill [black] (15,-15.6) -- (14.5,-16.4) -- (15.5,-16.4);
\draw (15.5,-19.9) node [right] {$y\geq 1000$};
\draw [black] (18,-27.2) -- (23.6,-27.2);
\fill [black] (23.6,-27.2) -- (22.8,-26.7) -- (22.8,-27.7);
\draw [black , dashed] (29.6,-27.2) -- (34.8,-27.2);
\fill [black] (34.8,-27.2) -- (34,-26.7) -- (34,-27.7);
\draw [black, dashed] (26.6,-30.2) -- (26.6,-34.2);
\fill [black] (26.6,-34.2) -- (27.1,-33.4) -- (26.1,-33.4);
\end{tikzpicture}
}

	\end{tabular}
	\caption{Partial conjunctive composition of $M$ and $P$.% Error traces are all traces that end at state $(q_0, r_2)$. 
	\label{Fig-restric}}%\vspace{-0.5cm}
\end{figure}

\commentout{

=======

\begin{tikzpicture}[scale=0.09]
\tikzstyle{every node}+=[inner sep=0pt]
\draw [black] (15,-12.6) circle (3);
\draw (15,-12.6) node {$(q_0,r_0)$};
\draw [black] (15,-26) circle (3);
\draw (15,-26) node {$(q_1,r_0)$};
\draw [black] (15,-39.6) circle (3);
\draw (15,-39.6) node {$(q_2,r_0)$};
\draw [black] (28,-12.6) circle (3);
\draw (28,-12.6) node {$(q_0,r_1)$};
\draw [black] (42.7,-12) circle (3);
\draw (42.7,-12) node {$(q_0,r_2)$};
\draw [black] (42.7,-12) circle (2.4);
\draw [black] (28.8,-26) circle (3);
\draw (28.8,-26) node {$(q_1,r_1)$};
\draw [black] (42.7,-26) circle (3);
\draw (42.7,-26) node {$(q_2,r_1)$};
\draw [black] (28,-39.6) circle (3);
\draw (28,-39.6) node {$(q_1,r_2)$};
\draw [black] (42.7,-39.6) circle (3);
\draw (42.7,-39.6) node {$(q_2,r_2)$};
\draw [black] (7.6,-12.6) -- (12,-12.6);
\fill [black] (12,-12.6) -- (11.2,-12.1) -- (11.2,-13.1);
\draw [black] (12.855,-23.928) arc (-144.73764:-215.26236:8.016);
\fill [black] (12.86,-23.93) -- (12.8,-22.99) -- (11.99,-23.56);
\draw (10.88,-19.3) node [left] {$\mathit{password}?y$};
\draw [black] (12.455,-38.052) arc (-133.23083:-226.76917:7.209);
\fill [black] (12.46,-38.05) -- (12.21,-37.14) -- (11.53,-37.87);
\draw (9.68,-32.8) node [left] {$y>0$};
\draw [black] (29.774,-10.201) arc (133.25229:51.42232:8.375);
\fill [black] (40.74,-9.75) -- (40.42,-8.86) -- (39.8,-9.65);
\draw (35.05,-7.29) node [above] {$y<1000$};
\draw [black] (45.66,-11.593) arc (125.56505:-162.43495:2.25);
\draw (49.87,-14.84) node [right] {$*$};
\fill [black] (44.82,-14.1) -- (44.88,-15.04) -- (45.7,-14.46);
\draw [black] (16.3,-36.9) -- (26.7,-15.3);
\fill [black] (26.7,-15.3) -- (25.9,-15.81) -- (26.8,-16.24);
\draw (22.21,-27.17) node [right] {$\mathit{verify}!y$};
\draw [black] (16.571,-10.072) arc (135.66365:44.33635:6.891);
\fill [black] (16.57,-10.07) -- (17.49,-9.85) -- (16.77,-9.15);
\draw (21.5,-7.5) node [above] {$y\geq\mbox{ }1000$};
\end{tikzpicture} 

}
\section{Traces in the Composed System}\label{sec:tarces}

Before we discuss our framework for compositional verification and repair of communicating systems, we prove some properties of traces in the composed system. We later use these properties in order to prove that our framework is sound and complete (Section~\ref{sec:soundAGR}), 
and to prove the correctness and termination of our algorithm (Section~\ref{sec:abduction} and Section~\ref{Sec:termination}).

\begin{definition}
Let $t$ be a trace over alphabet $\alpha$, and let $\alpha' \subseteq \alpha$. We denote by $t\downarrow_{\alpha'}$ the \emph{restriction of $t$ on $\alpha'$}, which is the
trace obtained from $t$ by omitting all letters in $t$ that are not in $\alpha'$.
If $\alpha$ contains a communication action $a = (g*x, g*y)$ and we have $g*x\in \alpha'$ %or $g*y\in \alpha'$ 
then the restriction $t\downarrow_{\alpha'}$ includes the corresponding communication, $g*x$, and similarly for $g*y\in\alpha'$. 
\end{definition}

\begin{example}
Let $\alpha = \{g_1!x, x:=x+1, x<10, (g_2?x, g_2!y)\}$, and $\alpha' = \{g_1!x, x:=x+1, g_2?x\}$. Then $(g_2?x, g_2!y)\downarrow_{\alpha'} = g_2?x$, and for  $$t=((g_2?x, g_2!y), x:=x+1,x:=x+1, x<10, g_1!x)$$ 
we have $$t\downarrow_{\alpha'} = (g_2?x, x:=x+1, x:=x+1, g_1!x).$$
\end{example}

For traces in the conjunctive and parallel compositions, we have the following two lemmas. 

\begin{lemma} \label{lemma:traces}
Let  $M$ be a communicating program and $P$ be a property, and let $t$ be a trace of $M\times P$. Then $t\downarrow_{\alpha M}$ is a trace of $M$.
\end{lemma}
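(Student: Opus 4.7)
The plan is to proceed by straightforward induction on the length of $t$, tracking the $M$-component of the states visited in $M \times P$. Specifically, I would strengthen the statement to the following inductive invariant: if $r = \langle (q_0^M, q_0^P), a_1, (q_1^M, q_1^P)\rangle \cdots \langle (q_{n-1}^M, q_{n-1}^P), a_n, (q_n^M, q_n^P)\rangle$ is a run in $M \times P$ inducing the trace $t$, then the sequence obtained from $r$ by deleting every triple whose action lies outside $\alpha M$ and collapsing consecutive repeats of the $M$-component is a run of $M$ from $q_0^M$ to $q_n^M$, whose induced trace is exactly $t\downarrow_{\alpha M}$.

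The base case ($n=0$) is immediate since the empty trace is trivially a trace of $M$ starting at $q_0^M$. For the inductive step, suppose the claim holds for the prefix $t' = a_1 \cdots a_{n-1}$, landing in the $M \times P$-state $(q_{n-1}^M, q_{n-1}^P)$, and consider the final action $a_n$ together with its transition to $(q_n^M, q_n^P)$. According to Definition~\ref{def:conjunctive}(4), exactly one of three cases applies:
\begin{enumerate}
\item $a_n$ is a (possibly paired) communication action in $\alpha I$. Then $a_n \in \alpha M$, and by case (a) of the transition relation, $q_n^M \in \delta_M(q_{n-1}^M, a_n)$. Thus appending $\langle q_{n-1}^M, a_n, q_n^M\rangle$ to the $M$-run obtained from the inductive hypothesis yields a run of $M$ inducing $t'\downarrow_{\alpha M} \cdot a_n = t\downarrow_{\alpha M}$.
\item $a_n \in \alpha M \setminus \alpha I$. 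Then $a_n \in \alpha M$, and by case (b), $q_n^M \in \delta_M(q_{n-1}^M, a_n)$, so we extend the $M$-run exactly as above.
\item $a_n \in \alpha P \setminus \alpha I$. Then $a_n \notin \alpha M$, and by case (c) we have $q_n^M = q_{n-1}^M$. The projection $t\downarrow_{\alpha M}$ equals $t'\downarrow_{\alpha M}$, so the inductive run in $M$ from the previous step already witnesses the claim.
\end{enumerate}

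There is no real obstacle here: the result is essentially a bookkeeping statement about how the transition relation of $M \times P$ splits along the three cases in Definition~\ref{def:conjunctive}. The only mildly delicate point is the convention on paired communication actions: a letter of the form $(g?x, g!y)$ appearing in $t$ belongs to $\alpha I \subseteq \alpha M$, so it is kept by the projection and corresponds directly to the $M$-transition guaranteed by case (a). All other cases either keep the action and move $M$, or drop the action and leave the $M$-component unchanged, so the projected sequence is genuinely a run of $M$ inducing $t\downarrow_{\alpha M}$.
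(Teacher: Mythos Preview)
Your proposal is correct and follows essentially the same approach as the paper: both arguments track the $M$-component of the states along a run of $M\times P$ and use the case split in Definition~\ref{def:conjunctive}(4) to show that letters in $\alpha M$ advance $M$ via $\delta_M$ while letters in $\alpha P\setminus\alpha I$ leave the $M$-component fixed. The only cosmetic difference is that the paper inducts on the length of the projected trace $t\downarrow_{\alpha M}$ (first establishing as a separate observation that non-$\alpha M$ segments do not move the $M$-state), whereas you induct directly on the length of $t$ and absorb that observation into case~(3) of your inductive step; the underlying content is identical.
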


\begin{lemma} \label{lemma:traces12}
Let  $M_1, M_2$ be two programs, and let $t$ be a trace of $M_1 || M_2$. Then $t\downarrow_{\alpha M_1}$ is a trace of $M_1$ and $t\downarrow_{\alpha M_2}$ is a trace of $M_2$.
\end{lemma}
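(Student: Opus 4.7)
The plan is to prove this by induction on the length of the run $\rho$ in $M_1 || M_2$ that induces $t$, tracking the projection of $\rho$ onto each component. The key observation, directly from Definition~\ref{def:parallel}, is that every transition of $M_1 || M_2$ falls into exactly one of three syntactic categories: (a) a synchronization transition labeled by $(g*x_1, g*x_2)$ followed by an auxiliary constraint transition labeled $x_1 = x_2$, (b) an individual action $a \in \alpha_1 \setminus \alpha_I$ affecting only the $M_1$-coordinate, or (c) an individual action $a \in \alpha_2 \setminus \alpha_I$ affecting only the $M_2$-coordinate. Since states of $M_1 || M_2$ are (essentially) pairs $(q_1, q_2) \in Q_1 \times Q_2$, projecting $\rho$ onto the first coordinate yields a candidate run in $M_1$, and symmetrically for $M_2$; the goal is to show each such projection is in fact a legal run whose induced trace is $t\downarrow_{\alpha M_i}$.

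For the base case, the empty run at the initial state $(q_0^1, q_0^2)$ projects to the empty runs at $q_0^1$ in $M_1$ and $q_0^2$ in $M_2$, both inducing the empty trace, which trivially matches $t\downarrow_{\alpha M_i}$. For the inductive step, assume the claim holds for the length-$n$ prefix of $\rho$ and consider the $(n{+}1)$-st transition $\langle (q_1,q_2), a, (q_1',q_2')\rangle$ (or the corresponding pair of transitions through the intermediate state $(q_1',q_2')_g$ in the synchronization case). I would then case-split exactly as above. If $a \in \alpha_1 \setminus \alpha_I$, then Definition~\ref{def:parallel}(4b) gives $q_1' \in \delta_1(q_1, a)$ and $q_2' = q_2$, so the projection onto $M_1$ gains a valid transition labeled $a \in \alpha M_1$ and the projection onto $M_2$ is unchanged, matching the extensions $t\downarrow_{\alpha M_1}$ and $t\downarrow_{\alpha M_2}$. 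The case $a \in \alpha_2 \setminus \alpha_I$ is symmetric. For a synchronization pair, the first step enters the fresh state $(q_1',q_2')_g$ and the second step yields $(p_1,p_2)$ with $p_1 \in \delta_1(q_1, g*x_1)$ and $p_2 \in \delta_2(q_2, g*x_2)$; by the extended definition of restriction, $(g*x_1, g*x_2)\downarrow_{\alpha M_1} = g*x_1$ and $(g*x_1, g*x_2)\downarrow_{\alpha M_2} = g*x_2$, which align with the legal $M_1$- and $M_2$-transitions just identified.

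The one subtle point, and the place I expect to need the most care, is the auxiliary constraint transition labeled $x_1 = x_2$ inserted after every synchronization. Because $X_1 \cap X_2 = \emptyset$, this constraint mentions variables from both components and therefore lies in neither $\alpha M_1$ nor $\alpha M_2$, so it is dropped by both restrictions. I must confirm that discarding it does not break the inductive claim: this reduces to observing that this transition leaves both $M_i$-coordinates unchanged (only the intermediate-state tag $g$ is removed), so the projected runs step directly from $q_1'$ to $p_1$ in $M_1$ and from $q_2'$ to $p_2$ in $M_2$ via the transitions already exhibited in the synchronization case, with no letter appended to either projected trace. Once this bookkeeping is handled, the induction closes and Lemma~\ref{lemma:traces12} follows.
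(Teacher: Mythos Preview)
Your proposal is correct and follows essentially the same approach as the paper: an induction that case-splits on the three transition types of Definition~\ref{def:parallel} and checks that each restricts to a valid step (or no step) in the component automaton. The only cosmetic difference is that you induct on the length of the run in $M_1\|M_2$, while the paper inducts on the index into the restricted trace $t\downarrow_{\alpha M_1}$; both organize the same case analysis. One small wording fix: in the synchronization case the auxiliary constraint transition does \emph{not} leave the coordinates unchanged---it moves from the primed intermediate state $(q_1',q_2')_g$ to $(p_1,p_2)$---so the clean statement is that the two composed steps together project to the single $M_i$-step $q_i\xrightarrow{g*x_i}p_i$, with the constraint letter dropped.
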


We prove Lemma~\ref{lemma:traces}. The proof of Lemma~\ref{lemma:traces12} is similar, with the special care of communications actions, and is provided in Appendix~\ref{app:traces}.

\begin{proof}[Lemma~\ref{lemma:traces}]
Let $M = \langle Q_M, X_M, \alpha M, \delta_M, q_0^M, F_M\rangle $ and $P = \langle Q_P, X_P, \alpha P, \delta_P, q_0^P, F_P\rangle $, and denote $M\times P = \langle Q, X_M, \alpha , \delta, q_0, F\rangle$.\footnote{Recall that the set of variables $X_P$ is a subset of $X_M$.} 

Let $r = \langle q_0, c_1, q_1\rangle \ldots \langle q_{m-1}, c_m, q_m \rangle$ be the run in $M\times P $ such that $t$ is induced from $r$. 
Denote by $t_M = t\downarrow_ {\alpha M}$  the trace $t_M = (c_{i_1}, \cdots , c_{i_n})$. 

We first observe the following. If $(a_1, \cdots, a_k)$ is a trace of $M\times P$ such that $\forall i: a_i\notin \alpha M$, and $q = (q_M, q_P^0)$ is the state in $M\times P$ before reading $a_1$,  then $\forall i\geq 1: \exists q_P^i: \delta((q_M, q_P^{i-1}), a_i) = (q_M, q_P^{i})$, that is, when reading a trace that does not contain letters from $\alpha M$, the program $M\times P$ only advances on the $P$ component. This is true since by the definition of $\delta$, if $a_i$ is not in $\alpha M$, then $\delta((q_M, q_P), a_i) = (q_M, \delta_P(q_P, a_i))$. %Informally, it means that $\delta$ only advances in the $P$ component and not in the $M$ component.  

We now inductively prove that $(c_{i_1}, \cdots , c_{i_j})$ is a trace of $M$ for every $1\leq j \leq n$. In particular, for $j=n$ this means that $t_M$ is a trace of $M$.

Let $j:=1$ and denote $k:= i_1$. Then $c_1, \ldots,c_{k-1} \notin \alpha M $ since $k$ is the first index of $t$ for which $c_k \in \alpha M$. Thus, $\forall 1 < i<k: \exists q^P_i: \delta((q_0^M, q_{i-1}^P), c_i) =(q_0^M, q_i^P)$.
For $c_{i_1} = c_k \in\alpha M$, by the definition of $\delta$, we have $$\delta((q_0^M, q_{k-1}^P), c_{i_1}) = (\delta_M(q_0^M, c_{i_1}), q')$$ for some $q'\in Q_P$. Then indeed, $\langle q_0^M, c_{i_1}, \delta_M(q_0^M, c_{i_1}) \rangle$ is a run in $M$, making $(c_{i_1})$ a trace of $M$. 

Let $1< j \leq n$, and assume $t_{j-1} = (c_{i_1}, \cdots, c_{i_{j-1}})$ is a trace of $M$. Let $\langle q_0, c_{i_1}, q_1 \rangle \ldots $ $\langle q_{j-2}, c_{i_{j-1}}, q_{j-1} \rangle$ be a run that induces $t_{j-1}$. Denote $i_{j-1} = k, i_j = k+m $ for some $m>0$.
Then, as before, $c_{k+1}, \ldots , c_{k+m-1} \notin \alpha M$, thus
$\forall k<l<k+m:\exists q^P_l:    
\delta(q_{j-1}, q^P_{l-1}), c_l) = (q_{j-1}, q^P_l)$. For $c_{i_j}$ it holds that  
$\delta((q_{j-1}, q^P_{k+m-1}), c_{i_j}) = (\delta_M(q_{j-1}, c_{i_j}), q'))$ for some $q'\in Q_P$. Thus $(c_{i_1},\cdots, c_{i_j})$ is a trace of $M$, as needed. 
\qed
\end{proof}

We now discuss the \emph{feasibility} of traces in the composed system.

\begin{lemma}\label{lemma:feasible_traces}
Let  $M$ be a program and $P$ be a property, and let $t$ be a \textbf{feasible} trace of $M\times P$. Then $t\downarrow_{\alpha M}$ is a \textbf{feasible} trace of $M$.
%Let $t_M$ be a trace of $M$, and $t_p$ be a trace of $P$. Then if $t \in t_{M} \times t_{p}$ is a feasible error trace of $M\times P$, then $t_M$ is a feasible accepting trace of $M$. 
\end{lemma}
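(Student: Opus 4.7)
\medskip
\noindent\textbf{Proof plan for Lemma~\ref{lemma:feasible_traces}.}
The plan is to lift an execution of $t$ in $M\times P$ to an execution of $t\downarrow_{\alpha M}$ in $M$ by restricting the valuation sequence to the positions occupied by $\alpha M$-letters. From Lemma~\ref{lemma:traces} we already know that $t\downarrow_{\alpha M}$ is a trace of $M$, so only feasibility remains.

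First I would fix some execution $\pi=(\beta_0, c_1, \beta_1, \ldots, c_m, \beta_m)$ witnessing the feasibility of $t$, and write $t\downarrow_{\alpha M}=(c_{i_1},\ldots,c_{i_n})$ with $i_1<i_2<\cdots<i_n$. Set $i_0$ so that $\beta_{i_0}$ is the valuation immediately preceding $c_{i_1}$ (taking $\beta_{i_0}=\beta_0$ if $i_1=1$) and define the candidate execution $\pi'=(\beta'_0,c_{i_1},\beta'_1,\ldots,c_{i_n},\beta'_n)$ by $\beta'_0:=\beta_{i_0}$ and $\beta'_j:=\beta_{i_j}$ for $1\le j\le n$. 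Since Definition~\ref{def:exec} allows $\beta'_0$ to be arbitrary, the choice of the initial valuation is unconstrained.

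The crux is then to show that for every $j$, the pair $(\beta'_{j-1},\beta'_j)$ is related by $c_{i_j}$ exactly as required by Definition~\ref{def:exec}. Here the key observation is that every letter of $t$ that does not lie in $\alpha M$ must be a \emph{constraint}: by the setup of Section~\ref{sec:safety}, $P$ contains no assignment actions and its communication actions are taken from $\mathcal{G}\subseteq \alpha M$; hence $\alpha P\setminus \alpha M$ consists solely of constraints. Consulting Definition~\ref{def:conjunctive} cases (b)--(c) for the composition and Definition~\ref{def:exec} item~6 for the effect of a constraint on a valuation, the intermediate letters $c_{i_{j-1}+1},\ldots,c_{i_j-1}$ all leave the valuation unchanged, so $\beta_{i_{j-1}}=\beta_{i_{j-1}+1}=\cdots=\beta_{i_j-1}$.

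Consequently $\beta'_{j-1}=\beta_{i_j-1}$ and $\beta'_j=\beta_{i_j}$, and the transition from $\beta'_{j-1}$ to $\beta'_j$ under $c_{i_j}$ in $\pi'$ is precisely the transition from $\beta_{i_j-1}$ to $\beta_{i_j}$ under $c_{i_j}$ in $\pi$; this validates each of the cases of Definition~\ref{def:exec} (read, assignment, synchronous communication, or constraint) for $\pi'$. The main obstacle I anticipate is purely bookkeeping: justifying the claim that $\alpha P\setminus\alpha M$ contains only constraints (so that the valuations are truly frozen between consecutive $M$-letters), which follows from the global assumption on $\alpha P$ stated in Section~\ref{sec:safety} together with the absence of assignments in $P$. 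With that observation in hand the construction of $\pi'$ is direct, and $\pi'$ witnesses the feasibility of $t\downarrow_{\alpha M}$, completing the proof. \qed
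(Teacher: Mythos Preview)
Your proposal is correct and follows essentially the same route as the paper's proof: both take an execution of $t$, pick out the valuations at the positions of the $\alpha M$-letters, and use the observation that every intermediate letter must be a constraint (hence leaves the valuation unchanged) to conclude that the restricted sequence is a valid execution of $t\downarrow_{\alpha M}$. The only cosmetic difference is that the paper sets the initial valuation to $\beta_0$ rather than $\beta_{i_1-1}$, but since all letters preceding $c_{i_1}$ are constraints these coincide anyway.
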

\begin{proof}%[Proof of Lemma~\ref{lemma:feasible_traces}]
Let $t\in\T( M\times P)$ be a feasible trace. Then, there exists an execution $u$ on $t$. Denote $t = (b_1, \cdots, b_n)$ and $u = (\beta_0, b_1, \beta_1, \cdots, b_n, \beta_n)$. We inductively construct an execution $e$ on $t\downarrow _{\alpha M}$. The existence of such an execution $e$  proves that $t\downarrow _{\alpha M}$ is feasible.

Let $t\downarrow _{\alpha M} = (c_1, \cdots, c_k)$.
We set $e=(\gamma_0, c_1, \gamma_1, \cdots ,$ $ c_k, \gamma_k)$ where $\gamma_0, \ldots, \gamma_k$ are defined as follows. 
\begin{enumerate}
    \item Set $j:=0, i:=0$.
    \item Define $\gamma_0:= \beta_0$ and set $j:=j+1$. 
    \item Repeat until $j = k$ : 
    \begin{itemize}
        \item[--] Let $i'> i$ be the minimal index such that $b_{i'} = c_j$.
        \item[--] Define $\gamma_j : = \beta_{i'}$ and set $j:= j+1, i:= i'+1$.
    \end{itemize}  
\end{enumerate}
    Note that for each $i<l<i'$ is holds that $b_l$ is a constraint. Indeed, by the definition of conjunctive composition (Definition~\ref{def:conjunctive}), if $b_l$ is not a constraint, then $b_l \in \alpha M$. But in that case, $b_l$ must synchronize with some alphabet letter in $t\downarrow _{\alpha M}$, contradicting the fact that $i'$ is the minimal index for which $b_{i'} = c_j$. 
    Thus, since $u$ is an execution, and for all $i<l<i':$ $b_l$ is a constraint, 
    it holds that $\forall i\leq l <i':\beta_i = \beta_l$. In particular, $\beta_{i'-1} = \beta_i = \gamma_{j-1}$. Now, since $b_{i'} = c_j$, we can assign $\gamma_j$ to be the same as $\beta_{i'}$ and result in a valid assignment. 
    Thus, $e$ is a valid execution on $t\downarrow _{\alpha M}$, making $t\downarrow _{\alpha M}$ feasible, as needed.
\qed
\end{proof}

\begin{lemma}\label{lemma:feasible_traces12}
Let  $M_1, M_2$ be  two programs, and let $t$ be a \textbf{feasible} trace of $M_1|| M_2$. Then $t\downarrow_{\alpha M_i}$ is a \textbf{feasible} trace of $M_i$ for $i\in{1,2}$.
\end{lemma}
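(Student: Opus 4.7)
The proof will follow the same template as Lemma~\ref{lemma:feasible_traces}, combined with Lemma~\ref{lemma:traces12} which already gives us that $t\downarrow_{\alpha M_i}$ is a trace of $M_i$; what remains is to exhibit an execution witnessing its feasibility. Fix $i=1$ (the case $i=2$ is symmetric). Let $t = (b_1,\ldots,b_n)$ with feasible execution $u = (\beta_0, b_1, \beta_1, \ldots, b_n, \beta_n)$ in $M_1 \| M_2$, and let $t\downarrow_{\alpha M_1} = (c_1,\ldots,c_k)$. I will construct an execution $e = (\gamma_0, c_1, \gamma_1, \ldots, c_k, \gamma_k)$ for $t\downarrow_{\alpha M_1}$ by \emph{projecting valuations to $X_1$}: I scan $u$ exactly as in Lemma~\ref{lemma:feasible_traces}, picking the minimal index $i'$ beyond the previous pick with $b_{i'}$ contributing to $t\downarrow_{\alpha M_1}$, and defining $\gamma_j := \beta_{i'}|_{X_1}$ (and $\gamma_0 := \beta_0|_{X_1}$).

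The core lemma I need in the inductive step is that the letters $b_l$ that are skipped (those lying strictly between two consecutive chosen indices $i$ and $i'$) never modify any variable in $X_1$. Inspecting Definition~\ref{def:parallel}, the only letters in the alphabet of $M_1 \| M_2$ that are absent from $t\downarrow_{\alpha M_1}$ are (a) individual actions of $M_2$ taken from $\alpha_2 \setminus \alpha I$, which per Definition~\ref{def:parallel}(4c) operate only on the $M_2$ component and touch only $X_2$, and (b) the equality constraints $x_1 = x_2$ inserted after each synchronization, which are constraints and therefore do not modify any variable. Consequently, $\beta_{i'-1}|_{X_1} = \gamma_{j-1}$, so it suffices to check that $\beta_{i'-1} \to \beta_{i'}$ projected to $X_1$ obeys the execution clauses of Definition~\ref{def:exec} for $c_j$ as an action of $M_1$.

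The case analysis on $b_{i'}$ is routine: if $b_{i'} \in \alpha_1 \setminus \alpha I$ (an assignment, read, write, or constraint purely in $\alpha M_1$), the projection clause is already the clause satisfied by $u$ itself, since $X_1$-directed assignments use only expressions over $X_1$ and constraints in $\alpha M_1$ refer only to $X_1$. If $b_{i'}$ is a synchronization pair, say $(g?x_1, g!x_2)$, then $c_j = g?x_1$ and Definition~\ref{def:parallel}(4a) gives $\beta_{i'}(x_1) = \beta_{i'-1}(x_2)$ with all other variables preserved; restricted to $X_1$ this just says $\gamma_j$ equals $\gamma_{j-1}$ except possibly at $x_1$, which is precisely the read-action clause (an arbitrarily chosen value happens to equal $\beta_{i'-1}(x_2)$, which is allowed). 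The symmetric case $(g!x_1, g?x_2)$ restricts to a write, which preserves all variables.

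The main obstacle I anticipate is bookkeeping the synchronization case so that the \emph{arbitrary} value permitted by $M_1$'s read semantics covers the \emph{specific} value forced by the synchronization in $u$; the point to emphasize is that ``arbitrary'' in Definition~\ref{def:exec}(2) permits any value in $\dom{D}$ and is therefore automatically consistent with the projected value. A second mild subtlety is ensuring the initial valuation is correct, which is handled by $\gamma_0 := \beta_0|_{X_1}$ and Definition~\ref{def:exec}(1). Combining these, $e$ is a legal execution of $t\downarrow_{\alpha M_1}$, proving feasibility; the symmetric argument for $M_2$ completes the lemma. \qed
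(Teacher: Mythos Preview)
Your proof is correct and follows essentially the same approach as the paper: project the witnessing execution $u$ to the variables $X_1$, select the indices $i'$ corresponding to letters that survive the restriction, and argue that the skipped letters leave the $X_1$-components of the valuations unchanged. Your treatment is in fact slightly more careful than the paper's on two points: you explicitly note that the inserted equality constraints $x_1 = x_2$ are among the skipped letters (they mention $x_1 \in X_1$ but, being constraints, do not modify it), and you spell out why the synchronization case $(g?x_1,g!x_2)$ restricted to the read action $g?x_1$ is covered by the ``arbitrary value'' clause of Definition~\ref{def:exec}(2).
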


The proof of Lemma~\ref{lemma:feasible_traces12} is different from the proof of Lemma~\ref{lemma:feasible_traces}, since here we can no longer use the exact same assignments as the ones of the run on $M_1 || M_2$. In the case of $M\times P$, the set of  variables of $M\times P$ is equal to that of $M$, and the two runs only differ on the constraints that are added to the trace of $M\times P$. 
In $M_1||M_2$, on the other hand, 
$M_1$ and $M_2$ are defined over two disjoint sets of variables. Nevertheless, The proof is similar to the proof of Lemma~\ref{lemma:feasible_traces}, and is provided in Appendix~\ref{app:traces}. 

\label{sec:traces}
\section{The Assume-Guarantee Rule for Communicating Systems} \label{AGR:AR_rule_reg}\label{sec:rule}

Let $M_1$ and $M_2$ be two programs, and let $P$ be a property. The classical Assume-Guarantee (AG) proof rule~\cite{Pnueli85} assures that if we find an assumption $A$ (in our case, a communicating program) such that $M_1||A \vDash P$ and $M_2 \vDash A$ both hold, then $M_1 || M_2 \vDash P$ holds as well. For labeled transition systems over a finite alphabet (LTSs)~\cite{DBLP:conf/tacas/CobleighGP03}, the AG-rule is guaranteed to either prove correctness or return a real (non-spurious) counterexample.
The work in~\cite{DBLP:conf/tacas/CobleighGP03} relies on the \lstar\ algorithm~\cite{DBLP:journals/iandc/Angluin87} for learning an assumption $A$ for the AG-rule. In particular, \lstar\ aims at learning $A_w$, the weakest assumption for which $M_1 || A_w \vDash P$. A crucial point of this method is the fact that $A_w$ is \emph{regular}~\cite{DBLP:conf/kbse/GiannakopoulouPB02}, and thus can be learned by \lstar. For communicating programs, this is not the case, as we show in Lemma~\ref{lemma:weak_non_reg}. 

\begin{definition}[Weakest Assumption]
Let $P$ be a property and let $M_1$ and $M_2$ be two programs. The weakest assumption $A_w$ with respect to $M_1, M_2$ and $P$
has the language $\aut{L}(A_w) = \{ w\in (\alpha M_2)^* :~ M_1 || w\vDash P \}$. That is, $A_w$ is the set of all words over the alphabet of $M_2$ that together with $M_1$ satisfy $P$. 
\end{definition}

\begin{lemma}\label{lemma:weak_non_reg}
For infinite-state communicating programs, the weakest assumption $A_w$ is not always regular. 
\end{lemma}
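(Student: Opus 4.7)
The plan is to exhibit a concrete infinite-state witness $(M_1, \alpha M_2, P)$ whose weakest assumption language is non-regular. I would take $M_1$ to be a one-counter-style program with a single variable $c$ initialized to $0$, together with two communication loops at a single control state: one loop reads an arbitrary value on channel $g_a$ (into a dummy variable $d$) and then performs the assignment $c := c+1$, and the other loop reads on channel $g_b$ (into $d$) and performs $c := c-1$. The property $P$ then asserts that the counter never becomes negative, for instance by driving a computation to a rejecting sink as soon as the constraint $c < 0$ becomes satisfiable after a decrement, and staying accepting along the complementary branch $c \geq 0$.

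Next, I would fix $\alpha M_2 = \{g_a!y,\, g_b!y\}$, the dual write actions that can synchronize with $M_1$'s reads on $g_a$ and $g_b$. For any word $w \in (\alpha M_2)^*$, the semantics of parallel composition together with Lemma~\ref{lemma:feasible_traces12} reduce $M_1 \,||\, w \vDash P$ to asking whether the induced $\pm 1$ counter trajectory on $c$, where each $g_a!y$ contributes $+1$ and each $g_b!y$ contributes $-1$, ever dips below $0$ along some execution. Identifying $g_a!y$ with $a$ and $g_b!y$ with $b$ yields
\[
\aut{L}(A_w) \;=\; \{w \in \{a,b\}^* : \text{every prefix of } w \text{ has at least as many } a\text{'s as } b\text{'s}\}.
\]

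The closing step is a standard non-regularity argument. Intersecting $\aut{L}(A_w)$ with the regular set $a^* b^*$ yields the language $\{a^n b^m : n \geq m\}$; applying the pumping lemma to a word $a^p b^p$ produces a pumped-down witness $a^{p-i}b^p$ with $i>0$ that must still lie in the intersection, contradicting the prefix condition. Since regular languages are closed under intersection with regular sets, $\aut{L}(A_w)$ itself cannot be regular, which establishes the lemma.

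The principal obstacle is not the non-regularity argument but the bookkeeping required to verify that the chosen $(M_1, \alpha M_2, P)$ genuinely conforms to the definitions in Section~\ref{sec:communicating}: that $P$ can be written as a deterministic and complete program per Definition~\ref{def:detAndComplete} (so both $c \geq 0$ and $c < 0$ must appear as complementary constraint edges out of the relevant state), that the synchronization between $M_1$'s reads and the writes appearing in $w$ matches Definition~\ref{def:parallel} (including the equality edge $x_1 = x_2$ inserted after each synchronous pair), and that infeasibility arising from the error branch of $P$ coincides exactly with the counter dropping below zero. Once these conformance checks are in place, the reduction to the prefix-balanced language is immediate.
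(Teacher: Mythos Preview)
Your route is a genuine alternative to the paper's. The paper keeps the counting inside $\alpha M_2$: it takes $\alpha M_2=\{x{:=}0,\,y{:=}0,\,x{:=}x{+}1,\,y{:=}y{+}1,\,\mathit{sync}\}$, lets $M_1$ contribute only the $\mathit{sync}$ action, and has $P$ test $x=y$ versus $x\neq y$ \emph{after} $\mathit{sync}$. Intersecting $\aut{L}(A_w)$ with the regular set $\{x{:=}0\}\cdot\{y{:=}0\}\cdot\{x{:=}x{+}1,\,y{:=}y{+}1\}^{*}\cdot\{\mathit{sync}\}$ then yields the equal-count language, and pumping finishes. You instead keep the counter in $M_1$ and drive it through channels $g_a,g_b$; once the construction is right, the Dyck-prefix language drops out and your pumping argument via $a^{*}b^{*}$ is perfectly sound.

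The gap is initialization, and it is not just bookkeeping. By Definition~\ref{def:exec} the initial valuation $\beta_0$ is arbitrary, and by Definition~\ref{def:conjunctive} the constraint edges of $P$ interleave freely with the actions of $M_1\,\|\,w$. Hence if $M_1$ initializes via the \emph{assignment} $c{:=}0$ and $P$ already offers the complementary branches $c\geq 0$ and $c<0$ out of its initial state, then for \emph{every} $w$ the interleaving $(c{<}0)\cdot(c{:=}0)\cdots$, completed to an accepting $M$-state, is a feasible error trace (pick $\beta_0(c)<0$); this gives $\aut{L}(A_w)=\emptyset$, not the prefix-balanced language. The paper dodges this because $P$ cannot test anything until $\mathit{sync}$ fires, and the regular language it intersects with forces $x{:=}0,\,y{:=}0$ to occur before $\mathit{sync}$ in every word considered. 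Your construction can be patched---for instance, replace the assignment $c{:=}0$ by the \emph{constraint} $c{=}0$ as $M_1$'s first edge (so any interleaved $c{<}0$ before it becomes jointly infeasible), or make $P$ synchronise on a $g_a/g_b$ action before it is permitted to test---but you should carry one of these through explicitly, since the ``coincides exactly with the counter dropping below zero'' claim you flagged is precisely what fails under the natural reading of your setup.
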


\begin{proof}
Consider the programs $M_1$ and $M_2$, and the property $P$ of Figure~\ref{fig:not_regular}. Let $\alpha M_2=\{ x:=0,~ y:=0, ~ x:= x+1, ~ y:= y+1, ~ \mathit{sync} \}$. 
Note that in order to satisfy $P$, after the \emph{sync} action, a trace $t$ must pass the test $x=y$.
Also note that the weakest assumption $A_w$ does not depend on the behavior of $M_2$, but only on its alphabet. 
Assume by way of contradiction that $\aut{L}(A_w)$ is a regular language, and consider the language
$$L = \{ x:=0 \} \cdot \{ y:=0 \} \cdot \{ x:=x+1, y:=y+1 \}^* \cdot \{ \mathit{sync}  \}$$
%$$L = (\alpha M_2)^* \cdot \{ x:=0 \} \cdot \{ y:=0 \} \cdot \{ x:=x+1, y:=y+1 \}^* \cdot \{ \mathit{sync}  \}$$ 
By closure properties of regular languages, it holds that 
$L$ is a regular language, and thus following our assumption, we have that $L \cap \aut{L}(A_w)$ is regular as an intersection of two regular languages. However $L \cap \aut{L}(A_w)$ is the set of all words that after the initialization $\{ x:=0\} \{y:=0\}$, contain equally many actions of the form $x:=x+1$ and $y := y+1$. That is 
%$$   L \cap \aut{L}(A_w) =  (\alpha M_2)^* \cdot \{ x:=0 \} \cdot \{ y:=0 \} \cdot L_{\mathit{eq}} \cdot \{ \mathit{sync} \}$$
$$   L \cap \aut{L}(A_w) =   \{ x:=0 \} \cdot \{ y:=0 \} \cdot L_{\mathit{eq}} \cdot \{ \mathit{sync} \}
$$
 where 
\begin{align*}
    L_{\mathit{eq}} = \{ u\in \{ x:=x+1, y:=y+1  \}^* ~:~~~~~~~~~~~~~ \\\text{num of } x:=x+1 \text{ in } u  \text{ is equal to num of } y:=y+1 \text{ in } u  \}
\end{align*}

$L_{\mathit{eq}}$ is not regular since the pumping lemma does not hold for it. For the same reason $ L \cap \aut{L}(A_w)$ is also not regular, contradicting our assumption that $ \aut{L}(A_w)$ is regular. \qed
\end{proof}

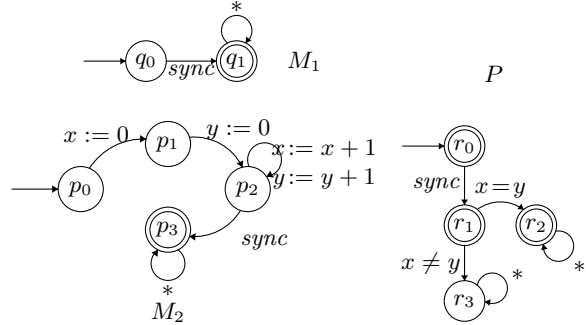
\begin{figure}[h] 
	\centering
	\begin{tabular}{cc}
    
    \begin{tikzpicture}[scale=0.09]
    
\tikzstyle{every node}+=[inner sep=0pt]
\draw [black] (17.2,-18.5) circle (3);
\draw (17.2,-18.5) node {$q_0$};
\draw [black] (30.6,-18.5) circle (3);
\draw (30.6,-18.5) node {$q_1$};
\draw (40.6,-18.5) node {$M_1$};
\draw [black] (30.6,-18.5) circle (2.4);
\draw [black] (8,-18.5) -- (14.2,-18.5);
\fill [black] (14.2,-18.5) -- (13.4,-18) -- (13.4,-19);
\draw [black] (20.2,-18.5) -- (27.6,-18.5);
\fill [black] (27.6,-18.5) -- (26.8,-18) -- (26.8,-19);
\draw (23.9,-19) node [below] {$\mathit{sync}$};
\draw [black] (29.277,-15.82) arc (234:-54:2.25);
\draw (30.6,-11.25) node [above] {$*$};
\fill [black] (31.92,-15.82) -- (32.8,-15.47) -- (31.99,-14.88);
\end{tikzpicture}

&

$P$

\\
\\

	%\hspace{0.3cm} &
		%	\begin{minipage}{.2\textwidth}
\begin{tikzpicture}[scale=0.1]\tikzstyle{every node}+=[inner sep=0pt]
\draw [black] (17.2,-18.5) circle (3);
\draw (17.2,-18.5) node {$p_0$};
\draw [black] (28.8,-12.5) circle (3);
\draw (28.8,-12.5) node {$p_1$};
\draw [black] (39.4,-18.5) circle (3);
\draw (39.4,-18.5) node {$p_2$};
\draw [black] (28.8,-23.9) circle (3);
\draw (28.8,-23.9) node {$p_3$};
\draw (28.8,-34.9) node {$M_2$};
\draw [black] (28.8,-23.9) circle (2.4);
\draw [black] (8,-18.5) -- (14.2,-18.5);
\fill [black] (14.2,-18.5) -- (13.4,-18) -- (13.4,-19);
\draw [black] (18.347,-15.745) arc (147.28808:87.41167:8.506);
\fill [black] (25.89,-11.84) -- (25.11,-11.31) -- (25.07,-12.31);
\draw (19.27,-12.28) node [above] {$x:=0$};
\draw [black] (31.64,-11.606) arc (95.36446:25.61255:7.098);
\fill [black] (38.7,-15.6) -- (38.81,-14.67) -- (37.91,-15.1);
\draw (38.14,-11.99) node [above] {$y:=0$};
\draw [black] (39.581,-15.517) arc (204.25512:-83.74488:2.25);
\draw (42,-15.13) node [right] {
\begin{tabular}{c}
			$x\!:=x+1$\\ $y\!:=y+1$
			\end{tabular}
};
\fill [black] (41.88,-16.83) -- (42.81,-16.96) -- (42.4,-16.05);
\draw [black] (38.444,-21.321) arc (-30.73866:-95.26966:7.149);
\fill [black] (31.64,-24.79) -- (32.39,-25.36) -- (32.49,-24.36);
\draw (41.62,-24.56) node [below] {$\mathit{sync}$};
\draw [black] (30.123,-26.58) arc (54:-234:2.25);
\draw (28.8,-31.15) node [below] {$*$};
\fill [black] (27.48,-26.58) -- (26.6,-26.93) -- (27.41,-27.52);

\end{tikzpicture}
		
		\hspace{0.1cm} &
		%		\end{minipage}
		%		\begin{minipage}{.2\textwidth}

\begin{tikzpicture}[scale=0.09]

\tikzstyle{every node}+=[inner sep=0pt]
\draw [black] (28.8,-6.8) circle (3);
\draw [black] (28.8,-6.8) circle (2.4);

\draw (28.8,-6.8) node {$r_0$};
\draw [black] (28.8,-18.5) circle (2.4);
\draw [black] (28.8,-18.5) circle (3);
\draw (28.8,-18.5) node {$r_1$};
\draw [black] (39.4,-18.5) circle (3);
\draw (39.4,-18.5) node {$r_2$};
\draw [black] (39.4,-18.5) circle (2.4);
\draw [black] (28.8,-29.7) circle (3);
\draw (28.8,-29.7) node {$r_3$};
\draw [black] (19.6,-6.8) -- (25.8,-6.8);
\fill [black] (25.8,-6.8) -- (25,-6.3) -- (25,-7.3);
\draw [black] (28.8,-9.8) -- (28.8,-15.5);
\fill [black] (28.8,-15.5) -- (29.3,-14.7) -- (28.3,-14.7);
\draw (28.3,-12.65) node [left] {$\mathit{sync}$};
\draw [black] (30.657,-16.187) arc (126.42204:53.57796:5.799);
\fill [black] (37.54,-16.19) -- (37.2,-15.31) -- (36.6,-16.11);
\draw (34.1,-14.55) node [above] {$x\!=\!y$};
\draw [black] (42.256,-19.378) arc (100.63658:-187.36342:2.25);
\draw (44.84,-24.6) node [right] {$*$};
\fill [black] (40.44,-21.3) -- (40.1,-22.18) -- (41.08,-22);
\draw [black] (28.8,-21.5) -- (28.8,-26.7);
\fill [black] (28.8,-26.7) -- (29.3,-25.9) -- (28.3,-25.9);
\draw (28.3,-24.1) node [left] {$x\neq y$};
\draw [black] (30.679,-27.376) arc (168.77514:-119.22486:2.25);
\draw (35.65,-25.84) node [right] {$*$};
\fill [black] (31.79,-29.78) -- (32.47,-30.42) -- (32.67,-29.44);

\end{tikzpicture}

	%	\\
	%	\\
	%	$M_1$ & $\quad M2$& $\quad P$
		
	\end{tabular}
\caption{A system for which the weakest assumption is not regular.}
\label{fig:not_regular}%\vspace{-0.9cm}
\end{figure}

To cope with this difficulty, we change the focus of learning. Instead of learning the (possibly) non-regular language of $A_w$, we learn ${\cal T}(M_2)$, the set of accepted traces of $M_2$. This language is guaranteed to be regular, as it is represented by the automaton $M_2$.

\subsection{Soundness and Completeness of the Assume-Guarantee Rule for Communicating Systems}
\label{sec:soundAGR}
Since we have changed the goal of learning, we first show that in the setting of communicating systems, the assume-guarantee rule is sound and complete.

\begin{theorem}
For communicating programs, the Assume- Guarantee rule is sound and complete. That is, 
\begin{itemize}
    \item Soundness: 
    for every communicating program $A$ such that $\alpha A \subseteq \alpha M_2$,
    if $M_1 || A \vDash P$ and  $\T(M_2) \subseteq \T(A)$ then $ M_1 || M_2 \vDash P$.
    \item Completeness: If $M_1 || M_2 \vDash P$ then there exists an assumption $A$ such that $M_1 || A\vDash P$ and $\T(M_2) \subseteq \T(A)$. 
\end{itemize} 
\label{thrm:soundness}
\end{theorem}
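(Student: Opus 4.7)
The two directions are of very different difficulty: completeness is essentially trivial (choose $A := M_2$), while soundness is the substantive claim and I would prove it by contrapositive, lifting a feasible error trace of $(M_1||M_2)\times P$ to one of $(M_1||A)\times P$ using the trace projection lemmas of Section~\ref{sec:traces} together with the hypothesis $\T(M_2)\subseteq \T(A)$. For completeness, with $A := M_2$ the containment $\T(M_2)\subseteq \T(A)$ is vacuous and $M_1||A \vDash P$ reduces directly to the hypothesis.

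\textbf{Soundness (contrapositive).} Assume $M_1||M_2 \not\vDash P$, so by definition there is a feasible accepted trace $t$ of $(M_1||M_2)\times P$, witnessed by an execution $u=(\beta_0,b_1,\beta_1,\ldots,b_n,\beta_n)$. I would proceed in three steps. \emph{(i)} Applying Lemma~\ref{lemma:traces} with $M := M_1||M_2$ shows that $\hat t := t\downarrow_{\alpha(M_1||M_2)}$ is an accepted trace of $M_1||M_2$; applying Lemma~\ref{lemma:traces12} to $\hat t$ then yields $t_1 := \hat t\downarrow_{\alpha M_1}$ and $t_2 := \hat t\downarrow_{\alpha M_2}$ as accepted traces of $M_1$ and $M_2$ respectively, and the feasibility lemmas~\ref{lemma:feasible_traces} and~\ref{lemma:feasible_traces12} guarantee that both are feasible. \emph{(ii)} Since $t_2 \in \T(M_2) \subseteq \T(A)$, the trace $t_2$ is also an accepted trace of $A$; let $\rho_A$ be an accepting run of $A$ on $t_2$ and $\rho_1$ an accepting $M_1$-run on $t_1$. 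Interleaving $\rho_1$ and $\rho_A$ according to the same scheduling and synchronizations as the original run of $M_1||M_2$ on $\hat t$ produces a run of $M_1||A$ whose induced trace is syntactically $\hat t$. \emph{(iii)} Composing this run with the $P$-component of the original run of $(M_1||M_2)\times P$ yields an accepted run of $(M_1||A)\times P$; reusing the very same execution $u$ proves feasibility, so $M_1||A \not\vDash P$, contradicting the hypothesis $M_1||A \vDash P$.

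\textbf{Main obstacle.} The technically delicate point is step~(ii): ensuring that the synchronization structure of $\hat t$ is preserved when $M_2$ is replaced by $A$. Each synchronous communication $(g?x,g!y)$ appearing in $\hat t$ must lie in the new interface $\alpha M_1 \cap \alpha A$, not just in the old $\alpha M_1 \cap \alpha M_2$. This is exactly where $\T(M_2)\subseteq \T(A)$ is used essentially: every communication letter of $\hat t$ that touches the $M_2$-side also appears in $t_2$, and because $t_2 \in \T(A)$ it must already be a letter of $\alpha A$, so the interfaces align. The feasibility claim in step~(iii) is then routine, since by Definition~\ref{def:exec} the effect of each letter on valuations depends only on the letter itself and not on which automaton supplied it; thus the same execution $u$ (modulo the auxiliary $x_1 = x_2$ constraints introduced by Definition~\ref{def:parallel}, which are automatically satisfied) realizes the lifted trace.
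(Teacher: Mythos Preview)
Your proposal is correct and follows essentially the same route as the paper: completeness via $A:=M_2$, and soundness by contrapositive using Lemmas~\ref{lemma:traces}--\ref{lemma:feasible_traces12} to project a feasible error trace of $(M_1||M_2)\times P$ down to $t_2\in\T(M_2)\subseteq\T(A)$ and then conclude that the same $t$ is a feasible error trace of $(M_1||A)\times P$. In fact you spell out the reconstruction step~(ii)--(iii) and the interface-alignment issue more carefully than the paper, which simply asserts that ``$t$ is an error trace in $(M_1 || A) \times P$'' once $t_2\in\T(A)$ is established.
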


\begin{proof}
\emph{Soundness}. Assume by way of contradiction that there exists an assumption $A$ such that $M_1||A \vDash P$ and $\T(M_2) \subseteq \T(A)$, but $M_1||M_2 \nvDash P$. 
Therefore, there exists an error trace $t\in (M_1 || M_2) \times P$. 
By Lemma~\ref{lemma:traces} and Lemma~\ref{lemma:traces12}, it holds that $t_2 = t\downarrow _{\alpha M_2}\in \T(M_2)$ and 
by Lemma~\ref{lemma:feasible_traces} and Lemma~\ref{lemma:feasible_traces12} it 
holds that $t_2$
is feasible. 
Since $\T(M_2) \subseteq \T(A)$, it holds that 
$t_2\in \T(A)$ and thus $t$
is an error trace in $(M_1 || A) \times P$, contradicting $M_1 || A \vDash P$. 

\emph{Completeness}. If $M_1||M_2 \vDash P$, then for $A = M_2$ it holds that $M_1 || A\vDash P$ and $\T(M_2) \subseteq \T(A)$.
\qed
\end{proof}

\subsection{Weakest Assumption - Special Cases} \label{sec:weak_reg}

As we have proven in Lemma~\ref{lemma:weak_non_reg}, in the context of communicating systems, the weakest assumption is not always regular, and so we changed our learning goal to the language of $M_2$, that is, to $\mathcal{T}(M_2)$. 
We now consider special cases for which the weakest assumption is guaranteed to be regular, and can therefore be used as a target for the learning process. This may result in the generation of a more general assumption.

We first show that if all components are constraints-free, then the weakest assumption is guaranteed to be regular. 
Intuitively, this is since when there are no constraints in the composed system (that includes also the specification), all traces are feasible. Thus we can reduce the problem of finding a weakest assumption for communicating programs to finding the weakest assumption for finite state automata.

\begin{lemma}\label{lemma:weakest_reg}
Let $P$ be a property and $M_1$ and $M_2$ be communicating programs such that $\alpha P$, $\alpha M_1$ and $\alpha M_2$ 
do not contain constraints. 
%Let $M_2$ be some communicating program. 
Then, the weakest assumption $A_w$ with respect to $M_1, M_2$ and $P$, is regular.
\end{lemma}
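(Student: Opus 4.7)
The plan is to reduce the statement to the classical finite-state LTS setting, where the weakest assumption is already known to be regular (\cite{DBLP:conf/kbse/GiannakopoulouPB02}), and to verify that this reduction goes through whenever no alphabet contains constraints.

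First I would observe that under the hypothesis of the lemma, feasibility of every trace we have to consider is automatic. Indeed, inspecting Definition~\ref{def:exec}, only clause~(6) -- the constraint clause -- can ever block the extension of a valuation sequence, and by assumption no constraint letter appears in $\alpha M_1$, $\alpha M_2$ or $\alpha P$. The one subtle point is that the parallel composition of Definition~\ref{def:parallel} inserts auxiliary equality constraints $x_1 = x_2$ immediately after each synchronous communication; but these are satisfied by the very semantics of synchronous communication (as the paper's own footnote after Definition~\ref{def:parallel} notes), so they never block a concrete run. Consequently, for the purposes of this lemma, $M_1 || w \vDash P$ reduces to the purely language-theoretic condition that $(M_1 || M_w) \times P$ has no accepted trace at all.

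Next I would construct a finite automaton recognising $A_w$. Fixing the finite alphabet $\Sigma = \alpha M_1 \cup \alpha M_2 \cup \alpha P$ (including the trivially-satisfied equality symbols), I would view $M_1$ and $P$ as ordinary NFAs over $\Sigma$, where letters outside a given automaton's own alphabet are treated as silent self-loops. A standard product-with-synchronisation construction over the interface alphabet then yields an NFA $\mathcal{B}$ over $\Sigma$ whose language consists of exactly those sequences arising as accepted traces of $(M_1 || M_w) \times P$ for some $w \in (\alpha M_2)^*$. Projecting $L(\mathcal{B})$ onto $\alpha M_2$ gives the set $L_{\mathrm{bad}}$ of ``bad'' $w$'s, and $A_w = (\alpha M_2)^* \setminus L_{\mathrm{bad}}$. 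Since regular languages are closed under product, projection, and complementation, $A_w$ is regular.

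The main obstacle is getting the product-and-project step precise: one has to (i) allow letters outside the interface to interleave freely between the two components, (ii) force synchronisation exactly on the interface alphabet (including the paired synchronous actions $(g?x_1, g!x_2)$ introduced in Definition~\ref{def:parallel}), and (iii) ensure that the projection onto $\alpha M_2$ reflects exactly which letters of a trace can originate from $M_w$. This is structurally identical to the classical shuffle/synchronisation construction for regular languages used in \cite{DBLP:conf/kbse/GiannakopoulouPB02}, and it is precisely the constraint-free hypothesis that lets us bypass the semantic reasoning over valuations which obstructed regularity in Lemma~\ref{lemma:weak_non_reg}.
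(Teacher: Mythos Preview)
Your proposal is essentially correct and rests on the same key observation as the paper: once the alphabets contain no constraints, every trace is feasible, so $M_1 \| w \vDash P$ becomes a purely syntactic language-containment condition and the weakest assumption can be obtained by standard regular closure operations.

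The route differs in presentation. The paper builds the ``good'' automaton directly: it takes $M_1 \times P$, declares a state accepting when its $M_1$-component is rejecting or its $P$-component is accepting, adds self-loops for $\alpha M_2 \setminus (\alpha M_1 \cup \alpha P)$, replaces letters of $\alpha M_1 \cup \alpha P$ outside $\alpha M_2$ by $\varepsilon$, and determinises. It then proves by induction on $t$ that the resulting automaton tracks exactly the set of $(M_1 \times P)$-states reachable under synchronisation with $t$. Your construction is the dual: build the ``bad'' language via product, project to $\alpha M_2$, and complement. Both are fine; the paper's version is a little more explicit and avoids an extra complementation.

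One caution: you lean on \cite{DBLP:conf/kbse/GiannakopoulouPB02} as if it already settles the reduced problem, but the paper points out (immediately after this lemma) that \cite{DBLP:conf/kbse/GiannakopoulouPB02} treats only prefix-closed LTSs, and part of the point here is to cover general (non-prefix-closed) communicating programs. Your own product--project--complement argument is self-contained and does not actually need that citation, so you should present it as a standalone construction rather than as a reduction to the cited result.
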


\begin{proof}
We construct a communicating system $A$ over $\alpha M_2$, based on $M_1 \times P$, as follows. A state in $A$ is accepting if its $M_1$ component is rejecting, or its $P$ component is accepting. 
We add self loops to all states, labeled by the alphabet of $M_2$ that does not synchronize with $M_1$ and $P$. We replace the alphabet of $M_1$ and $P$ that is not also in $M_2$ with $\varepsilon$-transitions (and leave the rest unchanged). Finally, we determinize the result. We therefore have that $t$ is an accepting trace of $A$ 
iff all the states that $A$ reaches when reading $t$ are either not accepting in $M_1$, or accepting (hence, not rejecting) in~$P$. For the complete details, see Appendix~\ref{app:traces}. 
Let $\aut{L}(A)$ be the set of all traces of $A$.

\textbf{Claim 1}
$\aut{L}(A) = \aut{L}(A_w)$.

Note that in case that there are no constraints, all traces are feasible, that is, $\aut{L}(A) = \aut{T}(A)$. To show the correctness of {Claim 1}, 
we state the following claim, which can be proved by induction on the trace $t$. 
See Appendix~\ref{app:traces} for the full proof. 

\textbf{Claim 2}
For a trace $t\in(\alpha M_2)^*$, the program $A$ reaches the same set of states $S$ when reading $t$, as the set of states that $M_1 \times P$ reaches given $t$.

{Claim $1$} then follows from the definition of the set of accepting states of $A$. 
Then, to conclude the proof, we have that $\mathcal{L}(A_w) = \mathcal{T}(A)$, and therefore is regular. 
\qed
\end{proof}

In~\cite{DBLP:conf/kbse/GiannakopoulouPB02} the authors prove that the weakest assumption is regular for the setting of LTSs, which are  prefix-closed finite-state automata. Their proof relies on the fact that LTSs are prefix closed, and they construct the weakest assumption accordingly. 
Our proof holds for general communicating programs, and not only for prefix-closed ones. 
We can therefore extend the result of~\cite{DBLP:conf/kbse/GiannakopoulouPB02} to the case of general finite-state automata. Corollary~\ref{lemma:weakest_reg_automata} follows from the fact that we can refer to finite-state automata as communicating programs without constraints; 
when applying the composition operators $||$ and $\times$, we relax the requirements of synchronization on read-write channels, to requiring synchronization on mutual alphabet letters. 

\begin{corollary}\label{lemma:weakest_reg_automata}
Let $\aut{A}_1$, $\aut{A}_2$ and $\aut{A}_p$ be finite-state automata
such that $\alpha \aut{A}_p \subseteq (\alpha \aut{A}_1 \cup \alpha \aut{A}_2)$. Then, the weakest assumption $A_w$ for $\aut{A}_1, \aut{A}_2$ and $\aut{A}_p$ is regular.   

\end{corollary}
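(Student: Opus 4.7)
The plan is to obtain Corollary~\ref{lemma:weakest_reg_automata} as a direct specialization of Lemma~\ref{lemma:weakest_reg} by viewing each finite-state automaton as a constraint-free, assignment-free, communication-free communicating program, and then reinterpreting the composition operators so that synchronization happens on shared alphabet letters rather than on read/write pairs over channels. The corollary is essentially a re-reading of the lemma under this translation, so the bulk of the argument is already done; what remains is to justify that the translation is faithful.

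First I would fix the encoding: given $\aut{A}_i = \langle Q_i, \alpha_i, \delta_i, q_0^i, F_i\rangle$, define $M_i = \langle Q_i, \emptyset, \alpha_i, \delta_i, q_0^i, F_i\rangle$, i.e.\ a communicating program whose variable set is empty and whose alphabet consists purely of ``atomic'' letters (no constraints, no assignments, no read/write actions). Since there are no variables, every trace is trivially feasible and $\T(M_i) = \aut{L}(\aut{A}_i)$. Then I would observe that under the modified composition convention stated just before the corollary (synchronize on shared alphabet letters rather than on read/write pairs), the definitions of $M_1 \| M_2$ and $M \times P$ specialize exactly to the classical product automata on shared alphabet letters with interleaving on private ones.

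Next, I would essentially re-run the proof of Lemma~\ref{lemma:weakest_reg} verbatim in this setting. That is, I would build $A$ from $\aut{A}_1 \times \aut{A}_p$ by (i) marking a state accepting iff its $\aut{A}_1$-component is rejecting or its $\aut{A}_p$-component is accepting, (ii) adding self-loops labeled by letters of $\alpha \aut{A}_2$ that are shared with neither $\aut{A}_1$ nor $\aut{A}_p$, (iii) replacing letters of $\alpha \aut{A}_1 \cup \alpha \aut{A}_p$ outside $\alpha \aut{A}_2$ with $\varepsilon$-transitions, and (iv) determinizing. The analogue of Claim 2 in the proof of Lemma~\ref{lemma:weakest_reg}---that on input $t \in (\alpha \aut{A}_2)^*$, the set of states reached in $A$ coincides with the set of states reached in $\aut{A}_1 \times \aut{A}_p$ when ``free'' transitions of $\aut{A}_1$ and $\aut{A}_p$ are permitted---goes through by the same inductive argument on $|t|$, since the only thing that changes is what counts as a synchronizing letter, and this is tracked uniformly.

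The main (mild) obstacle is verifying that the case analysis in Claim 2 still covers every transition type once synchronization is keyed on shared alphabet letters rather than on read/write pairs; in the original proof, the communication-action case uses the specific form of the synchronization alphabet in Definition~\ref{def:parallel}, whereas here it collapses to a plainer ``common letter synchronizes, private letter interleaves'' dichotomy. Once that case analysis is re-checked, feasibility is automatic (no constraints), so $\aut{L}(A_w) = \T(A) = \aut{L}(A)$ is regular, establishing the corollary.
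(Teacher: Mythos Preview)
Your proposal is correct and takes essentially the same approach as the paper: the paper simply remarks that the corollary follows from Lemma~\ref{lemma:weakest_reg} by viewing finite-state automata as constraint-free communicating programs and relaxing synchronization to shared alphabet letters. If anything, you spell out more of the translation and the re-verification of Claim~2 than the paper does.
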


We now consider a more general case for which we can find regular weakest assumptions. 
To this end, we define a refined notion of the weakest assumption, in which we do not care what the behavior of  $A_w$ is, for traces that are not feasible. We call it the \emph{semantic weakest assumption}. In the following, for a trace $t\in(\alpha M_2)^*$
we denote $S_t := \aut{T}((M_1 || t) \times P) $.

\begin{definition}[Semantic weakest assumption]
Let $P$ be a property and let $M_1$ and $M_2$ be two communicating programs. A semantic weakest assumption $A_s$ with respect to $M_1, M_2$ and $P$
has the following property:
\emph{
$\forall t\in (\alpha M_2)^*$ such that $S_t$ contains feasible traces, it holds that $t\in \aut{T}(A_s)$ iff $M_1 || t \vDash P$.  }

\end{definition}

That is, a semantic weakest assumption $A_s$ contains all words over the alphabet of $M_2$ that together with $M_1$ satisfy $P$ non-vacuously. 
 Note that there can be more than one such assumption, as we do not define its behavior for sets $S_t$ that include no infeasible traces. 

\begin{lemma}\label{lemma:one_set}
For a property $P$ and communicating programs $M_1$ and $M_2$ we have the following: 
    If for every $t\in (\alpha M_2)^*$ it holds that
either all traces in $S_t$ are feasible, or all of them are infeasible,
then
there exists a regular semantic weakest assumption. 
\end{lemma}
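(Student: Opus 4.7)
The plan is to reuse the construction from the proof of Lemma~\ref{lemma:weakest_reg}, this time treating every constraint occurring in $M_1$, $M_2$, or $P$ as an ordinary alphabet letter (that is, syntactically rather than semantically). Starting from $M_1 \times P$, I would replace by $\varepsilon$ every letter of $\alpha M_1 \cup \alpha P$ that is not in $\alpha M_2$, add self-loops for letters of $\alpha M_2$ that do not synchronize with $\alpha M_1 \cup \alpha P$, declare a state accepting whenever its $M_1$-component is non-accepting or its $P$-component is accepting, and finally perform $\varepsilon$-removal and subset-construction determinization. Call the resulting DFA $A$; since it is finite, $\T(A)$ is a regular language over $\alpha M_2$.

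By the argument of Claim~2 in the proof of Lemma~\ref{lemma:weakest_reg}, the state of $A$ reached on a word $t \in (\alpha M_2)^*$ represents exactly the set of states of $M_1 \times P$ reachable on $t$. Hence $t \in \T(A)$ iff every $M_1 \times P$-state reachable on $t$ is either non-accepting in $M_1$ or accepting in $P$, which is precisely the statement that $(M_1 || M_t) \times P$ has no syntactically accepted trace, i.e., $S_t = \emptyset$.

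It then remains to verify the defining property of a semantic weakest assumption. Let $t \in (\alpha M_2)^*$ be a word for which $S_t$ contains a feasible trace. By the hypothesis of the lemma, every trace in $S_t$ is then feasible, and in particular $S_t \neq \emptyset$. Now $M_1 || t \vDash P$ holds iff $(M_1 || t) \times P$ has no feasible accepted trace, and under the hypothesis this is equivalent to $S_t = \emptyset$, which fails. So $M_1 || t \not\vDash P$; and on the other hand $t \notin \T(A)$ by the previous paragraph since $S_t \neq \emptyset$. Both sides of the biconditional ``$t \in \T(A) \Leftrightarrow M_1 || t \vDash P$'' are therefore false, so the biconditional holds on every $t$ that the definition of a semantic weakest assumption constrains. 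Thus $A$ serves as a regular semantic weakest assumption.

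The hard part is conceptual rather than technical: the construction of $A$ is purely syntactic and ignores feasibility, whereas $\vDash$ is a semantic relation. The uniform-feasibility hypothesis is what bridges this gap: on the traces the definition cares about (those with a feasible witness in $S_t$), ``$(M_1 || t) \times P$ has a feasible error trace'' and ``$(M_1 || t) \times P$ has an error trace'' coincide, so the syntactic characterization computed by $A$ becomes semantically correct.
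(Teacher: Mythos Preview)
Your proposal is correct and takes the same approach as the paper: reuse the automaton $A$ built in the proof of Lemma~\ref{lemma:weakest_reg} (now treating constraints as ordinary alphabet letters) and verify that it meets the definition of a semantic weakest assumption. Your verification is in fact a bit more direct than the paper's two-implication argument---you observe that on any $t$ with a feasible trace in $S_t$ both sides of the required biconditional are already false (since $S_t \neq \emptyset$ gives $t \notin \T(A)$, and a feasible member of $S_t$ is by definition a feasible error trace, so $M_1 || t \not\vDash P$); note that this last step does not actually require the uniform-feasibility hypothesis, so your appeal to it is harmless but superfluous.
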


\begin{proof}
We show that the assumption $A$ from the proof of Lemma~\ref{lemma:weakest_reg} is such a semantic weakest assumption. We prove that under the terms of Lemma~\ref{lemma:one_set}, for every trace $t\in (\alpha M_2)^*$ such that $S_t$ contains feasible traces, it holds that $t\in \aut{T}(A)$ iff $M_1 || t \vDash P$.

Let $t\in \aut{T}(A)$. 
Then, according to the construction of $A$, and similar to the proof of Lemma~\ref{lemma:weakest_reg}, all paths in $(M_1 || t)\times P$ reach an accepting state, and thus $M_1 || t \vDash P$. 

For the other direction, let $t\in (\alpha M_2)^*$ such that $M_1 || t\vDash P$. According to the terms of the lemma, it holds that either all traces in $S_t$ are not feasible, or that all of them are feasible. 
If the former holds, then we have no requirement on~$t$. 

If the latter holds, then, since $M_1 || t\vDash P$, then all paths in $(M_1 || t) \times P$ reach an accepting state, and due to the construction of $A$, it holds that $t\in \aut{T}(A)$.  
\qed
\end{proof}

Note that
we restrict the traces in $S_t$, since
if $S_t$ contains both feasible and infeasible traces, then we can no longer guarantee that all paths of $(M_1||t)\times P$ reach an accepting state, even if $M_1 || t\vDash P$. 
For general communicating programs, for which this restriction does not hold, we can use the proof as in Lemma~\ref{lemma:weak_non_reg} to show that there are cases in which there is no regular semantic weakest assumption, that is, every semantic weakest assumption is not regular. 

As a special case of Lemma~\ref{lemma:one_set} we have the following. 
\begin{corollary} \label{cor:one_trace}
In case the behavior of $(M_1 || M_2) \times P $ is deterministic in the sense that for every trace $t\in (\alpha M_2) ^*$ there is only one corresponding trace $t' \in (M_1 || M_2) \times P $ such that $t' \downarrow_{\alpha M_2} = t$, then there exists a regular semantic weakest assumption. 
 \end{corollary}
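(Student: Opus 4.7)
The strategy is to derive Corollary~\ref{cor:one_trace} directly from Lemma~\ref{lemma:one_set} by verifying its hypothesis. Specifically, under the given determinism assumption I would establish the stronger statement that $|S_t| \leq 1$ for every $t \in (\alpha M_2)^*$, where $S_t = \mathcal{T}((M_1 \| t) \times P)$. A set of size at most one is trivially either ``all feasible'' or ``all infeasible'', so the hypothesis of Lemma~\ref{lemma:one_set} holds vacuously, and the existence of a regular semantic weakest assumption follows immediately.

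To carry this out, I would fix $t \in (\alpha M_2)^*$ and unpack $M_1 \| t$ as $M_1 \| M_t$ via Definition~\ref{def:mt}, using the fact that $M_t$ has exactly one accepted trace, namely $t$. Given any $t' \in S_t$, Lemmas~\ref{lemma:traces} and~\ref{lemma:traces12} give $t' \downarrow_{\alpha M_t} = t$, so in particular $t' \downarrow_{\alpha M_2} = t$. The key step is then to argue that every such $t'$ also arises as a trace of $(M_1 \| M_2) \times P$ with the same projection: any synchronizations between $M_1$ and $M_t$ that yield $t'$ can be reproduced in $M_1 \| M_2$ whenever $t$ is actually a trace of $M_2$ (and if $t$ is not a trace of $M_2$, then $S_t$ is already empty). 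Combined with the determinism hypothesis that at most one $t'$ in $(M_1 \| M_2) \times P$ projects to $t$, this gives $|S_t| \leq 1$.

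The main obstacle I anticipate is the lifting step in the middle paragraph, where one must show rigorously that a trace of $(M_1 \| M_t) \times P$ embeds into $(M_1 \| M_2) \times P$ with the same $\alpha M_2$-projection. This is largely a bookkeeping exercise using the parallel and conjunctive composition definitions (Definitions~\ref{def:parallel} and~\ref{def:conjunctive}), taking care to match the paired read/write labels and the inserted equality constraints on synchronization. Once this embedding is in place, the rest of the argument is purely a counting argument: determinism bounds the preimage of $t$ in $(M_1 \| M_2) \times P$ by one, hence $S_t$ has at most one element, and Lemma~\ref{lemma:one_set} delivers the conclusion.
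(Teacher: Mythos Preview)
Your overall strategy—verifying the hypothesis of Lemma~\ref{lemma:one_set} by showing $|S_t|\le 1$—is exactly what the paper intends; the corollary is presented without proof as a special case of that lemma. However, one step fails: you claim that ``if $t$ is not a trace of $M_2$, then $S_t$ is already empty.'' This is false. By definition $S_t=\mathcal{T}((M_1\|M_t)\times P)$ depends only on $M_1$, $P$, and the single-trace program $M_t$; it has nothing to do with $M_2$. So for a word $t\in(\alpha M_2)^*$ that happens not to be a trace of $M_2$, the set $S_t$ can be nonempty and contain several traces. Your embedding of $S_t$ into the traces of $(M_1\|M_2)\times P$ that project to $t$ only goes through when $t$ is an accepted trace of $M_2$; for all other $t$, the corollary's determinism hypothesis—a statement about $(M_1\|M_2)\times P$—gives you no handle on $|S_t|$, yet Lemma~\ref{lemma:one_set} requires the feasibility dichotomy for \emph{every} $t\in(\alpha M_2)^*$.

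This gap arguably reflects an imprecision in the corollary's own phrasing: the determinism is stated for $(M_1\|M_2)\times P$ rather than for $(M_1\|t)\times P$. The cleanest repair is to read the hypothesis as asserting $|S_t|\le 1$ directly (which is the natural intent of ``only one corresponding trace''), in which case Lemma~\ref{lemma:one_set} applies with no embedding step needed at all.
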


\section{The Assume-Guarantee-Repair (AGR) Framework}\label{sec:AGR}

In this section we discuss our Assume-Guarantee-Repair (AGR) framework for communicating programs. The framework consists of a learning-based Assume-Guarantee algorithm, called $\agl$, and a \repair~ procedure, which are tightly joined.

Recall that the goal of \lstar\ in our case is to learn $\aut{T}(M_2)$, though we might terminate earlier if we find a suitable assumption for the AG rule: The nature of $\agl$ is such that the assumptions it learns before it reaches $M_2$ may contain the traces of $M_2$ and more, but still be represented by a smaller automaton. Therefore,
similarly to~\cite{DBLP:conf/tacas/CobleighGP03}, $\agl$ often terminates with an assumption $A$ that is much smaller than $M_2$. Indeed, our tool often produces very small assumptions (see Section~\ref{results}).

As mentioned before, not only do we
determine whether $M_1 || M_2 \vDash P$, but we also repair the program in case it violates the specification.
When $M_1 || M_2 \nvDash P$, the $\agl$ algorithm finds an error trace $t$ as a witness for the violation. In this case, we initiate the \repair~ procedure, which
eliminates $t$ from $M_2$. \repair~ applies abduction in order to learn a new constraint which, when added to $t$, makes the counterexample infeasible.%creates an infeasible trace.
\footnote{There are also cases in which we do not use abduction, as discussed in Section~\ref{sec:methods}.} The new constraint enriches the alphabet in a way which may eliminate additional counterexamples from $M_2$, by making them infeasible. 
%make similar traces infeasible as well. 
We elaborate on our use of abduction in Section~\ref{sec:abduction}. The removal of~$t$ and the addition of the new constraint result in a new goal $M_2'$ for $\agl$ to learn. Then, $\agl$ continues to search for a new assumption $A'$ that allows to verify $M_1 || M_2' \vDash P$. 

An important feature of our AGR algorithm is its \emph{incrementality}. When learning an assumption $A'$ for $M_2'$ we can use the 
information gathered in previous \mq s and \eq s, since the answer for them has not been changed (see Theorem~\ref{thm:incremental2} in  Section~\ref{sec:AGR_algorithm}). This allows the learning of $M_2'$ to start from the point where the previous learning has left off, resulting in a more efficient algorithm.

%membership queries previously asked for $M_2$, since the answer for them has not been changed.  As we show later (Theorem~\ref{thm:incremental2} in  Section~\ref{sec:AGR_algorithm}), the difference between the languages of $M_2$ and $M_2'$ lies in words (traces) whose membership has not yet been queried on $M_2$. 

As opposed to the case where $M_1 || M_2 \vDash P$, we cannot guarantee the termination of the repair process in case $M_1 || M_2 \nvDash P$.
This is because
we are only guaranteed to remove one (bad) trace and add one (infeasible) trace in every iteration (although in practice, every iteration may remove a larger set of traces). Thus, we may never converge to a repaired system. Nevertheless, in case of property violation, our algorithm always finds an error trace, thus a progress towards a ``less erroneous'' program is guaranteed. 

It should be noted that the $\agl$ part of our AGR algorithm deviates from the AG-rule of~\cite{DBLP:conf/tacas/CobleighGP03} in two important ways. First, since the goal of our learning is $\mathcal{T}(M_2)$ rather than $\mathcal{L}(A_w)$, our membership queries are different in type and order. Second, in order to identify real error traces and send them to \repair~ as early as possible, we add additional queries to the membership phase that reveal such traces. We then send them to \repair~ without ever passing through equivalence queries, which improves the overall efficiency.
Indeed, our experiments include several cases in which all repairs were invoked from the membership phase. In these cases, AGR ran an equivalence query only when 
it has already successfully repaired $M_2$, and terminated.
%the repaired component $M_2'$ already guaranteed $M_1 || M_2' \vDash P$, and terminated successfully.

{
\subsection{\mq s \& \eq s and the Implementation of the Teacher}\label{sec:queries}
As our algorithm heavily relies on the \lstar algorithm, we begin with a description of membership and equivalence queries, both from the side of the learner and the teacher. When using \lstar\ in verification, the implementation of the teacher plays a major role. We therefore describe the queries issued by the learner, and for each query, we elaborate on how exactly the query is answered by the teacher. 
The pseudo-code for the teacher is given in Algorithm~\ref{alg:queries}.

\subsubsection{Membership Queries}
%Algorithm~\ref{alg:AGR}, lines~\ref{membership}-\ref{endmembership}. 
As a \mq, the learner asks whether a given trace $t$ is in the language of some suitable assumption $A$. The order of \mq s is as in the \lstar\ algorithm, that is, we traverse traces in alphabetical order and increasing length. 
%Given a trace $t$, the teacher answers ``is $t\in U$?" as follows. 
The teacher answers the \mq~as follows. 
\begin{itemize}
    \item[--] If $t\notin \aut{T}(M_2)$, answer \ansno\ . 
    \item[--] If $t\in \aut{T}(M_2)$, check if $t$ is an error trace, and if so, turn directly to repair. That is - 
    \begin{itemize}
        \item If  $M_1 || t \nvDash P$, pause learning and turn to repair.
        \item If $M_1 || t \vDash P$, answer the \mq~ with \ansyes\ . 
    \end{itemize}
\end{itemize}
Note that not every membership query is answered immediately, since the learning process may pause for repairing the system. 
If a repair was issued on a trace $t$, then after repair, the teacher answers \ansno~ on the \mq~on $t$, but it may provide additional information as we describe in Section~\ref{sec:AGR_algorithm}.

\subsubsection{Equivalence Queries} %Algorithm~\ref{alg:AGR} lines~\ref{equivalence}-\ref{endequiv}. 
As an \eq, the learner asks, given assumption $A$, whether $A$ is a suitable assumption to verify the correctness of the system. 
An \eq~is issued once the learner was able to construct an automaton that is consistent with all previous \mq s, as done in \lstar algorithm. 
The teacher answers an \eq~as follows.
%Given a candidate $A$, the teacher answers the \eq~ ``$A\equiv U$" as follows. 
\begin{itemize}
    \item[--] If $M_1 || A \vDash P$, then $A$ is a suitable candidate according to the AG-rule. Then, we check if the second condition of the AG-rule holds, that is, if $\aut{T}(M_2) \subseteq \aut{T}(A)$, and answer accordingly.
    \item[--] If $M_1 || A \nvDash P$, check if the error trace $t$ that violates $P$ is also a trace of $M_2$. If so, pause learning and turn to repair. If $t$ is not a real error trace, return  \ansno~ to the \eq~, along with a trace to be eliminated from $A$. 
\end{itemize}
Note that, as with \mq s, not every \eq~is immediately answered. In case of a violation, the algorithm first repairs the system, and only then the teacher returns \ansno~ to the \eq~(as the queried $A$ contained a real error trace). Also as in \mq s, the repair stage might provide the learner with additional information (Section~\ref{sec:AGR_algorithm}). 
}%end revision changes

Algorithm~\ref{alg:queries} below presents the pseudo-code of the teacher when answering the two types of queries. While $M_1$ and $P$ remain unchanged,  $M_2$ is iteratively repaired (if needed). The teacher therefore considers at iteration $i$, the repaired program $M_2^i$.

\begin{algorithm}[]
	\begin{algorithmic}[1]

	\Function{\mq-oracle }{trace $t_2$}
	%\While{\lstar learner did not converge to a candidate assumption}
	%	\State   let $t_2\in (\alpha M_2^i)^*$
		\If{$t_2 \in \mathcal{T}(M_2^i)$} \label{line:ifin}
		\If{$M_1 || t_2 \nvDash P$} \label{terminate_cex} 
		\State  let $t \in (M_1 || t_2) \times P $ be an error trace \Comment{$t$ is a cex\\} \Comment{proving $M_1 || M_2^i \nvDash P$}\label{error1}
		%\State REPAIR($M_2^i, t$)\label{repair1}
		\State\Return {\ansrep~ - $t$} \label{line:repairmq}
		\Else {~~}
		\Return{\ansyes\ } \Comment{$M_1 || t_2 \vDash P$}
	%	return to $\agl$ in Line~\ref{membership} with  $t_2\in \mathcal{T}(A_j^i)$ 
	\label{backL1}  
		\EndIf
		\Else {~~}
		 \Return{\ansno\ } \Comment{$t_2 \notin \mathcal{T}(M_2^i)$}
	%	return to $\agl$ in Line~\ref{membership} with $t_2\notin \mathcal{T}(A_j^i)$  
	\label{endmembership}
	    \EndIf
%	\EndWhile
	\EndFunction
	
	\Function{\eq-oracle }{candidate assumption $A^i_j$}
%	\State  let $A_j^i$ be the candidate assumption generated by the learner  
	    \If{$M_1 || A_j^i \vDash P$} \label{equivcheck1}
	    \If{$\mathcal{T} (M_2^i ) \subseteq \mathcal{T} (A_j^i)$}\label{equivcheck2}
	   \Return{\ansyes\ }
	    %terminate and return $M_1 || M_2^i \vDash~P$ 
	 
	    \Else 
	    \State  let $t_2 \in \mathcal{T} (M_2^i ) \setminus \mathcal{T} (A_j^i)$  
	    %\State set $j:= j+1$
	    \State \Return{\ansno~ + $t_2$} 
	    %return to $\agl$ in Line~\ref{membership} with $t_2\in \mathcal{T}(A_j^i)$  
	    %Return $t$ as a counter example to the equivalence query, requiring $t \in A$, and go back to~\ref{membership}  
	    \label{L*_cex_inA}
        \EndIf
        \Else \Comment{$M_1 || A_j^i \nvDash P$} \label{line:else}
        \State let $t \in (M_1 || A_j^i) \times P$ be an error trace \label{line:erroreq}
        \State denote $t = (t_1 || t_A )\times t_P$  \label{error2}
        \If{$t_A\in \mathcal{T} (M_2^i)$} \label{error3}
        %\State REPAIR($M_2^i, t_A$) \Comment{$t_A$ is a cex proving $M_1 || M_2^i \nvDash P$} \label{terminate_cex_eq}\label{repair2}
        \Return {\ansrep~ - $t$}
        \Else {~~} \Return{\ansno~ - $t_A$}
        %\State set $j:= j+1 $ 
        %\State  return to $\agl$ in Line~\ref{membership} with $t_A\notin \mathcal{T}(A_{j}^i)$ 
        \label{endequiv} 
        %Return to~\ref{membership} with $t_A$ as a counterexample to the equivalence query, requiring $t_A \notin A$   
     
        \EndIf
        \EndIf
	\EndFunction

	\end{algorithmic}

	\caption{The $\agl$ Teacher }\label{alg:queries}
\end{algorithm}

\begin{algorithm*}[]
	\begin{algorithmic}[1]
\Require{$M_1, M_2, P$}
\Ensure{A repaired $M_2^i$ (if needed) and an assumption $A^i$ that proofs that $M_1 || M_2^i \vDash P$}
	\State set $i=0$, $j=0$, $M_2^i = M_2$
	\Function{$\agl$}{}

	\While{true} \Comment{continue running until repairing $M_2$, might not terminate}
		\While{\lstar learner did not converge to a candidate assumption} \Comment{issue \mq s}	 \label{membership}
		\State   let $t_2\in (\alpha M_2^i)^*$ chosen according to the \lstar learner
		\If{\Call{\mq-oracle }{$t_2$} = \ansrep~ - $t$} 
		\State  $t_2, t_2'
		 := $ \Call{repair}{$M_2^i, t$} \label{repair1}
		\State add -$t_2$ to the \lstar~table and in case of abduction, add  +$t_2'$ to the \lstar~table  \label{endmembership}
		\Else{ add $t_2$ to the \lstar\ table according to the answer of the \mq }
		\EndIf
		\label{endmemphase}
	    \EndWhile
	   % \State // equivalence Queries\label{equivalence}
	    \State  let $A_j^i$ be the candidate assumption generated by the \lstar~learner  \Comment{issue \eq\ } 	    \label{equivalence}
	    \If{ \Call{\eq-oracle }{$A_j^i$} = \ansyes\ } 
	    \label{line:ifeq}
	    \State \Return {$M_1 || M_2^i \vDash~P$ together with the assumption $A_j^i$}   

	    \label{terminate_proof} 
	    \Else \If{\Call{\eq-oracle }{$A_j^i$} = \ansrep~ - t}
	    	\State   $t_2, t_2'
		 := $ \Call{repair}{$M_2^i, t$} \label{repair2}
		\State add -$t_2$ to the \lstar~table and in case of abduction, add  +$t_2'$ to the \lstar~table \label{line:repaireq}
	     \Else \Comment{\Call{\eq-oracle }{$A_j^i$} = \ansno~ \textpm~t}
	    \State let $t_2$ be the cex from the \eq
	    \State add $t_2$ to the \lstar table according to the answer of the \eq \Comment{after issuing an \eq, the algorithm returns to the \mq s phase}
	    \State set $j:=j+1$ \label{endequivalence}
	    \EndIf
	    \EndIf
	    \EndWhile
        \EndFunction
		\Function{repair}{$M_2^i, t$}
		\State  let $t_1 \in M_1, t_2\in M_2^i, t_p \in P$ such that $t = (t_1 || t_2 )\times t_p$  
		\If{$t$ does not contain constraints}
		\State set $M_2^{i+1} := \T(M_2^i)\setminus \{t_2 \}$ \label{line:repremovesyn}
		\State set $i:= i+1, j:=0$ \Comment{$i$ is set to $0$, start a new iteration of learning $M_2^{i+1}$} \label{line:newit1}
		\State\Return{ - $t_2$}
		
		%\State  return to $\agl$ in Line~\ref{membership} with $M_2^{i+1} = \T(M_2^i)\setminus \{t_2 \}$ \phantom{-------------------------------------------}and $t_2\notin \mathcal{T}(A_0^{i+1})$   
		%return to~\ref{membership} with $M_2' = \T(M_2)\setminus \{t \}$,  requiring $t\notin \T(A)$  
		\label{repair:syntax}
		\Else \Comment{$t$ contains constraints, use abduction to eliminate $t$  }\label{repair:abduction}
		%\State use \emph{abduction} to eliminate $t$  
		\State  let $c$ be the new constraint learned during abduction and let $t_2' = t_2 \cdot c$
		\State update $\alpha M_2^{i+1} := \alpha M_2^{i} \cup \{ c \}$ 
		\label{line:up_alpha}
		\State set $M_2^{i+1} := (\T(M_2^i) \setminus \{t_2 \})\cup \{t_2' \}$
		\label{up_alpha}
		\State set $i:=i+1, j:=0$ \Comment{$i$ is set to $0$, start a new iteration of learning $M_2^{i+1}$} \label{line:newit2}
		%\State  let $t_2' = t_2 \cdot c$  % be the output of the abduction
		\State \Return{ - $t_2$, + $t_2'$}
		%\State  return to $\agl$ in Line~\ref{membership} with\phantom{------------------------} \phantom{------------------------------------} $M_2^{i+1} = (\T(M_2^i) \setminus \{t_2 \})\cup \{t_2' \}$
		%\phantom{-----------------------------------} and $t_2 \not\in \T(A_0^{i+1}), t_2'\in \T(A_0^{i+1})$
		\EndIf
		
		\EndFunction
	\end{algorithmic}

	\caption{AGR }\label{alg:AGR}
\end{algorithm*}
 
\subsection{The Assume-Guarantee-Repair (AGR) Algorithm }\label{sec:AGR_algorithm}

We now describe our AGR algorithm in more detail (see Algorithm~\ref{alg:AGR}).
Figure~\ref{fig:agr} describes the flow of the algorithm. AGR comprises two main parts, namely $\agl$ and \repair . 

The input to AGR are the components $M_1$ and $M_2$, and the property $P$. While
$M_1$ and $P$ stay unchanged during AGR, $M_2$ keeps being updated as long as the algorithm recognizes that it needs repair.% (we can bound the number of iterations, as we discuss in Section~\ref{Sec:termination}). 

The algorithm works in iterations, where in every iteration the next updated $M_2^i$ is calculated, starting with iteration $i = 0$, where $M_2^0 = M_2$. An iteration starts with the membership phase in line~\ref{membership} of Algorithm~\ref{alg:AGR}, and ends either when $\agl$ successfully terminates (Alg.~\ref{alg:AGR} line~\ref{terminate_proof}) or when procedure \repair~ is called (Alg.~\ref{alg:AGR} lines~\ref{repair1} and~\ref{repair2}). 
When a new system $M_2^i$ is constructed, %the execution of $\agl$ continues from the point it has left off in the previous iteration, and 
$\agl$ does not start from scratch. The information that has been learned in previous iterations is still valid for $M_2^i$. The new iteration is given additional new trace(s) that have been added or removed from the previous $M_2^i$ (Alg.~\ref{alg:AGR} lines~\ref{endmembership},~\ref{line:repaireq}). %(lines~\ref{backL1},\ref{endmembership},\ref{L*_cex_inA}, \ref{L*_cex_notinA}).

$\agl$ consists of two phases: membership, and equivalence. In the membership phase (lines~\ref{membership}-\ref{endmemphase} of Alg.~\ref{alg:AGR}), the algorithm issues \mq s as calls to the function \mqor\ of Algorithm~\ref{alg:queries}. If, during the membership phase, we encounter a trace $t_2\in M_2^i$ that in parallel with $M_1$ does not satisfy $P$, then $t_2$ is a bad behavior of $M_2$, and \repair~is invoked. To this end we enhance the answers of the teacher not only to \ansyes~ and \ansno\ , but also to \ansrep\ . This holds also for \eq s.

Once the learner reaches a candidate assumption $A_j^i$, it issues an \eq\ (Alg.~\ref{alg:AGR} lines~\ref{equivalence}-\ref{endequivalence}). 
$A_j^i$ is a suitable assumption if both $M_1 ||A_j^i \vDash P$ and   $\mathcal{T}(M_2^i) \subseteq \mathcal{T}(A_j^i)$ hold. 
In this case, AGR terminates and returns $M_2^i$ as a successful repair of $M_2$.
If $M_1 ||A_j^i \nvDash P$, then a counterexample $t$ is returned, that is composed of bad traces in $M_1, A_j^i$, and $P$. If the bad trace $t_2$, the restriction of $t$ to the alphabet of $A_j^i$, is also in $M_2^i$, then $t_2$ is a bad behavior of $M_2^i$, and here too \repair~is invoked. Otherwise, AGR updated the \lstar~table, returns to the membership phase, and continues to learn $A_j^i$.

As we have described, \repair~is called when a bad trace $t$ is found in $(M_1|| M_2^i)\times P$ and should be removed. If $t$ contains no constraints then its sequence of actions is illegal and its restriction $t_2\in \aut{T}(M_2^i)$ should be removed from $M_2^i$. In this case, \repair~returns to $\agl$ and updates the learning goal to be  ${\cal T}(M_2^{i+1}) := {\cal T}(M_2^i) \setminus \{t_2\}$, along with the answer ``- $t_2$" that indicates that $t_2$ should not be a part of the learned assumption. 
In Section~\ref{sec:methods} we discuss different methods for removing $t_2$ from~$M_2^i$.\footnote{For the different methods for removing $t_2$, we actually end up with a learning goal ${\cal T}(M_2^{i+1}) \subseteq {\cal T}(M_2^i) \setminus \{t_2\}$ and not necessarily ${\cal T}(M_2^i) \setminus \{t_2\}$ itself. We discuss this further in Section~\ref{sec:methods}.}

The more interesting case is when $t$ contains constraints. In this case, we not only remove the matching $t_2$ from $M_2^i$, but also add a new constraint $c$ to the alphabet, which causes $t_2$ to be infeasible. This way we eliminate $t_2$, and may also eliminate a family of bad traces that violate the property in the same manner -- 
adding a new constraint can only cause the removal of additional error traces, and cannot add traces to the system. 
We deduce $c$ using abduction, as we describe in  Section~\ref{sec:abduction}. As before, \repair~returns to $\agl$ 
with a new goal to be learned, but now also with an extended alphabet. 
In addition, we are provided with information about two traces: 
$t_2$ that should \emph{not} be included in the new assumption, and $(t_2 \cdot c)$ that should be included.

\subsubsection{Incremental learning}\label{AGR_sec:incremental}

One of the advantages of AGR is that it is {\em incremental}, in the sense that answers to membership queries from previous iterations remain unchanged for the repaired system. 
Formally, we have the following.

\begin{theorem}\label{thm:incremental2}
Assume that $T^i$ is the \lstar~ table (see Section~\ref{sec:l*}) at iteration $i$, and let $M_2^{i+1}$ be the repaired component after that iteration. Then, $T^i$ is consistent with $M_2^{i+1}$. 
\end{theorem}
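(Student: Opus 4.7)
The plan is to analyze, separately, every entry of $T^i$ according to its label and verify that the same label would still be produced by \textsc{mq-oracle} when invoked on the repaired component $M_2^{i+1}$. Since every cell of an $\lstar$ table (whether arising from a true \mq~issued directly by the learner, or from subsequent \mq s triggered after an \eq~counterexample extends the rows/suffixes) is by construction the answer of the teacher on the concatenation of its row and column, it suffices to check each label directly against the membership-query criterion.

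First I would recall the criteria used by \textsc{mq-oracle} in Algorithm~\ref{alg:queries}. A trace $w$ receives label $+$ iff $w \in \mathcal{T}(M_2^i)$ and $M_1 \| w \vDash P$; it receives label $-$ iff $w \notin \mathcal{T}(M_2^i)$; the remaining case ($w \in \mathcal{T}(M_2^i)$ but $M_1 \| w \nvDash P$) causes a repair and therefore no cell with that value of $w$ is ever committed to $T^i$ with either label. Next I would split on the type of repair performed at the end of iteration $i$, using the definition from the \textsc{repair} routine in Algorithm~\ref{alg:AGR}: either syntactic repair, in which $\mathcal{T}(M_2^{i+1}) \subseteq \mathcal{T}(M_2^i)\setminus\{t_2\}$, or semantic repair via abduction, in which $\mathcal{T}(M_2^{i+1}) = (\mathcal{T}(M_2^i)\setminus\{t_2\}) \cup \{t_2\cdot c\}$, where $c$ is a freshly introduced constraint letter with $c \notin \alpha M_2^i$.

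For the $+$ labels I would argue as follows. Let $w$ be a trace labelled $+$ in $T^i$. Then $w \in \mathcal{T}(M_2^i)$ and $M_1 \| w \vDash P$. The eliminated trace $t_2$ satisfies $M_1 \| t_2 \nvDash P$ (this is precisely what made \textsc{repair} fire, as shown in lines~\ref{error1} and~\ref{error3} of Algorithm~\ref{alg:queries}). Hence $w \neq t_2$, so $w$ still belongs to $\mathcal{T}(M_2^{i+1})$ in both repair flavours. Since neither $M_1$ nor $P$ is altered, the semantic condition $M_1 \| w \vDash P$ is preserved verbatim, and therefore the $+$ label remains correct. For the $-$ labels, suppose $w$ is labelled $-$, i.e.\ $w \notin \mathcal{T}(M_2^i)$. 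In the syntactic case $\mathcal{T}(M_2^{i+1}) \subseteq \mathcal{T}(M_2^i)$, so $w \notin \mathcal{T}(M_2^{i+1})$ is immediate. In the abduction case, the only trace added is $t_2 \cdot c$; because $w$ is over $\alpha M_2^i$ and $c \notin \alpha M_2^i$, we have $w \neq t_2\cdot c$, so again $w \notin \mathcal{T}(M_2^{i+1})$, preserving the $-$ label.

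I expect the main obstacle to be the semantic repair case, precisely at the alphabet-extension step (line~\ref{line:up_alpha} of Algorithm~\ref{alg:AGR}): one must be careful that the freshly introduced constraint letter $c$ does not already occur in $\alpha M_2^i$, for otherwise a previously $-$-labelled trace could coincide with the newly added $t_2 \cdot c$ and break consistency. I would make this explicit by noting that abduction produces a genuinely new symbol (enlarging the alphabet strictly, as reflected by the $\cup\{c\}$ in Algorithm~\ref{alg:AGR}), and then combine this observation with the membership-query criterion to close the argument.
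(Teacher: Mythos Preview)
Your decomposition by label ($+$ versus $-$) is a clean dual to the paper's decomposition by source (\mq, \eq-counterexample, repair), and the $+$-label argument is essentially right. However, there is a genuine gap in your $-$-label case under abduction.

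You resolve the obstacle you yourself identify by asserting that abduction always produces a \emph{fresh} constraint letter, reading this off the notation $\alpha M_2^{i+1}:=\alpha M_2^i\cup\{c\}$. That inference is unwarranted: a set union does not imply $c\notin\alpha M_2^i$, and nothing in the abduction procedure (Section~\ref{sec:abduction}) guarantees that the quantifier-eliminated formula $\psi(X_2)$ is syntactically new. The paper explicitly treats the case $c\in\alpha M_2^i$ and closes it with a different argument: since the \lstar\ learner issues membership queries in order of increasing length, and $|t_2|<|t_2\cdot c|$, the trace $t_2'=t_2\cdot c$ cannot have been queried before $t_2$ itself was reached; and $t_2'$ cannot have appeared as a negative \eq-counterexample either, because such counterexamples are projections of error traces, whereas $t_2'$ is not an error trace (it was constructed precisely so that $M_1\|t_2'\vDash P$). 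Without this length-ordering argument, your proof does not rule out that some $-$-labelled cell of $T^i$ coincides with $t_2\cdot c$.

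A secondary sloppiness: you write $\mathcal T(M_2^{i+1})\subseteq\mathcal T(M_2^i)\setminus\{t_2\}$ for the syntactic case but then conclude that any $w\neq t_2$ with $w\in\mathcal T(M_2^i)$ survives into $\mathcal T(M_2^{i+1})$, which requires equality, not inclusion. Algorithm~\ref{alg:AGR} as written (line~\ref{line:repremovesyn}) does use equality, so this is harmless for the formal statement, but your text should match.
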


From Theorem~\ref{thm:incremental2}
it follows that, in particular, $T^i$ is consistent with (1) traces that are removed between $M_2^i$ and $M_2^{i+1}$; and (2) traces that are learned using abduction and added to $M_2^{i+1}$.

\begin{proof}
Traces are added to the \lstar~ table in three scenarios: during \mq s; as counterexamples to \eq s; and while repairing the system, removing error traces and adding traces learned by abduction. The difference between $M_2^i$ and $M_2^{i+1}$ is only due to the repair part, that is, in error traces that are removed and traces that were added due to abduction. The two components agree on all other traces, and so $T^i$ is consistent with all traces in $M_2^{i+1}$ that are not part of  \repair . 

Let $t_2 \in \mathcal{T}(M_2^i)\setminus\mathcal{T}(M_2^{i+1})$
be an error trace that was removed during \repair . We consider the two cases - in which \repair\ is invoked during a \mq , or during an \eq .

In the former case, this is the first time the \lstar~learner queries $t_2$, as the \lstar~learner does not issue \mq s on traces that are already in the table. Therefore, $t_2$ is not part of $T^i$ yet, and there is no inconsistency. 

If $t_2$ is an error trace that is found during an \eq, then  $M_1 || t_2 \nvDash P$ (Alg.~\ref{alg:queries}, line~\ref{line:erroreq}). Since whether $M_1 || t_2 \nvDash P$ only depends on $M_1, P$ and the specific trace $t_2$, and not on $M_2^i$, and since $M_1$ and $P$ remain constant, it holds that $t_2$ was an error trace also in previous iterations. Therefore, $t_2$ cannot appear in $T^i$ as a positive trace. This concludes the case of traces that are removed from $M_2^i$ during repair. 

Now, let $t_2'$ be a positive trace added to $M_2^{i+1}$ using abduction, that is $t_2' = t_2\cdot c$ for $t_2$ that was removed by \repair . If the constraint $c$ is a new alphabet letter, then in particular $t_2'$ is not over the alphabet of $M_2^i$ and cannot appear in $T^i$. 

We now consider the case in which the added constraint $c$ is already a part of $\alpha M^i_2$. 
If the error trace $t_2$ was found during a \mq , then, since the \lstar~learner issues queries in an increasing-length order, it holds that $t_2'$ was not queried in a \mq\ in previous iterations as $|t_2|<|t_2'|$. Moreover, it cannot be the case that $t_2'$ was a negative counterexample to an \eq\ since these are only derived from error traces (Alg.\ref{alg:queries}, lines~\ref{line:erroreq}-\ref{endequiv}), and $t_2'$ is not an error trace. The same argument holds also if $t_2$ was found during an \eq . 

We are left to show that 
if $t_2$ was found during an \eq , it cannot be the case that $t_2'$ was previously queried during a \mq .
In this case it holds that the \lstar~learner did not issue a \mq\ on $t_2$. Since the \lstar~learner issues queries in increasing length, and since $|t_2| < |t_2'|$, it holds that $t_2'$ was not previously queried. 
\qed
\end{proof}

\begin{figure*}
    \centering
    \includegraphics[width=1\textwidth]{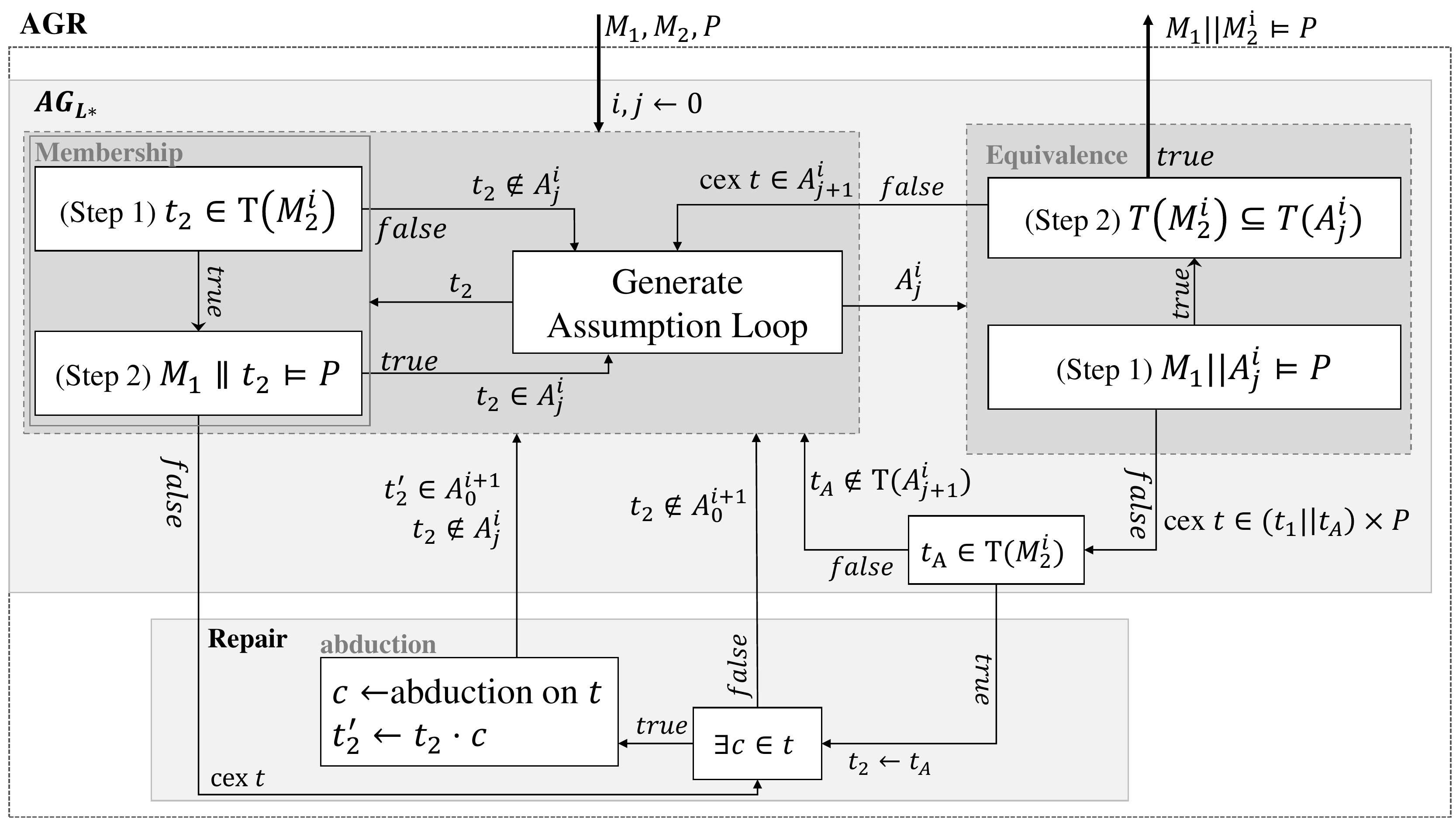}
    %[scale=0.4]
    \caption{The flow of AGR}
    \label{fig:agr}
\end{figure*}

\subsection{Semantic Repair by Abduction} \label{sec:abduction}
We now describe our repair of $M_2^i$, in case the error trace $t$ contains constraints (Alg.~\ref{alg:AGR} line~\ref{repair:abduction}). Error traces with no constraints are removed from $M_2^i$ syntactically (Alg.~\ref{alg:AGR} lines\ref{line:repremovesyn} -\ref{repair:syntax}), while in abduction we \emph{semantically} eliminate $t$ by making it infeasible. 
The new constraint is then
added to the alphabet of $M_2^i$ and may eliminate additional error traces.
Note that the constraints added by abduction can only restrict the behavior of $M_2$, making more traces infeasible. Therefore, we do not add counterexamples to $M_2$.

The process of inferring new constraints from known facts about the program is called \emph{abduction}~\cite{DBLP:conf/cav/DilligD13}. 
We now describe how we apply it. 
Given a trace $t$, let $\varphi_t$ be the first-order formula (a conjunction of constraints), which constitutes the SSA representation of $t$~\cite{DBLP:conf/popl/AlpernWZ88}.
%
%We first construct a formula $\varphi_t$ by transforming each of the alphabet letters along $t$ into a constraint, using SSA representation~\cite{DBLP:conf/popl/AlpernWZ88}. 
In order to make $t$ infeasible, we look for a formula $\psi$ such that $\psi \wedge \varphi_t \rightarrow \mathit{false}$.\footnote{Usually, in abduction, we look for $\psi$ such that $\psi\wedge \varphi_t$ is not a contradiction. In our case, however, since $\varphi_t$ is a violation of the specification, we want to infer a formula that makes $\varphi_t$ unsatisfiable.}

\begin{example} \label{ex:abduction}
Consider the component $M_2$ of Figure~\ref{model} and the component $M_1$ and specification $P$ of Figure~\ref{fig:two_systems_spec} from Section~\ref{Intro}.
The following $t$ is an error trace in $(M_1 || M_2) \times P $: 
\begin{align*}
 & t=   (\mathit{read}? x_{pw}, 999 < x_{pw}, (\mathit{enc}? y_{pw}, \mathit{enc}! x_{pw}), \\
& y_{pw} = 2\cdot y_{pw}, (\mathit{getEnc}! y_{pw}, \mathit{getEnc}? x_{pw2}), x_{pw}\neq x_{pw2}, \\ & y_{pw} \geq 2^{64}
)
\end{align*}

due to the execution 
\begin{align*}
 &   (\mathit{read}? 2^{63}, 999 < 2^{63}, (\mathit{enc}? 2^{63}, \mathit{enc}! 2^{63}),
y_{pw} = 2\cdot 2^{63},\\ & (\mathit{getEnc}! 2^{64}, \mathit{getEnc}? 2^{64}), 2^{63}\neq2^{64}, 2^{64} \geq 2^{64}
).
\end{align*}

We then look for a constraint $\psi$ that will make the sequence $t$, and in particular the violation of the last constraint $y_{pw}\geq 2^{64}$, infeasible. 
\end{example}

Note that $t \in \T(M_1 || M_2^i) \times P$, and so it includes variables both from $X_1$, the set of variables of $M_1$, and from $X_2$, the set of variables of $M_2^i$. Since we wish to repair $M_2^i$, the learned $\psi$ is over the variables of $X_2$ only. In Example~\ref{ex:abduction}, this corresponds to learning a formula over $x_{pw}$.

The formula $\psi \wedge \varphi_t \rightarrow \mathit{false}$ is equivalent to $\psi \rightarrow (\varphi_t \rightarrow \mathit{false})$.
Then, $\psi = \forall x\in X_1: (\varphi_t \rightarrow \mathit{false})\  = \ \forall x\in X_1 (\neg \varphi_t)$, is such a desired constraint: $\psi$ makes $t$ infeasible and is defined only over the variables of $X_2$. 
We now use quantifier elimination~\cite{quantifier} to produce a quantifier-free formula over $X_2$.
Computing $\psi$ is similar to the abduction suggested in~\cite{DBLP:conf/cav/DilligD13}, but the focus here is on finding a formula over $X_2$ rather than over any minimal set of variables as in~\cite{DBLP:conf/cav/DilligD13}; in addition, in~\cite{DBLP:conf/cav/DilligD13} they look for $\psi$ such that $\varphi_t \wedge \psi$ is not a contradiction, while we specifically look for $\psi$ that blocks $\varphi_t$. We use Z3~\cite{de2008z3} to apply quantifier elimination and to generate the new constraint. 

\begin{example} \label{ex:abduction2}
For $t$ of Example~\ref{ex:abduction}, the process described above results in the constraint $\psi = x_{pw} < 2^{63}$. Note that while $\psi$ blocks the erroneous behaviour of $t$, it allows all executions of $t$ in which $x_{pw}$ is assigned with smaller values than $2^{63}$. In addition, it does not only block the one execution in which $x_{pw} = 2^{63}$, but the set of all erroneous executions in $(M_1 || M_2 ) \times P$ of the example. 
\end{example}

After generating $\psi(X_2)$, we add it to the alphabet of $M_2^i$ (line~\ref{line:up_alpha} of Algorithm~\ref{alg:AGR}). In addition, we produce a new trace $t_2' = t_2 \cdot \psi(X_2)$.
The trace $t_2'$ is returned as the output of the abduction.

We now turn to prove that by making $t_2$ infeasible, we eliminate the error trace $t$.

%which will replace $t_2$ in $M_2^i$.  
\begin{lemma}\label{lemma:AGR_abduction_feas}
Let $t = (t_1 || t_2) \times t_P$. If $t_2$ is infeasible, then $t$ is infeasible as well. 
\end{lemma}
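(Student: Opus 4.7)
The plan is a short proof by contrapositive, reducing the claim directly to the two feasibility-restriction lemmas already established in Section~\ref{sec:traces}. Concretely, I would assume $t$ is feasible and deduce that $t_2$ must be feasible, which contradicts the hypothesis.

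First, I note that by the definition of conjunctive composition (Definition~\ref{def:conjunctive}), $t$ is a trace of $(M_1 \| M_2) \times P$, so applying Lemma~\ref{lemma:feasible_traces} with $M := M_1 \| M_2$ yields that $t\downarrow_{\alpha(M_1\|M_2)}$ is a feasible trace of $M_1 \| M_2$. By the way $t$ is written as $(t_1 \| t_2)\times t_P$, this restriction is precisely $t_1 \| t_2$. Next, applying Lemma~\ref{lemma:feasible_traces12} to $t_1 \| t_2$ gives that $(t_1 \| t_2)\downarrow_{\alpha M_2}$ is a feasible trace of $M_2$, and by the same structural observation this restriction equals $t_2$. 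Taking the contrapositive yields the claim: if $t_2$ is infeasible, then $t$ cannot be feasible.

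The only non-routine point to check carefully is that the two successive projections compose correctly, i.e.\ that $t\downarrow_{\alpha(M_1\|M_2)}\downarrow_{\alpha M_2} = t\downarrow_{\alpha M_2} = t_2$. This is immediate from the definition of restriction once one observes that $\alpha M_2 \subseteq \alpha(M_1\|M_2)$ (up to the synchronization convention on communication pairs $(g?x,g!y)$ that is already built into the definition of $\downarrow$). I do not expect any real obstacle here; the lemma is essentially a corollary of Lemmas~\ref{lemma:feasible_traces} and~\ref{lemma:feasible_traces12} applied in sequence, and no new construction of executions is needed.
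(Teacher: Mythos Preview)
Your proposal is correct and follows essentially the same route as the paper: both argue by contrapositive, first applying Lemma~\ref{lemma:feasible_traces} to pass from feasibility of $t$ to feasibility of $t_1\|t_2$, and then Lemma~\ref{lemma:feasible_traces12} to obtain feasibility of $t_2$. Your extra remark about the composition of projections is a harmless elaboration not spelled out in the paper.
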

%Recall that $t$ is formed from $(t_1 || t_2) \times t_P$.  By concatenating $\psi(X_2)$ to $t_2$ and making it infeasible we in fact made $t$ infeasible as well. 
\begin{proof}
This is due to the fact that $t_P$ can only restrict the behaviors of $t_1$ and $t_2$, thus if $t_2$ is infeasible, $t$ cannot be made feasible. 
Formally, Lemma~\ref{lemma:AGR_abduction_feas} follows from Lemma~\ref{lemma:feasible_traces} and Lemma~\ref{lemma:feasible_traces12} given in Section~\ref{sec:tarces}. 
By Lemma~\ref{lemma:feasible_traces}, if $t = (t_1 || t_2) \times t_P$ is feasible, then $t_1 || t_2$ is a feasible trace of $M_1 || M_2$. By Lemma~\ref{lemma:feasible_traces12}, if $t_1 || t_2$ is feasible,  then $t_2$ is feasible as well. Therefore, if $t_2$ is infeasible, then $t$ is infeasible, proving Lemma~\ref{lemma:AGR_abduction_feas}. 
\qed
%We now turn to prove Lemma~\ref{lemma:feasible_traces}. 
\end{proof}

\begin{figure}[]
    \centering
\begin{tikzpicture}[scale=0.12]
\tikzstyle{every node}+=[inner sep=0pt]
\draw [black] (28.4,-21.7) circle (3);
\draw (28.4,-21.7) node {$q$};
\draw [black] (43.9,-21.7) circle (3);
\draw (43.9,-21.7) node {$q'$};
\draw [black] (43.9,-21.7) circle (2.4);
%\draw [black] (14.3,-21.7) circle (3);
%\draw [black] (16.6,-11) circle (3);
\draw [black] (31.4,-21.7) -- (40.9,-21.7);
\fill [black] (40.9,-21.7) -- (40.1,-21.2) -- (40.1,-22.2);
\draw (36.15,-22.2) node [below] {$\psi(X_2)$};
\draw [black, dashed] (17.3,-21.7) -- (25.4,-21.7);
\fill [black] (25.4,-21.7) -- (24.6,-21.2) -- (24.6,-22.2);
\draw (21.35,-22.2) node [below] {$t_2$};
\draw [black, dashed] (18.82,-13.02) -- (26.18,-19.68);
\fill [black] (26.18,-19.68) -- (25.92,-18.78) -- (25.25,-19.52);
\draw (23.47,-15.86) node [above] {$t'$};
\end{tikzpicture}

\caption{Adding the constraint $\psi(X_2)$ to block the error trace~$t_2$.
Note that $\psi$ is also added to traces other than $t_2$, for example to $t'$. Then, $\psi(X_2)$ blocks assignments of $t'$ that violate $P$ in the same way as $t_2$, but it allows for other assignments of $t'$ to hold.
}\label{fig:adding_constraint}
\end{figure}
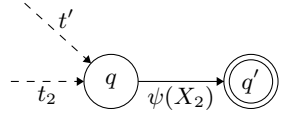

In order to add $t_2\cdot \psi(X_2)$ to $M_2^i$ while removing $t_2$, we split the state $q$ that $t_2$ reaches in $M_2^i$ into two states $q$ and $q'$, and add a transition labeled $\psi(X_2)$ from $q$ to $q'$, where only $q'$ is now accepting (see Figure~\ref{fig:adding_constraint}). 
Thus, we eliminate the violating trace from $M_1 || M_2^i$.
\repair~ now returns to $\agl$ 
with the negative example -$t_2$ and the positive example +$t_2'$ to add to the \lstar~ table $T^{i+1}$ of the next iteration, 
in order to learn an assumption for the repaired component $M_2^{i+1}$ (which includes $t_2'$ but not $t_2$). %Figure~\ref{fig:adding_constraint} demonstrates this construction. 

\begin{figure}[t]

\begin{center}
\begin{tikzpicture}[scale=0.12]
\tikzstyle{every node}+=[inner sep=0pt]
\draw [black] (20,-16.6) circle (3);
\draw (20,-16.6) node {$q_0$};
\draw [black] (20,-16.6) circle (2.4);
\draw [black] (26.6,-7.8) circle (3);
\draw (26.6,-7.8) node {$q_5$};
\draw [black] (33.9,-16.6) circle (3);
\draw (33.9,-16.6) node {$q_4$};
\draw [black] (33.9,-28.3) circle (3);
\draw (33.9,-28.3) node {$q_3$};
\draw [black] (8.9,-34.5) circle (3);
\draw (8.9,-34.5) node {$q_2$};
\draw [black] (20,-28.3) circle (3);
\draw (20,-28.3) node {$q_1$};
\draw [black] (18.517,-25.707) arc (-159.47742:-200.52258:9.291);
\fill [black] (18.52,-25.71) -- (18.7,-24.78) -- (17.77,-25.13);
\draw (17.43,-22.45) node [left] {${\scriptstyle \mathit{read}?x_{pw}}$};
\draw [black] (9.476,-31.578) arc (-203.04103:-278.58713:7.231);
\fill [black] (9.48,-31.58) -- (10.25,-31.04) -- (9.33,-30.65);
\draw (11.66,-27.59) node [above] {${\scriptstyle x_{pw}\leq 999}$};
\draw [black] (19.751,-31.265) arc (-17.39143:-104.23674:6.828);
\fill [black] (19.75,-31.27) -- (19.03,-31.88) -- (19.99,-32.18);
\draw (17.56,-35.69) node [below] {${\scriptstyle \mathit{read}?x_{pw}}$};
\draw [black] (31.138,-29.456) arc (-73.2127:-106.7873:14.499);
\fill [black] (31.14,-29.46) -- (30.23,-29.21) -- (30.52,-30.17);
\draw (26.95,-30.57) node [below] {${\scriptstyle 999<x_{pw}}$};
\draw [black] (35.572,-19.071) arc (23.81791:-23.81791:8.367);
\fill [black] (35.57,-19.07) -- (35.44,-20) -- (36.35,-19.6);
\draw (36.79,-22.45) node [right] {${\scriptstyle \mathit{enc}!x_{pw}}$};
\draw [black] (29.556,-7.515) arc (81.37918:-2.02468:6.086);
\fill [black] (29.56,-7.52) -- (30.27,-8.13) -- (30.42,-7.14);
\draw (33.88,-8.21) node [right] {${\scriptstyle \mathit{getEnc}?x_{pw2}}$};
\draw [black] (19.342,-13.704) arc (-181.20588:-252.53392:6.136);
\fill [black] (19.34,-13.7) -- (19.86,-12.91) -- (18.86,-12.89);
\draw (19.99,-8.75) node [left] {${\scriptstyle x_{pw} < 2^{63}}$};
\end{tikzpicture}
\end{center}

    \caption{The repaired component $M_2^1$.}
    \label{fig:repairabd}
\end{figure}
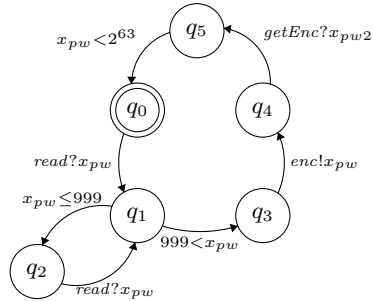

\begin{example}\label{ex:repairandassumption}
Figure~\ref{fig:repairabd} presents the repaired component $M_2^1$ we generate given $M_2$ of Figure~\ref{model} and $M_1$ and $P$ of Figure~\ref{fig:two_systems_spec}. As there is only one error trace (that induces many erroneous executions), the repaired component is achieved after one iteration. The new constraint, $\psi = x_{pw} < 2^{63}$, is added at the end of the trace $t_2 = t\downarrow_{\alpha M_2^i}$ for $t$ of Example~\ref{ex:abduction}. Intuitively, this constraint is equivalent to adding an \emph{assume} statement in the program. 
\end{example}

\subsection{Syntactic Removal of Error Traces} \label{sec:methods}
Recall that the goal of \repair~ is to remove a bad trace $t_2$ from $M_2$ once it is found by $\agl$. If $t_2$ contains constraints, we remove it by using abduction as described in Section~\ref{sec:abduction}. Otherwise, we can remove $t_2$ by constructing a system whose language is ${\cal T}(M_2)\setminus\{t_2\}$. We call this the {\em exact} method for repair. 
However, removing a single trace at a time may lead to slow convergence, and to an exponential blow-up in the sizes of the repaired systems. Moreover, as we have discussed, in some cases there are infinitely many such traces, in which case AGR may never terminate. 

For faster convergence, we have implemented two additional heuristics, namely {\em approximate} and {\em aggressive}. These heuristics may remove more than a single trace at a time, while keeping the size of the systems small. While ``good'' traces may be removed as well, the correctness of the repair is maintained, since no bad traces are added. Moreover, an error trace is likely to be in an erroneous part of the system, and in these cases our heuristics manage to remove a set of error traces in a single step.

We survey the three methods.

\begin{itemize}
\item\emph{Exact.} To eliminate only $t_2$ from $M_2$, we construct the program $M_{t_2}$ that accepts only $t_2$, and complement it to construct $\Bar{M_{t_2}}$ that accepts all traces except for $t_2$. Finally, we intersect $\Bar{M_{t_2}}$ with $M_2$. This way we only eliminate $t_2$, and not other (possibly good) traces. On the other hand, this method converges slowly in case there are many error traces, or does not converge at all if there are infinitely many error traces.

\item\emph{Approximate.} 
Similarly to our repair via abduction in Section~\ref{sec:abduction}, we prevent the last transition that $t_2$ takes from reaching an accepting state. Let $q$ be the state that $M_2$ reaches when reading 
$t_2$. We mark $q$ as a non-accepting state, and add an accepting state $q'$, to which all in-going transitions to $q$ are diverted, except for the last transition on $t_2$. 
This way, some traces that lead to $q$ are preserved by reaching $q'$ instead, and the traces that share the last transition of $t_2$ are eliminated along with $t_2$. As we have argued, these transitions may also be erroneous.

\item\emph{Aggressive.} 
In this simple method, we remove $q$, the state that $M_2$ reaches when reading $t_2$, from the set of accepting states.
This way we eliminate $t_2$ along with all other traces that lead to $q$. In case that every accepting state is reached by some error trace, this repair might result in an empty language, creating a trivial repair. However, our experiments show that in most cases, this method quickly leads to a non-trivial repair. 
\end{itemize}

\subsubsection{Towards convergence of syntactic repair}\label{sec:syntactic_convg}

As discussed above, the exact repair may not terminate in case of infinitely many traces that introduce the same error. Indeed, we now claim that when provided with long-enough counterexamples, we can conclude that the exact repair will not converge, and justifiably turn to the other repair types. 

In the following, we claim that once a long-enough error trace is found, it is induced from some run in the underlying automaton, that contains a cycle. Then, all similar traces that follow the same sequence of states are also error traces, and the cycle induces infinitely many error traces that cannot be removed together using the exact repair method. We now formalise this intuition. 

We make use of the pumping lemma for regular languages, as stated in the following claim. 

\begin{claim}\label{obs}
Let $L$ be a regular language and let $\aut{A}$ be a finite state automaton for $L$, with $n$ states. Let $z\in L$ such that $|z| > n$. Then, we can write $z = uvw$ such that for every $i\geq 0$, it holds that $uv^iw \in L$. 
\end{claim}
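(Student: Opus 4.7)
The plan is to invoke the standard pumping lemma argument, adapted to the automata-theoretic framing already set up in Section~\ref{sec:prelim}. Since $L$ is regular and $\aut{A}$ accepts $L$ with $n$ states, I would start by fixing an accepting run of $\aut{A}$ on $z$, say $r = q_0, q_1, \ldots, q_m$ where $m = |z| > n$. The run visits $m+1 > n+1$ states of $\aut{A}$, so by the pigeonhole principle there must exist indices $0 \le i < j \le m$ with $q_i = q_j$.

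Next, I would use this repeated state to factor $z$. Letting $u$ be the prefix of $z$ that drives $\aut{A}$ from $q_0$ to $q_i$, $v$ the infix that drives $\aut{A}$ from $q_i$ back to $q_j = q_i$, and $w$ the suffix that drives $\aut{A}$ from $q_j$ to the final (accepting) state $q_m$, we get $z = uvw$ with $v$ corresponding to a cycle at state $q_i$. Since $\aut{A}$ is deterministic on the chosen run, reading $v$ any number of times from $q_i$ returns to $q_i$; hence for every $k \ge 0$ the word $uv^kw$ has an accepting run $q_0 \to \cdots \to q_i \to \cdots \to q_i \to \cdots \to q_m$, and therefore $uv^kw \in L$.

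I do not anticipate a real obstacle; the argument is the textbook pumping lemma and the only subtlety worth flagging is that the claim as stated does not explicitly require $|v|>0$, which is automatic here because $i<j$ in the pigeonhole step, so $v$ corresponds to a nonempty portion of the run (and hence to a nonempty string, since each transition consumes exactly one alphabet symbol). No nontrivial calculation is needed, so I would present the proof in these two short steps (pigeonhole to find the repeated state, then closure under iterating the cycle) and conclude.
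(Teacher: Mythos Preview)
Your argument is correct and is exactly the standard pumping-lemma proof; the paper itself does not prove this claim at all but simply states it as the pumping lemma for regular languages and uses it as a known fact.
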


\begin{lemma}\label{lemma:syntactic_long}
Let $t \in \aut{T} ( (M_1 || M_2)\times P)$ be an error trace without constraints. Let $N$ be the number of states in $(M_1 || M_2)$ $\times P$. Then, if $|t| > N$, it induces an error trace $t_2 \in \aut{T} (M_2)$, such that we can write $t = uvw$ for $|v| >0$ and for every $i$ it holds that $t_{2^i}:= uv^i w$ also corresponds to an error trace. 
\end{lemma}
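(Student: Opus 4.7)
The plan is to apply the pumping lemma (Claim~\ref{obs}) directly to the product $\aut{B} := (M_1 || M_2)\times P$, viewed as a finite-state automaton whose number of states is $N$. Since $t$ is an error trace, it is an accepted trace of $\aut{B}$, induced by some run $r$ through its states. The assumption $|t|>N$ then forces, by pigeonhole, the run $r$ to revisit some state $q$; splitting $r$ at the two occurrences of $q$ yields a decomposition $t = uvw$ with $|v|>0$, where $v$ labels a cycle at $q$. It follows immediately that for every $i\geq 0$, the word $uv^iw$ is again an accepted trace of $\aut{B}$, and therefore corresponds (by definition of $M\times P$ and its acceptance condition $F_M \times B_P$) to an error behaviour of the product.

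Next I would argue feasibility of each pumped trace. Because $t$ contains no constraints by hypothesis, neither does the infix $v$, and so $uv^iw$ consists solely of communication and assignment actions. By Definition~\ref{def:exec}, the only obstruction to the existence of an execution on a trace is a violated constraint along it; hence $uv^iw$ is feasible for every $i$. This reduces the remaining task to translating each pumped product-trace back into a feasible trace of $M_2$.

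For this final translation I would invoke the restriction lemmas proved in Section~\ref{sec:traces}. Applying Lemma~\ref{lemma:traces} first (to the conjunctive composition with $P$) and then Lemma~\ref{lemma:traces12} (to the parallel composition of $M_1$ and $M_2$), the projection $t_{2,i} := (uv^iw)\downarrow_{\alpha M_2}$ is a trace of $M_2$; feasibility transfers likewise by Lemma~\ref{lemma:feasible_traces} and Lemma~\ref{lemma:feasible_traces12}. Since $uv^iw$ witnesses $M_1||M_2\not\vDash P$, each $t_{2,i}$ is an error trace of $M_2$ in the sense required, establishing the lemma.

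The main obstacle I anticipate is bookkeeping rather than a conceptual step: one must be careful that the pumping is performed on the product automaton $\aut{B}$ (whose state count is $N$) and not on $M_2$ alone, and that acceptance in $\aut{B}$ coincides with the error-trace condition of the compositional setting. A minor subtlety worth flagging is that if the cycle $v$ happens to contain no letters from $\alpha M_2$, the projections $t_{2,i}$ may collapse to a single $M_2$-trace; this is still consistent with the lemma as stated (each $uv^iw$ does correspond to an error trace of $M_2$), but would weaken the downstream use of the lemma when arguing that the exact syntactic repair fails to terminate.
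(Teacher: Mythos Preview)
Your proposal is correct and follows essentially the same route as the paper: pump on the product automaton $(M_1 \| M_2)\times P$, use the absence of constraints to conclude that every pumped trace is feasible, and project down to $M_2$. If anything, you are more explicit than the paper in invoking Lemmas~\ref{lemma:traces}--\ref{lemma:feasible_traces12} for the projection step; the paper simply asserts that $t_i\downarrow_{\alpha M_2}$ is an error trace of $M_2$. Your closing remark about the cycle $v$ possibly containing no $\alpha M_2$-letters is a genuine subtlety that the paper does not address; it does not affect the lemma as stated, but you are right that it matters for the intended non-termination conclusion downstream.
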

\begin{proof}
Denote $M = (M_1 || M_2)\times P$. 
Let $t\in \aut{T}(M) $ be an error trace such that $|t| >N$. Then, the run of $M$ on $t$ contains a cycle. Since $M$ is the composition of three components, there is a state $( p, q, r ) $ of $M$ that appears more than once on the run of $M$ on $t$. Let us denote $$t = t[1] t[2] \cdots t[j] \cdots t[k] \cdots t[m]$$ such that the run of $M$ on $t$ visits the state $( p, q, r ) $ when reading $t[j]$ and $t[k]$. Consider the partition
$t = u\cdot v\cdot w$ where
$$u = t[1]\cdots t[j], v=t[j+1]\cdots t[k], w=t[k+1]\cdots t[m]$$
Then, using the argument of the pumping lemma, it holds that $t_i:=uv^i w$ is a trace of $M$, that reaches the same state as $t$, for every $i\geq 0$. 
Now, note that for every trace $t'$ that does not contain constraints, and for every system $M'$, it holds that $t'$ is a trace of $M'$ iff $t'$ is a \emph{feasible trace} of $M'$. In particular,  
since $t$ does not contain constraints, then $t_i$ is a feasible trace of $M$ for every $i$, and therefore is an error trace for every $i$. In particular, it holds that $t_{2^i} = t_i \downarrow_{\alpha M_2}$ is an error trace in $M_2$ and should be eliminated, for every $i$.
\qed
\end{proof}

Lemma~\ref{lemma:syntactic_long} proves that once a long-enough error trace $t$ is detected, in case that $t$ does not contain constraints, then it induces infinitely many error traces. Thus, the exact repair process will never terminate. 
Note that in this scenario, the approximate repair can fix the system, as it diverts all traces of the same nature to a non-accepting state. 

We remark that the same reasoning cannot be applied to traces with constraints. 
\begin{example}
Consider the programs $M_1$ and $M_2$ and the property $P$ of Figure~\ref{fig:long_semantic}, and consider the trace
$$
t =( x:=0, (x:=x+1)^{90}, \mathit{sync}, x < 100 )$$
where $(x:=x+1)^{90}$ means repeating the letter $(x:=x+1)$ for $90$ times. 

The trace $t$ is of length $93$, and it is an error trace in the system $(M_1 || M_2) \times P$, which is of size at most $9$.\footnote{In fact, when composing the systems, the resulting system is of size $5$.} Then, 
even-though
$|t| > 9$, we cannot decompose it as stated in the claim. This, since after applying $x:=x+1$ more than $100$ times, it will no longer be an error trace of the system. 

\end{example}

\begin{figure}[h]
	\centering

%\begin{center}
\begin{tikzpicture}[scale=0.12]
\tikzstyle{every node}+=[inner sep=0pt]
\draw [black] (26.2,-40.4) circle (3);
\draw (26.2,-40.4) node {$p_0$};
\draw [black] (26.2,-40.4) circle (2.4);
%\draw [black] (13.9,-40.4) circle (3);
\draw (13.9,-40.4) node {$M_1$};
\draw [black] (11.6,-24.8) circle (3);
\draw (11.6,-24.8) node {$q_0$};
\draw [black] (28.4,-24.8) circle (3);
\draw (28.4,-24.8) node {$q_1$};
\draw [black] (44.5,-24.8) circle (3);
\draw (44.5,-24.8) node {$q_2$};
\draw [black] (44.5,-24.8) circle (2.4);
%\draw [black] (0.3,-24.8) circle (3);
\draw (0.3,-24.8) node {$M_2$};
\draw [black] (11.6,-14) circle (3);
\draw (11.6,-14) node {$r_0$};
\draw [black] (11.6,-14) circle (2.4);
\draw [black] (28.4,-14) circle (3);
\draw (28.4,-14) node {$r_1$};
\draw [black] (28.4,-14) circle (2.4);
\draw [black] (44.5,-14) circle (3);
\draw (44.5,-14) node {$r_2$};
%\draw [black] (0.3,-14) circle (3);
\draw (0.3,-14) node {$P$};
\draw [black] (18.9,-40.4) -- (23.2,-40.4);
\fill [black] (23.2,-40.4) -- (22.4,-39.9) -- (22.4,-40.9);
\draw [black] (27.523,-43.08) arc (54:-234:2.25);
\draw (26.2,-47.65) node [below] {$\mathit{sync}$};
\fill [black] (24.88,-43.08) -- (24,-43.43) -- (24.81,-44.02);
\draw [black] (5.7,-24.8) -- (8.6,-24.8);
\fill [black] (8.6,-24.8) -- (7.8,-24.3) -- (7.8,-25.3);
\draw [black] (14.6,-24.8) -- (25.4,-24.8);
\fill [black] (25.4,-24.8) -- (24.6,-24.3) -- (24.6,-25.3);
\draw (20,-25.3) node [below] {$x:=0$};
\draw [black] (29.723,-27.48) arc (54:-234:2.25);
\draw (28.4,-32.05) node [below] {$x:=x+1$};
\fill [black] (27.08,-27.48) -- (26.2,-27.83) -- (27.01,-28.42);
\draw [black] (31.4,-24.8) -- (41.5,-24.8);
\fill [black] (41.5,-24.8) -- (40.7,-24.3) -- (40.7,-25.3);
\draw (36.45,-25.3) node [below] {$\mathit{sync}$};
\draw [black] (44.5,-21.8) -- (44.5,-21.8);
%\fill [black] (44.5,-21.8) -- (45,-21) -- (44,-21);
\draw [black] (45.823,-27.48) arc (54:-234:2.25);
\draw (44.5,-32.05) node [below] {$\mathit{sync}$};
\fill [black] (43.18,-27.48) -- (42.3,-27.83) -- (43.11,-28.42);
\draw [black] (6,-14) -- (8.6,-14);
\fill [black] (8.6,-14) -- (7.8,-13.5) -- (7.8,-14.5);
%\draw [black] (8.852,-12.826) arc (274.60129:-13.39871:2.25);
%\fill [black] (10.86,-11.1) -- (11.29,-10.27) -- (10.3,-10.35);
%\draw [black] (8.852,-12.826) arc (274.60129:-13.39871:2.25);
%\fill [black] (10.86,-11.1) -- (11.29,-10.27) -- (10.3,-10.35);
\draw [black] (14.6,-14) -- (25.4,-14);
\fill [black] (25.4,-14) -- (24.6,-13.5) -- (24.6,-14.5);
\draw (20,-14.5) node [below] {$\mathit{sync}$};
\draw [black] (31.4,-14) -- (41.5,-14);
\fill [black] (41.5,-14) -- (40.7,-13.5) -- (40.7,-14.5);
\draw (36.45,-14.5) node [below] {$x<100$};
\draw [black] (27.077,-11.32) arc (234:-54:2.25);
\draw (28.4,-6.75) node [above] {$x\geq 100$};
\fill [black] (29.72,-11.32) -- (30.6,-10.97) -- (29.79,-10.38);
%\draw [black] (11.6,-17) -- (11.6,-21.8);
%\fill [black] (11.6,-21.8) -- (12.1,-21) -- (11.1,-21);
%\draw [black] (28.4,-17) -- (28.4,-21.8);
%\fill [black] (28.4,-21.8) -- (28.9,-21) -- (27.9,-21);
%\draw [black] (44.5,-17) -- (44.5,-21.8);
%\fill [black] (44.5,-21.8) -- (45,-21) -- (44,-21);
%\draw [black] (0.3,-17) -- (0.3,-21.8);
%\fill [black] (0.3,-21.8) -- (0.8,-21) -- (-0.2,-21);
\end{tikzpicture}
%\end{center}
\caption{}
\label{fig:long_semantic}
\end{figure}

Following the discussion in Section~\ref{sec:weak_reg}, note that if all the three components -- $M_1, M_2$, and $P$ -- do not contain constraints, then only syntactic queries are needed, and only syntactic repair is applied.

\subsection{Correctness and Termination} \label{Sec:termination}
For this discussion, we assume a sound and complete teacher who can answer the membership and equivalence queries in $\agl$, which require verifying communicating programs and properties with first-order constraints. Our implementation uses Z3~\cite{de2008z3} in order to answer satisfiability queries issued in the learning process. The soundness and completeness of Z3 depend on the underlying theory (induced by the program statements we allow). For first-order linear arithmetic over the reals, as we consider in this work, this is indeed the case. However, our method can be applied to all theories for which there exists a sound solver.\footnote{In case we use an incomplete solver, then termination of \lstar~iterations is not guaranteed.}  

As we have discussed earlier, AGR is not guaranteed to terminate, due to its repair part. There are indeed cases for which the \repair~stage may be called infinitely many times. 
%Since communicating programs use first-order constraints, the problem is undecidable, and one cannot hope for a sound and complete algorithm. 
However, in case that no repair is needed, or if a repaired system is obtained after finitely many calls to \repair~, then AGR is guaranteed to terminate with a correct answer. 

To see why, consider a repaired system $M_2^i$ for which $M_1 || M_2^i \vDash P$. Since the goal of $\agl$ is to learn $\mathcal{T}(M_2^i)$, which is (syntactically) regular, this stage will terminate at the latest when $\agl$ learns exactly $\mathcal{T}(M_2^i)$ (it may terminate sooner if a smaller appropriate assumption is found). 
Notice that, in particular, if $M_1 || M_2 \vDash P$, then AGR terminates with a correct answer in the first iteration of the verify-repair loop. 

\repair~is only invoked when a (real) error trace $t_2$ is found in $M_2^i$, in which case a new system $M_2^{i+1}$, that does not include $t_2$, is produced by \repair . If $M_1 || M_2^i\nvDash P$, then an error trace is guaranteed to be found by $\agl$ either in the membership or equivalence phase. 
Therefore, also in case that $M_1 || M_2^i$ violates $P$, the iteration is guaranteed to terminate. %To conclude, we have the following.

In particular, since every iteration of AGR finds and removes an error trace $t_2$, and no new erroneous traces are introduced in the updated system, then in case that $M_2$ has finitely many error traces, AGR is guaranteed to terminate with a repaired system, which is correct with respect to $P$.\footnote{Note that finitely many error traces might induce infinitely many erroneous executions, that are all eliminated together when we eliminate $t_2$.}

To conclude the above discussion, Theorem~\ref{theorem} formally states the correctness and termination of the AGR algorithm. Recall that in Algorithm~\ref{alg:AGR} we set $M_2^0 := M_2$ and that $M_2^i$ is the repaired component after $i$ iterations of repair.

\begin{theorem}\label{theorem} \quad
\begin{enumerate}
    \item \label{item:thm_sat}
    If $M_1 || M_2 \vDash P$ then AGR terminates with the correct answer. That is, the output of AGR is an assumption $A_0$ such that $M_1 || A_0 \vDash P$ and $M_2 \subseteq A_0$.
        \item \label{item:thm_sat_iteration}
    If, after $i$ iterations, a repaired program $M_2^i$ is such that $M_1 || M_2^i\vDash P$, then AGR terminates with the correct answer. That is, AGR outputs an assumption $A^i$ for the AG rule (this is a generalization of item~\ref{item:thm_sat}).
    \item \label{item:thm_error_trace}
    If an iteration $i$ of AGR ends with an error trace $t$, then $M_1 || M_2^i \nvDash P$. %where $M_2^i$ is the updated system at iteration $i$.
    \item \label{item:thm_not_sat}
    If $M_1 || M_2^i \nvDash P$ then AGR finds an error trace. In addition, $M^{i+1}_2$, the system post \repair , contains fewer error traces than $M_2^i$.

\end{enumerate}
\end{theorem}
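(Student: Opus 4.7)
My plan is to prove the four items in the order $3 \to 2 \to 1 \to 4$, since the contrapositive of item~3 is exactly what guarantees that no \repair\ is invoked in item~2, and item~1 is just the $i=0$ instance of item~2. The three tools I would rely on are the correctness of \lstar\ for regular targets, the soundness/completeness of the AG rule (Theorem~\ref{thrm:soundness}) together with the trace-projection Lemmas~\ref{lemma:traces}--\ref{lemma:feasible_traces12}, and the guarantee from Lemma~\ref{lemma:AGR_abduction_feas} that the abductive repair does not introduce new \emph{feasible} error traces.

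For item~3, the plan is to unfold the two places in Algorithm~\ref{alg:queries} where the teacher answers with \ansrep: line~\ref{line:repairmq} of \mqor\ and lines~\ref{line:erroreq}--\ref{error3} of \eqor. In the \mq\ case, the guard $t_2\in\mathcal{T}(M_2^i)$ together with $M_1||t_2\nvDash P$ gives a feasible error trace in $(M_1||t_2)\times P$; by Lemma~\ref{lemma:traces} and the fact that $M_t$ embeds into $M_2^i$, this extends to a feasible error trace of $(M_1||M_2^i)\times P$, hence $M_1||M_2^i\nvDash P$. In the \eq\ case, the guard $t_A\in\mathcal{T}(M_2^i)$ together with $t=(t_1||t_A)\times t_P$ being an error trace of $(M_1||A_j^i)\times P$ directly produces an error trace of $(M_1||M_2^i)\times P$.

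For item~2 (and item~1 as its $i=0$ specialization), I would use the contrapositive of item~3: when $M_1||M_2^i\vDash P$, no teacher response is \ansrep, so the whole iteration is a pure \lstar\ session whose target is the regular language $\mathcal{T}(M_2^i)$. By the correctness of \lstar, this session converges in finitely many queries; at the latest, the learner conjectures an automaton $A$ with $\mathcal{T}(A)=\mathcal{T}(M_2^i)$, for which the teacher answers \ansyes\ because both $M_1||A\vDash P$ (the assumption of the item) and $\mathcal{T}(M_2^i)\subseteq\mathcal{T}(A)$ hold. Termination may occur earlier with a smaller sufficient $A^i$; in either case the output is a valid assumption witnessing $M_1||M_2^i\vDash P$ through the AG rule of Theorem~\ref{thrm:soundness}.

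For item~4 the argument splits in two. To show that an error trace \emph{is} found when $M_1||M_2^i\nvDash P$, I would invoke the completeness direction of Theorem~\ref{thrm:soundness} and Lemmas~\ref{lemma:feasible_traces}--\ref{lemma:feasible_traces12}: there is a feasible $t_2\in\mathcal{T}(M_2^i)$ with $M_1||t_2\nvDash P$. Since \lstar\ enumerates queries in increasing length and alphabetical order, only finitely many \mq s and at most finitely many refuted \eq s (each adding a new distinguishing suffix to the observation table) can precede the discovery of such a $t_2$ either in the membership phase or as the projection of an error trace returned by an \eq. To show that $M_2^{i+1}$ strictly loses error traces, I would treat the two \repair\ modes: the syntactic variants (exact, approximate, aggressive) only \emph{delete} traces, so the set of feasible error traces of $(M_1||M_2^{i+1})\times P$ is a strict subset of that of $(M_1||M_2^{i})\times P$ (it drops at least $t_2$); the abductive variant deletes $t_2$ and inserts $t_2' = t_2\cdot c$, but $c$ is constructed so that $\varphi_{t_2}\wedge c$ is unsatisfiable, hence $t_2'$ is infeasible, and Lemma~\ref{lemma:AGR_abduction_feas} then forbids it from inducing any feasible error trace in $(M_1||M_2^{i+1})\times P$. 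The main obstacle I expect is the careful bookkeeping in item~4 to rule out that the algorithm defers discovery of an error trace indefinitely through a chain of spurious \eq s; this is resolved by noting that each refuted \eq\ strictly enlarges the \lstar\ observation table, which is bounded for queries up to any given length, so deferral is finite.
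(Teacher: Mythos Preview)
Your decomposition and the overall line of argument coincide with the paper's: the paper packages exactly your reasoning into three helper lemmas (consistency of teacher answers with $\mathcal{T}(M_2^i)$; termination with an assumption when $M_1\|M_2^i\vDash P$ and discovery of an error trace otherwise; correctness of any returned output), and then reads off items~1--4 from them. Your ordering $3\to 2\to 1\to 4$, the use of the contrapositive of item~3 to ensure a pure \lstar\ run in item~2, and the termination-of-iteration argument in item~4 via bounded growth of the observation table are all faithful to the paper's reasoning.

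There is one technical slip in your treatment of the abductive branch of item~4. You write that ``$c$ is constructed so that $\varphi_{t_2}\wedge c$ is unsatisfiable, hence $t_2'$ is infeasible,'' and then invoke Lemma~\ref{lemma:AGR_abduction_feas}. This is not what abduction guarantees: the constraint $\psi$ is chosen so that $\psi\wedge\varphi_{t}$ is unsatisfiable, where $\varphi_t$ is the SSA formula of the \emph{full} error trace $t=(t_1\|t_2)\times t_P$, not of $t_2$ alone. Consequently $t_2'=t_2\cdot\psi$ may well be feasible as a trace of $M_2^{i+1}$; what becomes infeasible is any composition $(t_1\|t_2')\times t_P$ that extends the original error. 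Lemma~\ref{lemma:AGR_abduction_feas} therefore does not apply in the direction you use it. The correct (and simpler) argument, which the paper states informally, is monotonicity: every accepted trace of $M_2^{i+1}$ is either already an accepted trace of $M_2^i$ or is obtained from one by appending the constraint $\psi$, and appending a constraint can only shrink the set of executions of any composition containing it. Hence no new feasible error trace of $(M_1\|M_2^{i+1})\times P$ is created, while the specific error trace $t$ is eliminated. With this correction your proof of item~4 goes through.
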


The proof of Theorem~\ref{theorem} follows from Lemmas~\ref{theorem:T_M2}, \ref{lemma:termination_cex}, and~\ref{lemma:Lstar_terminates_correctly}, given below. 

\begin{lemma}\label{theorem:T_M2}
%If no counterexample is returned, the language $L^*$ learns during the process above is $\T(M_2)$. 
Every iteration $i$ of the 
AGR algorithm terminates. 
In addition, unless \repair~ is invoked, answers to \mq s and \eq s are
consistent with $\T(M_2^i)$.\footnote{If \repair~ is invoked then, as we remove the trace from $M_2^i$, the answer will not be consistent with $M_2^i$, but it will be consistent with $M_2^{i+1}$.}
That is, whenever 
the $\agl$ teacher (Alg.~\ref{alg:queries}) returns \ansyes~ for a \mq~on $t_2$ or +$t_2$ as a counterexample for an \eq , then indeed $t_2 \in \T(M_2^i)$; and whenever the $\agl$ teacher returns \ansno~ for a \mq~or -$t_2$ as a counterexample for an \eq , then indeed $t_2 \notin \T(M_2^i)$.
\end{lemma}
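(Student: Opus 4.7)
The plan is to split the lemma into two claims handled in parallel: (i) consistency of the teacher's answers with the regular language $\T(M_2^i)$, and (ii) termination of iteration $i$. The cornerstone observation is that, since $M_2^i$ is a finite automaton, $\T(M_2^i)$ is a regular language, so if the teacher of Algorithm~\ref{alg:queries} is consistent with $\T(M_2^i)$ throughout the iteration, the correctness guarantee of \lstar~(see Section~\ref{sec:l*}) immediately yields termination in at most polynomially many queries in $|M_2^i|$.

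For consistency, I would simply walk through the branches of Algorithm~\ref{alg:queries}. For a \mq~on a trace $t_2$: if $t_2\notin\T(M_2^i)$ the teacher answers \ansno\ (line~\ref{endmembership}); if $t_2\in\T(M_2^i)$ and $M_1\|t_2\vDash P$ the teacher answers \ansyes\ (line~\ref{backL1}); the only remaining branch ($t_2\in\T(M_2^i)$ and $M_1\|t_2\nvDash P$) triggers \ansrep\ and therefore falls under the ``unless \repair~is invoked'' clause of the lemma. In both non-repair branches the answer matches membership in $\T(M_2^i)$. For an \eq~on a candidate $A^i_j$, when $M_1\|A^i_j\vDash P$ and $\T(M_2^i)\subseteq\T(A^i_j)$ the teacher returns \ansyes; otherwise, in the counterexample branches, a positive counterexample $+t_2$ is picked explicitly from $\T(M_2^i)\setminus\T(A^i_j)$, while a negative counterexample $-t_A$ arises from an error trace whose projection $t_A$ is tested against $\T(M_2^i)$ and returned with label $-$ exactly when $t_A\notin\T(M_2^i)$ (lines~\ref{line:erroreq}--\ref{endequiv}). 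The only remaining case is $t_A\in\T(M_2^i)$, which again triggers \repair. Hence every label returned to the learner is faithful to membership in $\T(M_2^i)$, so the learner observes a teacher that is answering exactly according to a fixed regular language.

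For termination, there are two scenarios. If \repair~is invoked at any point within iteration $i$, then by Algorithm~\ref{alg:AGR} (lines~\ref{repair1}, \ref{repair2}) the iteration ends as soon as the call returns, so termination is immediate. If \repair~is never invoked during the iteration, the consistency argument above shows that the teacher behaves as an oracle for the regular language $\T(M_2^i)$, and the correctness of \lstar~guarantees that the learner produces a conjecture $A^i_j$ with $\T(A^i_j)=\T(M_2^i)$ after polynomially many \mq s and \eq s in $|M_2^i|$. On such a conjecture, both conditions of the \eq-oracle (line~\ref{equivcheck2}) hold trivially, so the teacher returns \ansyes\ and the iteration halts (line~\ref{terminate_proof}); it may of course halt earlier on a strictly smaller assumption that still passes both AG checks. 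I expect the main obstacle, such as it is, to be the careful case analysis of the \eq-oracle's two ``\ansno'' sub-branches, making sure that the returned counterexample has the sign that is consistent with $\T(M_2^i)$ even though $A^i_j$ may differ from $M_2^i$ in both directions; everything else is a direct reading of Algorithms~\ref{alg:queries} and~\ref{alg:AGR} combined with the regularity of $\T(M_2^i)$ and the standard termination guarantee of \lstar.
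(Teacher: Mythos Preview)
Your proposal is correct and follows essentially the same approach as the paper: establish consistency by a branch-by-branch inspection of Algorithm~\ref{alg:queries}, then derive termination either from \repair\ ending the iteration or from the correctness of \lstar\ applied to the regular target $\T(M_2^i)$. Your write-up is in fact more detailed than the paper's, which dispatches the consistency argument in two sentences; the only phrasing to tighten is that when the learner reaches $A^i_j$ with $\T(A^i_j)=\T(M_2^i)$, the condition $M_1\|A^i_j\vDash P$ does not hold ``trivially'' but rather because its failure would force the \eq-oracle into the \ansrep\ branch (any error trace would have $t_A\in\T(M_2^i)$), contradicting the standing hypothesis that \repair\ is never invoked.
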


\begin{proof}%[Proof of Lemma~\ref{theorem:T_M2}]
Consistency of \mq s with $M_2^i$ is straight forward as the \mqor~ (Alg.~\ref{alg:queries}) answers \mq s according to membership in $\mathcal{T}(M_2^i)$, or invokes \repair . The same holds for \eq s -- in case \repair~is not invoked, \eqor~ returns the counterexample +$t_2$ (Alg.~\ref{alg:queries} line~\ref{L*_cex_inA}) if it is in $M_2^i$ but not part of the assumption, and it returns -$t_2$ (Alg.~\ref{alg:queries} line~\ref{endequiv}) if it is in the assumption but not in $M_2^i$.

If \repair~was not invoked, we have that 
since both types of queries are consistent with $\mathcal{T}(M_2^i)$, which is a regular language, the current iteration terminates due to the correctness of \lstar. If \repair~was invoked, then obviously the iteration terminates, by calling \repair . Also note that \repair~is only called when a real error is found, that is, a trace $t_2 \in \mathcal{T}(M_2^i)$ such that $M_1 || t_2 \nvDash P $. 
\qed
\end{proof}

\begin{lemma}\label{lemma:termination_cex}
If $M_1 || M_2^i \vDash P$, the AGR algorithm terminates with an assumption $A^i$ for the AG rule. 
If $M_1 || M_2^i \nvDash P$, 
AGR finds a counterexample witnessing the violation (and continues to repair $M_2^i$).
\end{lemma}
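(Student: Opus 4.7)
The plan is to prove the two parts of the lemma separately, leveraging Lemma~\ref{theorem:T_M2} for termination of each iteration and Theorem~\ref{thrm:soundness} for soundness of the AG-rule in the communicating programs setting.

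For the first part, assume $M_1 \mathbin{||} M_2^i \vDash P$. Lemma~\ref{theorem:T_M2} guarantees that iteration $i$ terminates. I will argue that under this assumption \repair\ is never invoked during iteration $i$, so the only way the iteration can end is via line~\ref{terminate_proof} of Algorithm~\ref{alg:AGR}, which returns a valid assumption $A^i$. Inspecting Algorithm~\ref{alg:queries}, \repair\ is invoked only when an error trace $t$ in the composition is detected such that the corresponding $t_2 = t\downarrow_{\alpha M_2^i} \in \T(M_2^i)$ and $M_1 \mathbin{||} t_2 \nvDash P$ (line~\ref{line:repairmq} of \mqor\ or line~\ref{error3} of \eqor). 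In both cases this yields a feasible error trace in $(M_1 \mathbin{||} M_2^i)\times P$, contradicting $M_1 \mathbin{||} M_2^i \vDash P$. Therefore, \repair\ is never called, all teacher answers are consistent with the regular language $\T(M_2^i)$ (as discussed in Lemma~\ref{theorem:T_M2}), and by the correctness of \lstar\ a suitable assumption is obtained --- in the worst case the assumption $A^i$ with $\T(A^i) = \T(M_2^i)$ itself, which trivially satisfies both conditions of the AG-rule.

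For the second part, assume $M_1 \mathbin{||} M_2^i \nvDash P$. Again by Lemma~\ref{theorem:T_M2} the iteration terminates. Suppose for contradiction that termination happens at line~\ref{terminate_proof} of Algorithm~\ref{alg:AGR}, i.e.\ via an \eq\ answered \ansyes\ . Then the returned assumption $A_j^i$ satisfies $M_1 \mathbin{||} A_j^i \vDash P$ and $\T(M_2^i) \subseteq \T(A_j^i)$ (lines~\ref{equivcheck1}--\ref{equivcheck2} of \eqor). By the soundness direction of Theorem~\ref{thrm:soundness}, this would imply $M_1 \mathbin{||} M_2^i \vDash P$, contradicting the assumption. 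Hence the iteration must terminate through a call to \repair\ (lines~\ref{repair1} or~\ref{repair2} of Alg.~\ref{alg:AGR}), and the error trace supplied to \repair\ is the witnessing counterexample for $M_1 \mathbin{||} M_2^i \nvDash P$ (note that \mqor\ and \eqor\ only trigger repair upon a genuine error trace $t$ with $t_A \in \T(M_2^i)$, so the counterexample is real and not spurious).

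The main obstacle is the first part: one must be confident that ``consistency of the teacher with $\T(M_2^i)$'' genuinely follows from $M_1 \mathbin{||} M_2^i \vDash P$. In particular, one has to rule out that \eqor\ returns \ansrep\ via line~\ref{error3}, which requires observing that an error trace $t$ in $(M_1 \mathbin{||} A_j^i) \times P$ with $t_A \in \T(M_2^i)$ would produce, via Lemmas~\ref{lemma:traces}, \ref{lemma:traces12}, \ref{lemma:feasible_traces} and~\ref{lemma:feasible_traces12}, a feasible error trace of $(M_1 \mathbin{||} M_2^i)\times P$, contradicting $M_1 \mathbin{||} M_2^i \vDash P$. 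Once this is established, the appeal to \lstar's correctness and to Theorem~\ref{thrm:soundness} for the second part are routine.
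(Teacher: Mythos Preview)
Your proof is correct. For the first part your argument coincides with the paper's: both invoke Lemma~\ref{theorem:T_M2} to get consistency of the teacher with $\T(M_2^i)$ and then appeal to \lstar\ correctness; you are in fact slightly more explicit than the paper in arguing that \repair\ cannot be triggered under the hypothesis $M_1\|M_2^i\vDash P$.

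For the second part you take a different route from the paper. The paper argues constructively: it fixes a feasible error trace $t$ of $(M_1\|M_2^i)\times P$, projects it to $t_2\in\T(M_2^i)$, shows $M_1\|t_2\nvDash P$ via Lemmas~\ref{lemma:feasible_traces} and~\ref{lemma:feasible_traces12}, and then argues that because the learner converges towards $\T(M_2^i)$, either $t_2$ is eventually issued as a membership query or some error trace surfaces in an equivalence query, in either case triggering \repair. You instead argue by contradiction: the iteration terminates by Lemma~\ref{theorem:T_M2}, and termination via \ansyes\ would yield $M_1\|M_2^i\vDash P$ by Theorem~\ref{thrm:soundness}, so the only remaining exit is \repair. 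Your argument is shorter and reuses already-proved results cleanly; the paper's argument is more constructive about \emph{which} counterexample is found, but leans on a somewhat informal claim that the bad trace ``will show up'' during learning.
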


\begin{proof}
Assume that $M_1 || M_2^i \vDash P$. By Lemma~\ref{theorem:T_M2}, the answers to \mq s and \eq s are consistent with $\T(M_2^i)$, and from the correctness of \lstar\ algorithm we conclude that the algorithm will eventually learn $\T(M_2^i)$.
Note that in case that $M_1 || M_2 \vDash P$ and that AGR learned $\T(M_2^i)$, that is $\T(A^i)=\T(M_2^i)$, then the 
the \eqor\ returns \ansyes~ (Alg.~\ref{alg:queries} line~\ref{equivcheck2})
and the algorithm terminates with the assumption $A_i$ as a proof of correctness (Alg.~\ref{alg:AGR} line~\ref{terminate_proof}).

Assume that $M_1 || M_2^i \nvDash P$. 
Then there exists an error trace $t\in (M_1||M_2^i)\times P$. From Lemmas~\ref{lemma:feasible_traces},~\ref{lemma:feasible_traces12} it holds that $t_2 = t\downarrow_{\alpha  M_2^i}$ is feasible in $M_2$. In particular, it holds that $t$ is an error trace of $(M_1 || t_2) \times P$. Thus, $M_1||t_2\nvDash P$. 
Since AGR converges towards $\T(M_2^i)$ (by Lemma~\ref{theorem:T_M2}),  
either $t_2$ shows up as an \mq , and the \mqor\ indicates that repair is needed (Alg~\ref{alg:queries} lines~\ref{line:ifin}-\ref{line:repairmq}); Or AGR comes up with a candidate assumption and issues an \eq\ on it. There, again, $t_2$ (or some other error trace) will come up as an error trace $t_2\in \T(M_2^i)$, 
and the \eqor\ will indicate that repair is needed (Alg~\ref{alg:queries} lines~\ref{line:else}-\ref{error3}). 
\qed
\end{proof}

Note that although each phase converges towards $\T(M_2^i)$, it may terminate earlier. We show that in case that the algorithm terminates before finding $\T(M_2^i)$, it returns the correct answer.

\begin{lemma}
\label{lemma:Lstar_terminates_correctly}
%\lstar\ terminates and returns the correct answer. That is: 
\quad 
\begin{enumerate}
    \item If AGR outputs an assumption $A$, then $M_1 || A \vDash P$ and there exists $i$ such that $\mathcal{T}(M_2^i) \subseteq \mathcal{T}(A)$, thus we can conclude $M_1 || M_2^i \vDash P$.
    \item If a phase $i$ of AGR ends with finding an error trace $t$, then $M_1 || M_2^i \nvDash P$. 
    %a trace $t_2\in \mathcal{T}(M_2^i)$ then we can use $t_2$ to generate an error trace, and thus $M_1 || M_2 \nvDash P$.
\end{enumerate}
\end{lemma}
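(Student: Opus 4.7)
The plan is to prove the two items separately, in each case tracing back from the termination condition in Algorithm~\ref{alg:AGR} to what the teacher (Alg.~\ref{alg:queries}) must have verified just before returning, and then invoking the soundness result for the AG-rule (Theorem~\ref{thrm:soundness}) or the feasibility lemmas from Section~\ref{sec:traces}.

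For item~(1), I would start from the only line at which AGR outputs an assumption, namely Alg.~\ref{alg:AGR} line~\ref{terminate_proof}, which is executed precisely when the guard of line~\ref{line:ifeq} evaluates to true, i.e.\ when \Call{\eq-oracle}{$A_j^i$} returned \ansyes. Inspecting \eqor\ in Alg.~\ref{alg:queries}, \ansyes\ is returned only after both tests in lines~\ref{equivcheck1} and~\ref{equivcheck2} succeeded, that is, $M_1 \| A_j^i \vDash P$ \emph{and} $\T(M_2^i) \subseteq \T(A_j^i)$. Taking $A := A_j^i$, both premises of the AG-rule are satisfied, so by the soundness direction of Theorem~\ref{thrm:soundness} we conclude $M_1 \| M_2^i \vDash P$, establishing item~(1).

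For item~(2), I would distinguish the two places where \repair\ can be invoked (Alg.~\ref{alg:AGR} lines~\ref{repair1} and~\ref{repair2}), corresponding to the two sites in Alg.~\ref{alg:queries} at which the teacher returns \ansrep. In the membership case (Alg.~\ref{alg:queries} line~\ref{line:repairmq}), the teacher has already verified $t_2 \in \T(M_2^i)$ at line~\ref{line:ifin} and $M_1 \| t_2 \nvDash P$ at line~\ref{terminate_cex}; thus the error trace $t \in (M_1 \| t_2)\times P$ found in line~\ref{error1} is in particular an accepted feasible trace of $(M_1 \| M_2^i)\times P$, witnessing $M_1 \| M_2^i \nvDash P$. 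In the equivalence case (Alg.~\ref{alg:queries} lines~\ref{line:erroreq}--\ref{error3}), the teacher has produced a feasible accepted trace $t \in (M_1 \| A_j^i)\times P$ and verified $t_A := t\downarrow_{\alpha A_j^i} \in \T(M_2^i)$; then $t$ is also an accepted trace of $(M_1 \| M_2^i)\times P$ and, by Lemmas~\ref{lemma:feasible_traces} and~\ref{lemma:feasible_traces12}, its feasibility in $(M_1 \| A_j^i)\times P$ transfers to feasibility in $(M_1 \| M_2^i)\times P$, again giving $M_1 \| M_2^i \nvDash P$.

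The only mildly delicate point, and the one I would spell out carefully, is the equivalence-phase subcase of item~(2): one must argue that an error trace discovered through the assumption $A_j^i$ genuinely witnesses a violation by $M_2^i$ rather than a spurious behaviour introduced by the abstraction. This is exactly what the guard $t_A \in \T(M_2^i)$ in line~\ref{error3} guarantees, combined with the trace-restriction lemmas of Section~\ref{sec:traces}; once this is in place, the rest of the argument is a bookkeeping exercise over the two return sites of the teacher.
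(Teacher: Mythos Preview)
Your proposal is correct and follows essentially the same approach as the paper: for item~(1) you trace back to the unique \ansyes\ return site and invoke Theorem~\ref{thrm:soundness}, and for item~(2) you case-split on whether the error trace arose in the \mq\ or \eq\ phase, exactly as the paper does. One small remark: in the \eq\ subcase, the feasibility of $t$ in $(M_1\|M_2^i)\times P$ does not actually require Lemmas~\ref{lemma:feasible_traces} and~\ref{lemma:feasible_traces12}, since feasibility (Definition~\ref{def:exec}) is a property of the trace itself and not of the automaton that generated it; the paper accordingly just asserts the transfer without invoking those lemmas.
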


\begin{proof}
\emph{Item 1.} Assume AGR returns an assumption $A$. 
%We can then conclude that the following holds for $A$: 
Then there exists $i$ such that
$\T(M_2^i) \subseteq \T(A)$ and $M_1 || A \vDash P$, since this is the only scenario in which an assumption $A$ is returned (Alg.~\ref{alg:queries} lines~\ref{equivcheck1}-\ref{equivcheck2}
and Alg~\ref{alg:AGR} lines~\ref{line:ifeq}-\ref{terminate_proof}). 
From the soundness of the AG rule for communicating systems (Theorem~\ref{thrm:soundness}) it holds that $M_1 || M_2^i \vDash P$. 

\emph{Item 2.}
Assume now that a phase $i$ of AGR ends with finding an error trace $t$ (and a call to \repair ). We prove that $M_1 || M_2^i \nvDash P$. 
First note that AGR may output such a trace both while making a \mq\ and while making an \eq . If $t$ was found during a \mq\  (Alg.~\ref{alg:queries} lines~\ref{terminate_cex}-\ref{line:repairmq}), then there exists $t_2\in T(M_2^i)$ such that $M_1 || t_2 \nvDash P$, and $t\in (M_1 || t_2) \times P$. Since $t_2\in \T(M_2^i)$, it holds that $t$ is also an error trace of $(M_1 || M_2^i) \times P$, proving $M_1 || M_2^i \nvDash P$. 

%Thus, there is $t_1 \in M_1$ and a run $r$ on $t_1 || t_2$ such that $r\nvDash P$. Since $t_2\in \T(M_2^i)$, is holds that $r$ is a run of $M_1 || M_2^i$, proving $M_1 || M_2^i \nvDash P$. 

If $t$ was found during an \eq\ (Alg~\ref{alg:queries} lines~\ref{line:else}-\ref{error3}), then $t$ is an error trace in $(M_1 || A_j^i) \times P$. Moreover, $t\downarrow_{\alpha A_j^i} \in \T(M_2^i)$. This makes $t$ an error trace of $(M_1 || M_2^i)\times P$ as well, thus  $M_1 || M_2^i \nvDash P$ . This concludes the proof.
\qed
\end{proof}

The proof of Theorem~\ref{theorem} follows almost directly from the lemmas above. 
\begin{proof}[Proof of Theorem~\ref{theorem}]
Lemma~\ref{lemma:termination_cex} states that if $M_1 || M_2^i \vDash P$ then AGR terminates with the correct answer. This implies item~\ref{item:thm_sat} and item~\ref{item:thm_sat_iteration}.

In addition, Lemma~\ref{lemma:termination_cex} states that if $M_1 || M_2^i \nvDash P$ then AGR finds an error trace witnessing the violation. Once such an error trace is found, \repair~is invoked. Since \repair~eliminates at least one error trace, the system post \repair~contains fewer error traces, and item~\ref{item:thm_not_sat} follows.

Lemma~\ref{lemma:Lstar_terminates_correctly} states that if an iteration $i$ of AGR ends with an error trace, then $M_1 || M_2^i \nvDash P$. This implies item~\ref{item:thm_error_trace}.
\qed
\end{proof}

\section{Experimental Results}\label{results}

\begin{table*}[t]
\centering
\caption{AGR algorithm results on various examples}
\begin{tabular}{@  {}     |c|c|c|c|c|c|c|c|c|@{}}  
%\begin{tabular}{@{}lllllllll@{}}
\toprule
\textbf{Example}               & \textbf{$M_1$ Size} & \textbf{$M_2$ Size} & \textbf{$P$ Size} & \textbf{Time (sec.)} & \textbf{$A$ size} & \textbf{Repair Size} & \textbf{Repair Method} & \textbf{\#Iterations} \\ \hline
% \#1    & 4                & 4                & 3               & 0.2                  & 3               & \multicolumn{3}{c|}{verification}                                    \\
% \hline
% \#2   & 16               & 16               & 3               & 1.8                  & 4               & \multicolumn{3}{c|}{verification}                           \\
% \hline
% \#3   & 32               & 32               & 3               & 11.1                 & 6               & \multicolumn{3}{c|}{verification}                    \\
\hline
\#4   & 64               & 64               & 3               & 95                   & 7               & \multicolumn{3}{c|}{verification}                          \\
% \hline
% \multirow{3}{*}{\#5}                 & \multirow{3}{*}{2}                & \multirow{3}{*}{3}                 & \multirow{3}{*}{2}              & 0.08                 & 3               & 3                    & aggress.            & 2                          \\

 %                    &  &    &        & 0.09                 & 4               & 4                    & approx.          & 2                          \\
%                      &  & &       & 0.108                & 6               & 9                    & exact                & 2                          \\
\hline

\multirow{3}{*}{\#6}                 & \multirow{3}{*}{2}                & \multirow{3}{*}{27}                 & \multirow{3}{*}{2}        & 0.106                & 5               & 27                   & aggress.            & 2                          \\
                 &                 &                &                & 0.126                & 6               & 28                   & approx.          & 2                          \\
                 &                 &                &                & 0.132                & 8               & 81                   & exact                & 2                          \\
                 \hline
\multirow{3}{*}{\#7}                 & \multirow{3}{*}{2}                & \multirow{3}{*}{81}                 & \multirow{3}{*}{2}      & 0.13                 & 6               & 81                   & aggress.            & 2                          \\
                 &                 &                &                & 0.138                & 7               & 82                   & approx.          & 2                          \\

                 &                 &                &                & 0.165                & 9               & 243                  & exact                & 2                          \\
                 \hline
\multirow{3}{*}{\#8}                 & \multirow{3}{*}{2}                & \multirow{3}{*}{243}                 & \multirow{3}{*}{2}                & 0.15                 & 8               & 243                  & aggress.            & 2                          \\

                 &                 &               &                & 0.17                 & 8               & 244                  & approx.          & 2                          \\
                 &                 &               &                & 0.223                & 10              & 729                  & exact                & 2                          \\
%                \hline
% \#9                        & 2                & 4                & 3               & 0.093                & 3               & \multicolumn{3}{c|}{verification}                                  \\
% \hline
%\#10                        & 3                & 16               & 4               & 0.29                 & 13              & \multicolumn{3}{c|}{verification}              \\
\hline
\#11                        & 5                & 256              & 6               & 4.88                 & 92              & \multicolumn{3}{c|}{verification}              \\
%\hline
%\#12                    & 2                & 4                & 3               & 0.08                 & 3               & \multicolumn{3}{c|}{verification}      \\
%\hline
%\#13                     & 3                & 16               & 4               & 0.22                 & 10              & \multicolumn{3}{c|}{verification}                              \\
\hline
\#14                     & 5                & 256              & 6               & 4.44                 & 109             & \multicolumn{3}{c|}{verification}                      \\
\hline
\multirow{3}{*}{\#15}                 & \multirow{3}{*}{3}                & \multirow{3}{*}{16}                 & \multirow{3}{*}{5}        & 0.69                 & 12              & 16                   & aggress.            & 5                                       \\

            &                 &                &                & 0.28                 & 13              & 18                    & approx.          & 3                             \\
          &                &                &                & 4.27                 & 44              & 864                  & exact                & 5                          \\
          \hline
\multirow{3}{*}{\#16}&             \multirow{3}{*}{4}    &     \multirow{3}{*}{256}                   &        \multirow{3}{*}{8}        & 6.63                 & 113             & 256                  & aggress.            & 2                          \\

                 &                 &         &                & 5.94                 & 113             & 257                  & approx.          & 2                          \\

              &                 &               &                & 12.87                & 155             & 1280                 & exact                & 2                          \\
%              \hline
%\#17        & 2                & 3                & 4               & 0.075                & 3               & \multicolumn{3}{c|}{verification}              \\
%\hline
%\multirow{3}{*}{\#18}&\multirow{3}{*}{2}&\multirow{3}{*}{3}& \multirow{3}{*}{4}& 0.34& 5& 4& aggress.& 2\\
% &&&& 0.37& 5& 4& approx.& 2\\
%&&&& 0.488 &5&4& exact & 2   \\
 \hline
\multirow{3}{*}{\#19}                 & \multirow{3}{*}{3}                & \multirow{3}{*}{16}                 & \multirow{3}{*}{5}            & 1.07                 & 18              & 18                   & aggress.            & 3                          \\
 &                 &                &                & 1.12                 & 18              & 18                   & approx.          & 3                          \\
 &                 &                &                & 1.26                 & 18              & 18                   & exact                & 3                          \\
% \hline
%\#20       & 9                & 6                & 15              & 0.1                  & 6               & \multicolumn{3}{c|}{verification}                          \\
%\hline
%\#21     & 11               & 13               & 17              & 0.18                 & 11              & \multicolumn{3}{c|}{verification}                             \\
\hline
\multirow{3}{*}{\#22}& \multirow{3}{*}{2}& \multirow{3}{*}{4}& \multirow{3}{*}{2}& 0.09 & 1 & 4 (trivial)& aggress.  &    4\\
     &&&& 0.21 & 6 & 8& approx.  &    5\\

&&&& \multicolumn{3}{c|}{timeout}& exact  &timeout\\

%\hline
%\multirow{3}{*}{\#23}&\multirow{3}{*}{11}&\multirow{3}{*}{12}& \multirow{3}{*}{17}& 0.24& 1& 12 (trivial)& aggress.& 2\\
 %&&&& 0.25& 1& 13 (trivial) & approx.& 2\\
%&&&& 0.26 &1&144 (trivial)& exact & 2   \\
% \hline     
%\multirow{3}{*}{\#24}&\multirow{3}{*}{4}&\multirow{3}{*}{8}& \multirow{3}{*}{5}& 0.35& 6& 9 & aggress.& 2\\
% &&&& 0.34& 6& 9 & approx.& 2\\
%&&&& 0.36 &6&9& exact & 2   \\
% \hline     

%\#25     & 3               & 5               & 5             & 0.13                 & 4              & \multicolumn{3}{c|}{verification}                             \\
% \hline     

%\#26     & 4               & 5               & 5             & 0.11                 & 4              & \multicolumn{3}{c|}{verification}                             \\
%      \hline     

%\#27     & 4               & 8               & 5             & 0.35                 & 5              & \multicolumn{3}{c|}{verification}                        \\     

\bottomrule

\end{tabular}

\label{table:results}
\end{table*}

We implemented our AGR framework in Java, integrating the \lstar\ learner implementation from the LTSA tool
\cite{DBLP:books/daglib/0097696}. We used Z3~\cite{de2008z3} to implement 
calls to the teacher while answering 
the satisfaction queries in Algorithm~\ref{alg:queries}, and for abduction in \repair . 

Table~\ref{table:results} displays results of running AGR on various examples, varying in their sizes, types of errors -- semantic and syntactic, and the number of errors. Additional results are available in~\cite{hadarphd}. 
The examples are available on~\cite{examples}. 
The {\em iterations} column indicates the number of iterations of the verify-repair loop, until a repaired $M_2$ is achieved. 
Examples with no errors were verified in the first iteration, and are indicated by {\em verification}.
We experimented with the three repair methods described in Section~\ref{sec:methods}. 
Figure~\ref{fig:graphs} presents comparisons between the three methods in terms of run-time and the size of the repair and assumptions. The graphs are given in logarithmic scale. 

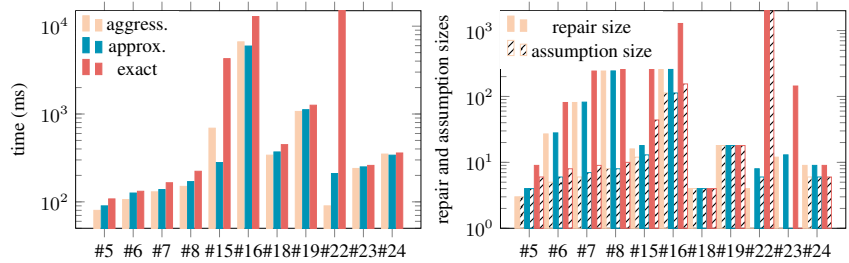
\begin{figure}

%\resizebox{\textwidth}{!}{

    \centering
    \begin{tabular}{cc}

\begin{tikzpicture}[scale=0.8]
\begin{axis}[
    ybar=0.5pt,
    ymin=50,
    ymax = 15000,
    bar width=3pt,
    legend style={font=\small, at={(0,1)},anchor=north west, draw=none},
    ylabel={time (ms)},
     ymode=log,
        log basis y={10},
    symbolic x coords={\#5,\#6,\#7,\#8,\#15, \#16, \#18, \#19, \#22, \#23, \#24},
    xtick=data,
   %nodes near coords,
    %every node near coord/.style={font=\tiny},
    %nodes near coords style={font=\tiny},
    %nodes near coords align={vertical},
    x tick label style={font=\small, %rotate=30,anchor=east
    },
    point meta=rawy,
    width = 0.6\textwidth,
    ]
\addplot[apricot, fill]  coordinates {(\#5,80) (\#6,106) (\#7,130) (\#8, 150) (\#15, 690) (\#16,6630) (\#18, 340) (\#19, 1070)(\#22,90)(\#23,240)(\#24,350)};
\addplot[bondiblue,fill] coordinates {(\#5,90) (\#6,126) (\#7,138) (\#8, 170) (\#15, 280) (\#16, 5940) (\#18, 370) (\#19, 1120) (\#22,210)(\#23,250)(\#24,340)};
\addplot[lightcarminepink, fill] coordinates {(\#5,108) (\#6,132) (\#7,165) (\#8, 223) (\#15, 4270) (\#16,12870) (\#18, 448) (\#19, 1260)(\#22,10000000)(\#23,260)(\#24,360)};
\legend{aggress.  ,approx. , exact}
\end{axis}

\end{tikzpicture}

         &
        % \hspace{0.2cm}
    
\begin{tikzpicture}[scale=0.8]

\begin{axis}[
    ybar=0.2pt,
    ymin=1,
    ymax = 2000,
    bar width=2.1pt,
    legend style={font=\small, at={(0,1)},anchor=north west, draw=none},
    ylabel={repair and assumption sizes},
     ymode=log,
        log basis y={10},
    symbolic x coords={\#5,\#6,\#7,\#8,\#15, \#16, \#18, \#19, \#22, \#23, \#24},
    xtick=data,
   %nodes near coords,
    %every node near coord/.style={font=\tiny},
    %nodes near coords style={font=\tiny},
    %nodes near coords align={vertical},
    x tick label style={font=\small, %rotate=30,anchor=east
    },
    point meta=rawy,
    width = 0.6\textwidth
    ]
\addplot[apricot,fill] coordinates {(\#5,3) (\#6,27) (\#7,81) (\#8, 243) (\#15, 16) (\#16,256) (\#18, 4) (\#19, 18)(\#22,4)(\#23,12)(\#24,9)}; %aggressive repair aize
\addplot[apricot, pattern=north east lines] coordinates {(\#5,3) (\#6,5) (\#7,6) (\#8, 8) (\#15, 12) (\#16, 113) (\#18, 4) (\#19, 18) (\#22,1)(\#23,1)(\#24,6)}; %agressive assumption size
\addplot+[bondiblue,fill] coordinates {(\#5,4) (\#6,28) (\#7,82) (\#8, 244) (\#15, 18) (\#16, 257) (\#18, 4) (\#19, 18) (\#22,8)(\#23,13)(\#24,9)}; %approximate repair size
\addplot+[bondiblue,pattern=north east lines] coordinates {(\#5,4) (\#6,6) (\#7,7) (\#8, 8) (\#15, 13) (\#16,113) (\#18, 4) (\#19, 18)(\#22,6)(\#23,1)(\#24,6)}; % approximate assumption size
\addplot+[lightcarminepink, fill] coordinates {(\#5,9) (\#6,81) (\#7,243) (\#8, 729) (\#15, 864) (\#16,1280) (\#18, 4) (\#19, 18)(\#22,1000000)(\#23,144)(\#24,9)}; %exact repair size
\addplot+[lightcarminepink, pattern=north east lines] coordinates {(\#5,6) (\#6,8) (\#7,9) (\#8, 10) (\#15, 44) (\#16, 155) (\#18, 4) (\#19, 18) (\#22,100000)(\#23,1)(\#24,6)}; %exact assumption size
\legend{repair size, assumption size}
\end{axis}
%\label{fig:graphs}
\end{tikzpicture}

\end{tabular}
    \caption{Comparing repair methods: time and repair size (logarithmic scale).}
    \label{fig:graphs}
    
%    }%end resize
\end{figure}
Most of our examples model multi-client-server communication protocols, with varying sizes. Our tool managed to repair all those examples
that were flawed.

As can be seen in Table~\ref{table:results}, our tool successfully generates assumptions that are significantly smaller than the repaired and the original $M_2$. 

For the examples that needed repair, in most cases our tool needed 2-5 iterations of verify-repair in order to successfully construct a repaired component. 
Interestingly, in example $\#15$ the {\em aggressive} method converged slower than the {\em approximate} method. This is due to the structure of $M_2$, in which different error traces lead to different states. Marking these states as non-accepting removed each trace separately. However, some of these traces have a common transition, and preventing this transition from reaching an accepting state, as done in the {\em approximate} method, managed removing several error traces in a single repair. 

Example $\#22$ models a simple structure in which, due to a loop in $M_2$, the same alphabet sequence can generate infinitely many error traces. The \emph{exact} repair method timed out, since it attempted removing one error trace at a time. On the other hand, the \emph{aggressive} method removed all accepting states, creating an empty program -- a trivial (yet valid) repair. In contrast, the {\em approximate} method created a valid, non-trivial repair. 

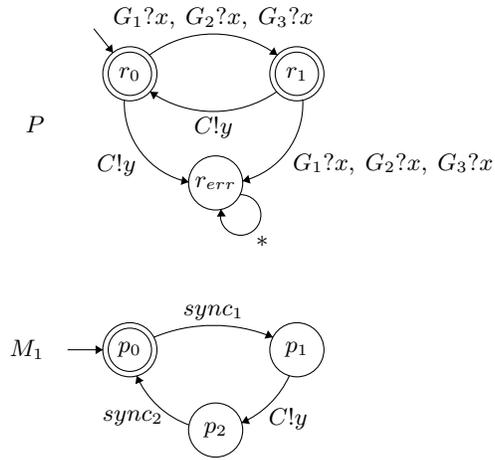
\begin{figure}
    \centering

\begin{center}
\begin{tikzpicture}[scale=0.12]
\tikzstyle{every node}+=[inner sep=0pt]
\draw [black] (21.6,-13.6) circle (3);
\draw (21.6,-13.6) node {$r_0$};
\draw [black] (21.6,-13.6) circle (2.4);
\draw [black] (40.1,-13.6) circle (3);
\draw (40.1,-13.6) node {$r_1$};
\draw [black] (40.1,-13.6) circle (2.4);
\draw [black] (31.1,-25.7) circle (3);
\draw (31.1,-25.7) node {$r_{err}$};
%\draw [black] (11.1,-19.2) circle (3);
\draw (11.1,-19.2) node {$P$};
\draw [black] (21.6,-44.2) circle (3);
\draw (21.6,-44.2) node {$p_0$};
\draw [black] (21.6,-44.2) circle (2.4);
\draw [black] (40.1,-44.2) circle (3);
\draw (40.1,-44.2) node {$p_1$};
%\draw [black] (10.2,-44.2) circle (3);
\draw (10.2,-44.2) node {$M_1$};
\draw [black] (31.1,-53.1) circle (3);
\draw (31.1,-53.1) node {$p_2$};
\draw [black] (17.6,-8.6) -- (19.73,-11.26);
\fill [black] (19.73,-11.26) -- (19.62,-10.32) -- (18.84,-10.95);
\draw [black] (23.761,-11.53) arc (126.54534:53.45466:11.905);
\fill [black] (37.94,-11.53) -- (37.59,-10.65) -- (37,-11.46);
\draw (30.85,-8.69) node [above] {$G_1?x,\mbox{ }G_2?x,\mbox{ }G_3?x$};
\draw [black] (37.864,-15.589) arc (-55.3204:-124.6796:12.327);
\fill [black] (23.84,-15.59) -- (24.21,-16.46) -- (24.78,-15.63);
\draw (30.85,-18.28) node [below] {$C!y$};
\draw [black] (40.752,-16.513) arc (2.83651:-76.12038:8.794);
\fill [black] (34.08,-25.49) -- (34.97,-25.78) -- (34.73,-24.81);
\draw (39.6,-23.59) node [right] {$G_1?x,\mbox{ }G_2?x,\mbox{ }G_3?x$};
\draw [black] (28.116,-25.601) arc (-101.54069:-182.18669:8.918);
\fill [black] (28.12,-25.6) -- (27.43,-24.95) -- (27.23,-25.93);
\draw (22.32,-23.79) node [left] {$C!y$};
\draw [black] (33.789,-27.003) arc (91.87498:-196.12502:2.25);
\draw (36.25,-31.52) node [below] {$*$};
\fill [black] (31.7,-28.63) -- (31.23,-29.44) -- (32.23,-29.41);
\draw [black] (14.7,-44.2) -- (18.6,-44.2);
\fill [black] (18.6,-44.2) -- (17.8,-43.7) -- (17.8,-44.7);
\draw [black] (24.256,-42.813) arc (112.57915:67.42085:17.175);
\fill [black] (37.44,-42.81) -- (36.9,-42.04) -- (36.51,-42.97);
\draw (30.85,-41) node [above] {$sync_1$};
\draw [black] (39.269,-47.072) arc (-24.39086:-66.24931:10.41);
\fill [black] (33.98,-52.3) -- (34.91,-52.44) -- (34.51,-51.52);
\draw (39.13,-50.65) node [below] {$C!y$};
\draw [black] (28.165,-52.536) arc (-109.51627:-156.74837:9.939);
\fill [black] (22.35,-47.09) -- (22.21,-48.02) -- (23.13,-47.63);
\draw (21.92,-50.91) node [below] {$sync_2$};
\end{tikzpicture}
\end{center}

    \caption{Program $M_1$ and specification $P$ of Example~\ref{ex:results}}
    \label{fig:resMP}
\end{figure}

\begin{figure}
    \centering
    
\begin{center}
\begin{tikzpicture}[scale=0.12]
\tikzstyle{every node}+=[inner sep=0pt]
\draw [black] (21.2,-25.2) circle (3);
\draw (21.2,-25.2) node {$q_0$};
\draw [black] (21.2,-25.2) circle (2.4);
\draw [black] (31.8,-8.6) circle (3);
\draw (31.8,-8.6) node {$q_1$};
%\draw [black] (7,-25.2) circle (3);
\draw (7,-25.2) node {$M_2$};
\draw [black] (40.3,-4.5) circle (3);
\draw (40.3,-4.5) node {$q_2$};
\draw [black] (31.8,-25.2) circle (3);
\draw (31.8,-25.2) node {$q_4$};
\draw [black] (40.3,-19.8) circle (3);
\draw (40.3,-19.8) node {$q_5$};
\draw [black] (31.8,-41.4) circle (3);
\draw (31.8,-41.4) node {$q_7$};
\draw [black] (40.3,-36) circle (3);
\draw (40.3,-36) node {$q_8$};
\draw [black] (48.1,-8.6) circle (3);
\draw (48.1,-8.6) node {$q_3$};
\draw [black] (48.1,-8.6) circle (2.4);
\draw [black] (48.1,-25.2) circle (3);
\draw (48.1,-25.2) node {$q_6$};
\draw [black] (48.1,-25.2) circle (2.4);
\draw [black] (48.1,-41.4) circle (3);
\draw (48.1,-41.4) node {$q_9$};
\draw [black] (48.1,-41.4) circle (2.4);
\draw [black] (14.3,-25.2) -- (18.2,-25.2);
\fill [black] (18.2,-25.2) -- (17.4,-24.7) -- (17.4,-25.7);
\draw [black] (22.81,-22.67) -- (30.19,-11.13);
\fill [black] (30.19,-11.13) -- (29.33,-11.53) -- (30.18,-12.07);
\draw (25.88,-15.59) node [left] {$G_1?x$};
\draw [black] (33.122,-5.942) arc (139.16905:92.3319:5.971);
\fill [black] (37.4,-3.88) -- (36.58,-3.41) -- (36.62,-4.41);
\draw (32.32,-3.95) node [above] {$sync_1$};
\draw [black] (24.2,-25.2) -- (28.8,-25.2);
\fill [black] (28.8,-25.2) -- (28,-24.7) -- (28,-25.7);
\draw [black] (33.327,-22.636) arc (138.932:105.92311:8.349);
\fill [black] (37.33,-20.09) -- (36.42,-19.83) -- (36.7,-20.79);
\draw (32.39,-20.57) node [above] {$sync_1$};
\draw [black] (22.84,-27.71) -- (30.16,-38.89);
\fill [black] (30.16,-38.89) -- (30.14,-37.95) -- (29.3,-38.49);
\draw [black] (33.162,-38.749) arc (141.43766:103.41745:7.564);
\fill [black] (37.32,-36.11) -- (36.43,-35.81) -- (36.66,-36.78);
\draw (32.27,-36.58) node [above] {$sync_1$};
\draw [black] (43.166,-3.77) arc (87.63139:36.9121:5.157);
\fill [black] (47.08,-5.82) -- (47,-4.88) -- (46.2,-5.49);
\draw (48.09,-3.85) node [above] {$sync_2$};
\draw [black] (45.468,-10.027) arc (-67.49407:-112.50593:14.414);
\fill [black] (34.43,-10.03) -- (34.98,-10.8) -- (35.36,-9.87);
\draw (39.95,-11.63) node [below] {$G_1?x$};
\draw [black] (43.234,-20.347) arc (69.21071:41.39898:8.406);
\fill [black] (46.56,-22.65) -- (46.4,-21.72) -- (45.65,-22.38);
\draw (47.79,-20.8) node [above] {$sync_2$};
\draw [black] (45.276,-26.205) arc (-74.66685:-105.33315:20.143);
\fill [black] (34.62,-26.21) -- (35.26,-26.9) -- (35.53,-25.93);
\draw (39.95,-27.42) node [below] {$G_2?x$};
\draw [black] (43.243,-36.487) arc (69.97582:40.63388:8.082);
\fill [black] (46.61,-38.82) -- (46.47,-37.88) -- (45.71,-38.54);
\draw (47.83,-36.94) node [above] {$sync_2$};
\draw [black] (45.336,-42.557) arc (-72.17163:-107.82837:17.592);
\fill [black] (34.56,-42.56) -- (35.17,-43.28) -- (35.48,-42.33);
\draw (39.95,-43.9) node [below] {$G_3?x$};
\end{tikzpicture}
\end{center}
    \caption{Program $M_2$ of Example~\ref{ex:results}}
    \label{fig:resm2}
\end{figure}
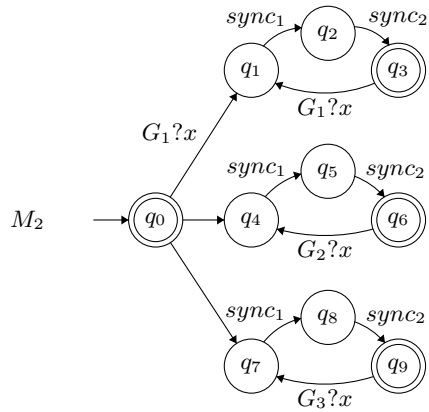
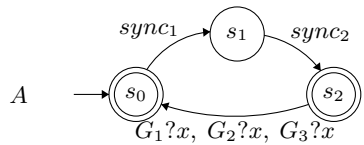
\begin{figure}
    \centering
    
\begin{center}
\begin{tikzpicture}[scale=0.12]
\tikzstyle{every node}+=[inner sep=0pt]
\draw [black] (21.6,-44.2) circle (3);
\draw (21.6,-44.2) node {$s_0$};
\draw [black] (21.6,-44.2) circle (2.4);
\draw [black] (32.8,-37.5) circle (3);
\draw (32.8,-37.5) node {$s_1$};
%\draw [black] (8.6,-44.2) circle (3);
\draw (8.6,-44.2) node {$A$};
\draw [black] (43.5,-44.2) circle (3);
\draw (43.5,-44.2) node {$s_2$};
\draw [black] (43.5,-44.2) circle (2.4);
\draw [black] (14.7,-44.2) -- (18.6,-44.2);
\fill [black] (18.6,-44.2) -- (17.8,-43.7) -- (17.8,-44.7);
\draw [black] (22.787,-41.459) arc (147.26821:94.50872:9.227);
\fill [black] (29.82,-37.25) -- (28.99,-36.81) -- (29.07,-37.81);
\draw (23.06,-38.03) node [above] {$sync_1$};
\draw [black] (35.786,-37.647) arc (78.88854:37.00448:10.362);
\fill [black] (42.06,-41.58) -- (41.98,-40.64) -- (41.18,-41.24);
\draw (42.04,-38.53) node [above] {$sync_2$};
\draw [black] (40.714,-45.308) arc (-71.62703:-108.37297:25.901);
\fill [black] (24.39,-45.31) -- (24.99,-46.03) -- (25.3,-45.09);
\draw (32.55,-47.13) node [below] {$G_1?x,\mbox{ }G_2?x,\mbox{ }G_3?x$};
\end{tikzpicture}
\end{center}

    \caption{Assumption $A$ of Example~\ref{ex:results}}
    \label{fig:resass}
\end{figure}
\begin{example}\label{ex:results}
As long as the system needs repair, no assumption can be learned. When we reach a correct $M_2$, we are usually able to find a smaller assumption that proves the correctness of $M_1 || M_2$ with respect to $P$. 
Our tool preforms the best on examples in the spirit of $M_1$ and $P$ of Figure~\ref{fig:resMP} and $M_2$ of Figure~\ref{fig:resm2}. There, the specification $P$ allows all traces in which first $M_2$ acts on one of its channels ($G_1, G_2$ or $G_3$), and then $M_1$ acts on its channel ($C$). The program $M_2$ in Figure~\ref{fig:resm2} is more restrictive than $P$ requires -- once the variable $x$ is read through some channel, $M_2$ continues to use only this channel. $M_1 || M_2 \vDash P$ due to the restriction on their synchronization using $sync_1$ and $sync_2$. We are then able to learn
the assumption $A$ of Figure~\ref{fig:resass}, which is much smaller than $M_2$, and allows proving the correctness of 
 $M_1 || M_2$ with respect to $P$. 
 
 This example demonstrates a similar behavior to that of examples $\#4, \#6, \#7$ and $\#8$ of Table~\ref{table:results}. 
\end{example}

\begin{example}\label{ex:res2}
Consider now $M_1$ and $P$ of Figure~\ref{fig:two_systems_spec} and the repaired $M_2^1$ of Figure~\ref{fig:repairabd}. In this case, we learn an assumption with $5$ states, that is the result of merging states $q_1$ and $q_2$ in $M_2^1$. This is since we answer queries according to $M_2^1$, which
has a more unique structure than the structure of $M_2$ of Figure~\ref{fig:resm2}. This demonstrates the behaviour of example $\#19$ in Table~\ref{table:results}.
\end{example}

\label{sec:experiments}
\section{Conclusion and 
Future Work}\label{sec:future}
We have presented the model of communicating programs, which is able to capture program behavior and synchronization between the system components, while exploiting a finite automata representation in order to apply automata learning. 
We then presented
AGR, which offers a new take on the learning-based approach to assume-guarantee verification, by managing to cope with complex properties and by also repairing infinite-state programs.

Our experimental results show that %using existing semantic tools, 
AGR can produce very succinct proofs, and can repair flawed communicating programs efficiently. 
AGR leverages the finite automata-like representation of the systems in order to apply the \lstar\ algorithm and to learn small proofs of correctness. 

We prove that in general, the weakest assumption that is often used for compositional verification, is not regular for the case of communicating programs, and we come up with a new goal for the learning process. In addition, we find types of communicating programs for which the weakest assumption is regular. 
We leave finding the full characterization of programs for which the weakest assumption is regular for future work. %As a future work, it will be interesting to consider the number of variables in the different components, or the type of constraints, and their effect on the regularity of the weakest assumption, and on the learning process. 

In this work, we repair the system by eliminating error traces, and locate new constraints learned using abduction, at the end of the error trace, in order to make it infeasible. A possible extension of this process is to wisely locate constraints over the error trace. Intuitively, we would like the constraint to be as "close'' as possible to the error location. However, this is not a trivial task, as the error can be the result of multiple actions of the two communicating programs. 
Another extension to the repair process is \emph{changing} the program behavior, rather than blocking it. Examples for such a mutation-based approach to program repair are~\cite{DBLP:conf/fm/RothenbergG16,DBLP:conf/hvc/BloemDFFHKRRS12,DBLP:conf/vmcai/NguyenTC19}.

For syntactic repair, we characterize cases in which the repair process does not converge. This may happen also in the case of semantic repair, in which infinitely many new constraints are learned and the repair process does not terminate. As future work, we intend to incorporate invariant generation, according to reoccurring error traces, in order to help the convergence of the semantic repair process. 

\label{sec:conclusion}

\begin{comment}

\section{additions from the conference version}
we should write those in the paragraph at the end of the intro. 
\begin{itemize}
    \item full proofs
    \item section~\ref{sec:teacher} discusses the implementation of the teacher
    \item formal discussion regarding soundness, completeness and termination
    \item Sec.~\ref{AGR:AR_rule_reg} - formal definition of weakest assumption, lemma about regular and its proof
    \item traces in the composed system - Sec. \ref{sec:tarces}
    \item Sections~\ref{sec:weak_reg} and ~\ref{sec:syntactic_convg}
    \item future work \ref{sec:future} 
\end{itemize}

\end{comment}

% Authors must disclose all relationships or interests that 
% could have direct or potential influence or impart bias on 
% the work: 
%
% \section*{Conflict of interest}
%
% The authors declare that they have no conflict of interest.

% BibTeX users please use one of
\bibliographystyle{plainurl}      % basic style, author-year citations
\bibliography{ref}   % name your BibTeX data base

% Non-BibTeX users please use
%\begin{thebibliography}{}
%
% and use \bibitem to create references. Consult the Instructions
% for authors for reference list style.
%
%\bibitem{RefJ}
% Format for Journal Reference
%Author, Article title, Journal, Volume, page numbers (year)
% Format for books
%\bibitem{RefB}
%Author, Book title, page numbers. Publisher, place (year)
% etc
%\end{thebibliography}

\newpage
\appendix
\section{Appendix}
\subsection{Full Proofs}\label{app:traces}

\paragraph{Lemma~\ref{lemma:traces12} (restated)}
Let  $M_1, M_2$ be two programs, and let $t$ be a trace of $M_1 || M_2$. Then $t\downarrow_{\alpha M_1}$ is a trace of $M_1$ and $t\downarrow_{\alpha M_2}$ is a trace of $M_2$.

\begin{proof}
Let ${M_i} = \langle Q_i, X_i, \alpha M_i, \delta_i, q_0^i, F_i\rangle $
for $i=1,2$, and denote $M_1 || M_2 = \langle (Q_1\times Q_2)\cup(Q'_1\times Q'_2) , X_1\cup X_2, \alpha M , \delta, (q_0^1, q_0^2), F_1 \times F_2\rangle$, as defined in Section~\ref{sec:parallel}.

Let $r = \langle q_0, c_1, q_1\rangle \ldots \langle q_{m-1}, c_m, q_m \rangle$ be the run in $M_1 || M_2 $ such that $t$ is induced from $r$. 
Denote by $t_1 = t\downarrow_ {\alpha M_1}$  the trace $t_1 = (c_{i_1}, \cdots , c_{i_n})$. The proof for $t_2 =  t\downarrow_ {\alpha M_2}$ is the same.  

We first observe the following. If $(a_1, \cdots, a_k)$ is a trace of $M_1|| M_2$ such that $\forall i: a_i\notin \alpha M_1$,
and $a_1\in\alpha M_2$, 
and $q = (q_1, q_2^0)$ is the state in $M_1 || M_2$ before reading $a_1$,  then $\forall i\geq 1: \exists q_2^i: \delta((q_1, q_2^{i-1}), a_i) = (q_1, q_2^{i})$, that is, when reading a trace that does not contain letters from $\alpha M_1$, the program $M_1||M_2$ only advances on the $M_2$ component. This is true since by the definition of $\delta$, if $a_i$ is not in $\alpha M_1$, then $\delta((q_1, q_2), a_i) = (q_1, \delta_2(q_2, a_i))$. %Informally, it means that $\delta$ only advances in the $P$ component and not in the $M$ component.  
The requirement of $a_1\in\alpha M_2$ is since if $a_1\notin \alpha m_2$ and $a_1\notin \alpha m_1$
then $a_1$ is an equality constraint of the form $x=y$ after a communication over a common channel and the transition relation is defined differently.

We now inductively prove that $\forall 1\leq j \leq n$ it holds that $(c_{i_1}, \cdots , c_{i_j})$ is a trace of $M_1$. In particular, for $j=n$ this means that $t_1 \in M_1$.
For some parts of the proof we abuse notations where $c$ is a communication action over $M_1$ and $M_2$, and we use it also to denote the restriction to the first component of $M_1$.

Let $j:=1$ and denote $k:= i_1$. Then $c_1, \ldots,c_{k-1} \notin \alpha M_1 $ since $k$ is the first index of $t$ for which $c_k \in \alpha M_1$
or $c_k = (c_k^1, c_k^2)$ such that $c_k$ is a communication action and $c_k^1\in \alpha M_1$.
In particular, this means that no common communication action had occurred until $c_k$.
Thus, $\forall 1\leq i<k: \exists q^2_i: \delta((q_0^1, q_{i-1}^2), c_i) =(q_0^1, q_i^2)$.
For $c_{i_1} = c_k \in\alpha M_1$, 
one of the following holds:
\begin{enumerate}
    \item  If $c_k\in \alpha M_1$ is not a common communication action, then 
by the definition of $\delta$, we have $\delta((q_0^1, q_{k-1}^2), c_{i_1}) = (\delta_1(q_0^1, c_{i_1}), q_{k-1}^2)$.  Then indeed, $\langle q_0^1, c_{i_1}, \delta_1(q_0^1, c_{i_1}) \rangle$ is a run in $M_1$, making $(c_{i_1})$ a trace of $M_1$. 

\item If $c_k = (c_k^1, c_k^2)$ is a common communication channel, then it holds that $c_{k+1}$ is an equality constraint.
Then, by the definition of $\delta$ we have that 
$\delta((q_0^1, q_{k-1}^2), c_{i_1}) = ({q_0^1}', {q_{k-1}^2}')$ and $\delta(({q_0^1}', {q_{k-1}^2}'), c_{i_1+1}) = (\delta_1(q_0^1, c_{i_1}^1), \delta_2(q_{k-1}^2, c_{i_1}^2))$. Then, again we have that 
 $\langle q_0^1, c_{i_1}, \delta_1(q_0^1, c_{i_1}) \rangle$ is a run in $M_1$, making $(c_{i_1})$ a trace of $M_1$. 
\end{enumerate}

Let $1< j \leq n$, and assume $t_{j-1} = (c_{i_1}, \cdots, c_{i_{j-1}})$ is a trace of $M_1$. Let $\langle q_0, c_{i_1}, q_1 \rangle \ldots $ $\langle q_{j-2}, c_{i_{j-1}}, q_{j-1} \rangle$ be a run that induces $t_{j-1}$. Denote $i_{j-1} = k, i_j = k+m $ for some $m>0$.
Then, as before, $c_{k+1}, \ldots , c_{k+m-1} \notin \alpha M_1$ and is not a communication action as well, thus
$\forall k<l<k+m:\exists q^2_l:    
\delta(q_{j-1}, q^2_{l-1}), c_l) = (q_{j-1}, q^2_l)$. For $c_{i_j}$ it holds that  
either $c_{i_j}\in \alpha M_1$ is not a communication action and then $$\delta(q_{j-1}, q^2_{k+m-1}), c_{i_j}) = (\delta_1(q_{j-1}, c_{i_j}), q^2_{k+m-1}))$$ or that $c_{i_j}=(c_{i_j}^1, c_{i_j}^2)$ is a communication action and then  
$$\delta((q_{j-1}, q^2_{k+m-1}), c_{i_j}) = ({q_{j-1}}', {q^2_{k+m-1}}')$$ and $$\delta(({q_{j-1}}', {q^2_{k+m-1}}'), c_{i_1+1}) = (\delta_1({q_{j-1}}, c_{i_j}^1), \delta_2({q^2_{k+m-1}}, c_{i_j}^2))$$ In both cases, in $M_1$ it holds that 
$\delta_1(q_{j-1}, c_{i_j})$ 
is defined in $M_1$
 and thus $(c_{i_1},\cdots, c_{i_j})$ is a trace of $M_1$, as needed. 
\qed
\end{proof}

\paragraph{Lemma~\ref{lemma:feasible_traces12} (restated)}
Let  $M_1, M_2$ be  two programs, and let $t$ be a \textbf{feasible} trace of $M_1|| M_2$. Then $t\downarrow_{\alpha M_i}$ is a \textbf{feasible} trace of $M_i$ for $i\in{1,2}$.

\begin{proof}[Proof of Lemma~\ref{lemma:feasible_traces12}]
Denote by $X_i$ the set of variables of $M_i$ for $i\in\{1,2\}$.
Let $t\in M_1|| M_2$ be a feasible trace. Then, there exists an execution $u$ on $t$. Denote $t = (b_1, \cdots, b_n)$ and 
\\ $u = (\beta_0, b_1, \beta_1, \cdots, b_n, \beta_n)$. We build an execution $e$ on $t\downarrow _{\alpha M_1}$ as follows (in the same way, we can build an execution on $t\downarrow _{\alpha M_2}$).
Let $t\downarrow _{\alpha M_1} = (c_1, \cdots, c_k)$.
We define $e=(\gamma_0, c_1, \gamma_1, \cdots , c_k, \gamma_k)$ as follows. 
\begin{enumerate}
    \item Set $j:=0, i:=0$.
    \item Define $\gamma_0:= \beta_0(X_1)$ and set $j:=j+1$. 
    \item Repeat until $j = k$ :
    \begin{itemize}
        \item[--] Let $i'> i$ be the minimal index such that
    $b_{i'} = c_j$  or 
    $b_{i'} = (g*x, g*y)$ for  $c_j = g*x\in \alpha M_1$. 
        \item[--] Define $\gamma_j : = \beta_{i'}(X_1)$ and set $j:= j+1, i:= i'+1$. 
    \end{itemize}
  \end{enumerate}
  
    Note that for each $i<l<i'$ is holds that $b_l\notin \alpha M_1$ and there are no $x\in X_1$ and $y\in X_2$ such that $b_l = (g*x, g*y)$ for $g*x \in \alpha M_1$. 
    Otherwise, $b_l$ is either a synchronization between $M_1$ and $M_2$, or $b_l$ is a letter of $M_1$ that 
    belongs to $\alpha M_1$ and is part of $t\downarrow_{\alpha M_1}$. 
    %must synchronize with $t\downarrow _{\alpha M_1}$. 
    Both are contradiction to the fact that $i'$ is the minimal index for which $b_{i'}$ and $c_j$ 
    are either equal or that $c_j$ is a read/write action and $b_{i'}$ is a synchronization on that action. 

    %Thus, by the definition of an execution it holds that $\beta_i(X_1) = \beta_l(X_1)$
    
    Since $u$ is an execution, and for all $i<l<i'$ it holds that $b_l$  does not contain variables of $X_1$, 
    we have that $\beta_i(X_1) = \beta_l(X_1)$
    for all $i<l<i'$. %, by the definition of an execution. 
    %by the definition of an execution, it holds that $\beta_i(X_1) = \beta_l(X_1)$, 
    This is
    since an assignment to a variable may only change if the variable is involved in the action alphabet. In particular, it holds that $\beta_{i'-1}(X_1) = \beta_i(X_1) = \gamma_{j-1}(X_1)$. We now can assign $\gamma_j$ to be the same as $\beta_{i'}(X_1)$ and result in a valid assignment, as needed.
\qed
\end{proof}

\paragraph{Lemma~\ref{lemma:weakest_reg} (restated)}
Let $P$ be a property and $M_1$ and $M_2$ be communicating programs such that $\alpha P$, $\alpha M_1$ and $\alpha M_2$ 
do not contain constraints. 
%Let $M_2$ be some communicating program. 
Then, the weakest assumption $A_w$ with respect to $M_1, M_2$ and $P$, is regular.

\begin{proof}
Denote
$M_1 = \langle Q_1, X_1, \alpha M_1, \delta_1, {q^1_0}, F_1 \rangle$ and \\
$P = \langle Q_P, X_P,  \alpha P, \delta_P, {q^P_0}, F_P \rangle$. 
We construct a communicating system $A$ over $\alpha M_2$, and prove that $\aut{T}(A) = \aut{L}(A_w)$.

We construct $A$ in stages. 
First, we compute $M_1 \times P$ with a transition function $\delta_{M_1,P}$. We define $A' = \langle Q', X_2, \alpha M_2, $ $\delta', q_0', F'
\rangle$ 
where  
\begin{enumerate}
    \item $Q' = Q_1 \times Q_P$
    \item $q_0' = (q^1_0, q^P_0)$
    \item $F' = ((Q_1\setminus F_1) \times Q_P) \cup (Q_1 \times F_P$) . Intuitively, $A$ contains all traces that are not error traces in $M_1 \times P$. That is, either $M_1 \times P$ ends in a state that is not accepting for $M_1$, or it is accepted by $P$, and so does not reach an error state. 
    \item $\delta'$ is defined as follows. 
    \begin{enumerate}
        \item For every $\sigma \in \alpha M_2 \setminus (\alpha M_1 \cup \alpha P)$
        and for every
        $(q_1, q_p) \in Q_1 \times Q_P$
        we define $\delta' ((q_1,q_p), \sigma) = (q_1,q_p) $. That is, we add self-loops with the alphabet of $M_2$ that does not synchronize with $M_1$ and~$P$. 
\item For every $\sigma \in (\alpha M_1 \cup \alpha P)\setminus \alpha M_2$ and every $(q_1,q_p)\in Q_1 \times Q_P$ such that $\delta_{M_1,P} ((q_1,q_p), \sigma) = (q_1',q_p')$, we define $\delta' ((q_1,q_p), \epsilon) = (q_1',q_p')$.\label{item:delta}
\item For every $\sigma \in \alpha M_2 \cap (\alpha M_2 \cup \alpha M_2)$ and every $(q_1,q_p)\in Q_1 \times Q_P$ we define
$\delta' ((q_1,q_p), \sigma) = \delta_{M_1,P}((q_1,q_p), \sigma)$. That is, for letters in the intersection of the alphabets of $M_2$ and $M_1\times P$, the transitions do not change. 
    \end{enumerate} 
 
\end{enumerate}
 
We now remove $\epsilon$ transitions from $A'$ and determinize it to a program $A$. %Here we need both semantic and syntactic determinization as the alphabet of $A'$ contains constraints.
Finally, we set the set of accepting states of $A$ to $F = \{ S :~ S\subseteq F' \}$. That is, $t$ is an accepting trace of $A$ 
iff all the states that $A$ reaches when reading $t$ are either not accepting in $M_1$, or accepting (hence, not rejecting) in~$P$. 

\textbf{Claim 1}
$\aut{L}(A) = \aut{L}(A_w)$.

Note that in case that there are no constraints, all traces are feasible, that is, $\aut{L}(A) = \aut{T}(A)$. To show the correctness of {Claim 1}, we prove the following claim. 

\textbf{Claim 2}
For a trace $t\in(\alpha M_2)^*$, the program $A$ reaches the same set of states $S$ when reading $t$, as the set of states that $M_1 \times P$ reaches given $t$.

%The run of $A$ on a trace $t\in(\alpha M_2)^*$ reaches the set of state $S$ that is exactly the set of states $M_1 \times P$ reach when given $t$. 

{Claim $1$} then follows from the definition of the set $F$ of accepting states. 

We now prove Claim $2$ by induction on the trace $t$. 

For the base case, let $t = \epsilon$ be the empty trace. Then, $A$ stays in its initial state $S_0$ when reading $t$. The composition $M_1 \times P$ can only advance on letters $\sigma \in (\alpha M_1 \cup \alpha P)\setminus \alpha M_2$ due to the definition of the parallel and conjunctive compositions. Since in the construction of $A'$ we define an $\epsilon$-transition for all such letters $\sigma$, it holds that all states that can be reached in $M_1 \times P$ 
by only reading letters in $(\alpha M_1 \cup \alpha P)\setminus \alpha M_2$ are, after the elimination of $\epsilon$ transitions, in the set of initial states. After the determinization, they form exactly the initial state $S_0$ in $A$. Therefore, the claim holds for the case that $t=~\epsilon$.

Now, assume that $t = u\cdot \sigma$, where the claim holds for $u$. 
Let $S_u$ be the state $A$ reaches when reading $u$. Recall that $S_u$ corresponds to a set of states in $M_1 \times P$, and by the induction hypothesis, this is the set of states that $M_1 \times P$ reaches when reading $u$. 
We consider the three following cases.
\begin{itemize}
    \item [(a)] $\sigma \in \alpha M_2 \setminus (\alpha M_1 \cup \alpha P)$. According to the definition of composition, when computing $(M_1 || t)\times P$, it holds that  $M_1 \times P$ does not advance on $\sigma$ since it is not in its alphabet. Thus, the set of states $M_1 \times P$ reaches when reading $u\cdot \sigma$ is  $S_u$. In the definition of $A'$, 
    for letters in $\alpha M_2 \setminus (\alpha M_1 \cup \alpha P)$ we only define  self-loops for the states of $M_1\times P$. Thus, the set $S_u$ does not change when reading $\sigma$, and this is also the set $A$ reaches after reading $u\cdot \sigma$. Therefore, for the case of $\sigma\in \alpha M_2 \setminus (\alpha M_1 \cup \alpha P)$, both $A$ and $M_1\times P$ reach the same set of states when reading $t=u\cdot \sigma$ and the claim holds. 
    \item [(b)] $\sigma \in (\alpha M_1 \cup \alpha P)\setminus \alpha M_2$. This case cannot occur as we only consider traces $t\in (\alpha M_2)^*$. Note that the definition of $\delta'$ (Item~\ref{item:delta}) is used for the base case of the induction. 
    \item [(c)] $\sigma \in \alpha M_2 \cap (\alpha M_2 \cup \alpha M_2)$. In this case $M_1 \times P$ synchronizes with $t$ while reading $\sigma$. Thus, the set of states $M_1 \times P$ reaches when reading $u\cdot \sigma$ is $S' = \{ (q_1', q_p') : \exists (q_1, q_p) \in S_u~ ~s.t. ~\delta_{M_1,P}((q_1, q_p), \sigma) = (q_1', q_p')  \}$. Since for letters in the intersection $\alpha M_2 \cap (\alpha M_2 \cup \alpha M_2)$ we have defined the transition relation of $A'$ to be $\delta_{M_1, P}$, this is exactly the set of states $A'$ (and thus $A$) reaches when reading $u\cdot \sigma$.
\end{itemize}

To conclude, $M_1\times P$ and $A$ reach the same set of states when reading every trace $t\in(\alpha M_2)^*$. That is, all runs in $(M_1 || t) \times P $ are accepting iff $A$ reaches an accepting state when reading $t$. This concludes the proof. 
\qed
\end{proof}

\end{document}